\documentclass{article}

\usepackage{authblk}
\usepackage[square]{natbib}
\usepackage{yhmath}
\usepackage{mathtools, amsmath, amsthm, amssymb, amsbsy, amsfonts, amscd}
\usepackage{euscript}
\usepackage{bm}
\usepackage{color}
\usepackage{empheq}
\usepackage{rotating}
\usepackage{enumerate}
\usepackage{enumitem}
\usepackage{bbold}

\usepackage{ftnxtra}
\usepackage{fnpos}
\usepackage{appendix}

\usepackage{graphicx}
\usepackage{pstool}
\usepackage{psfrag}

\usepackage{comment}

\numberwithin{equation}{section}

\theoremstyle{plain}	
 \newtheorem{thm}{Theorem}[section]

 \newtheorem{prop}[thm]{Proposition}
\theoremstyle{definition}	
 \newtheorem{defi}{Definition}[section]
 \newtheorem{remark}{Remark}[section]
 \newtheorem{example}{Example}[section]

\usepackage{caption2}

\usepackage{hyperref}
\hypersetup{colorlinks=true, linkcolor=blue}
\hypersetup{colorlinks=true,citecolor=blue}

\DeclareMathAlphabet{\mathpzc}{OT1}{pzc}{m}{it}

\usepackage{mathrsfs}

\definecolor{lighter_purple_mathematica}{rgb}{0.6666666666,0.33333333333,0.666666666666}

\makeatletter
\newsavebox{\@brx}
\newcommand{\llangle}[1][]{\savebox{\@brx}{\(\m@th{#1\langle}\)}%
  \mathopen{\copy\@brx\mkern2mu\kern-0.9\wd\@brx\usebox{\@brx}}}
\newcommand{\rrangle}[1][]{\savebox{\@brx}{\(\m@th{#1\rangle}\)}%
  \mathclose{\copy\@brx\mkern2mu\kern-0.9\wd\@brx\usebox{\@brx}}}%
\let\oldabs\abs
\def\abs{\@ifstar{\oldabs}{\oldabs*}}
\makeatother

\usepackage{accents}
\newcommand{\Fe}{\accentset{e}{\mathbf{F}}}
\newcommand{\Fa}{\accentset{a}{\mathbf{F}}}

\newcommand{\cFa}{\accentset{a}{F}}

\usepackage{graphicx}
\newcommand{\SOne}{\accentset{\scalebox{0.4}{(1)}}{\mathbf{S}}}
\newcommand{\SOneC}{\accentset{\scalebox{0.4}{(1)}}{S}}

\newcommand{\N}{\accentset{1}{\boldsymbol{\mathsf{N}}}}
\newcommand{\NN}{\accentset{2}{\boldsymbol{\mathsf{N}}}}
\newcommand{\NNN}{\accentset{3}{\boldsymbol{\mathsf{N}}}}

\newcommand{\Nj}{\accentset{j}{\boldsymbol{\mathsf{N}}}}

\newcommand{\n}{\accentset{1}{\boldsymbol{\mathsf{n}}}}
\newcommand{\nn}{\accentset{2}{\boldsymbol{\mathsf{n}}}}
\newcommand{\nnn}{\accentset{3}{\boldsymbol{\mathsf{n}}}}
\newcommand{\nj}{\accentset{j}{\boldsymbol{\mathsf{n}}}}

\newcommand{\sigmaR}{\accentset{R}{\boldsymbol{\sigma}}}
\newcommand{\sigmaRc}{\accentset{R}{\sigma}}

\newcommand{\Lin}{\operatorname{Lin}}

\usepackage{xcolor}
\definecolor{darkred}{RGB}{180,0,0}

\begin{document}

\title{\textbf{Rational Mechanics of Material Strength \\in Brittle Solids
}}

\author[1,2]{Arash Yavari\thanks{Corresponding author, e-mail: arash.yavari@ce.gatech.edu}}
\author[1]{Aditya Kumar}
\affil[1]{\small \textit{School of Civil and Environmental Engineering, Georgia Institute of Technology, Atlanta, GA 30332, USA}}
\affil[2]{\small \textit{The George W. Woodruff School of Mechanical Engineering, Georgia Institute of Technology, Atlanta, GA 30332, USA}}

\maketitle

\begin{abstract}
Material strength is a classical concept that has recently found renewed applications in fracture mechanics, especially in models for crack nucleation in brittle solids. In this paper, we formulate material strength in the setting of finite elasticity and examine its geometric, constitutive, and symmetry-theoretic foundations. 
Spatial covariance requires the physical strength criterion to be independent of the stress measure used to represent it. We show that a criterion depending only on one stress measure generally becomes a function of both stress and strain when expressed using another stress measure. This motivates taking the pair $(\text{stress},\text{strain})$ as the natural domain of a representation-independent formulation. Classical stress-based strength criteria remain admissible as a special class in which the strain dependence is absent in the chosen representation.
We discuss the covariance of strength functions under arbitrary spatial diffeomorphisms and use this to relate representations in terms of the first Piola--Kirchhoff, second Piola--Kirchhoff, and Cauchy stresses. Restricting attention to the stress-based criteria that appear in the existing literature, we define the associated strength hypersurface as a subset of the constitutively admissible stress manifold, distinguish constitutive admissibility from fracture, and analyze the geometric and topological properties of the corresponding safe domain. 
We show that, for stress-based strength functions satisfying standard regularity conditions and the requirement that sufficiently large stresses are inadmissible, the strength surface is a smooth compact hypersurface of the constitutively admissible stress manifold.
We show that the safe domain is star-shaped under a natural proportional-reduction hypothesis. We then extend the formulation to anelastic brittle solids and examine the effects of residual stresses and anelastic distortions on material strength. We also discuss the action of material symmetry on strength functions for anisotropic solids. 
\end{abstract}

\begin{description}
\item[Keywords:] Material strength, brittle solids, finite elasticity, covariance, strength hypersurface, anisotropy, residual stress, anelasticity.
\end{description}

\tableofcontents

\section{Introduction}

Brittle solids offer in principle the simplest setting to formulate a theory of fracture. Brittle solids are those that are nominally elastic and only dissipate energy through creation of new surfaces \citep{Griffith1921}. Despite this apparent simplicity, developing a comprehensive theory that is predictive across all brittle materials and loading conditions has proved to be challenging. Brittle solids encompass not only classical linear elastic brittle materials, such as glass and ceramics, which fail at small strains, but also elastomers that can undergo large deformations before brittle failure. Brittle behavior is often regarded as an idealization, as microscopic crack-growth mechanisms typically introduce a degree of inelasticity prior to failure. However, the influence of such inelastic effects diminishes with increasing structural length scale. Moreover, brittle materials can exhibit measurable inelasticity under compressive loading. Nevertheless, a rigorous understanding of the idealized brittle state remains foundational to the development of fracture theories in more complex settings.

Experimental studies of brittle solids emphasize two canonical problems: (i) failure of the solids under uniform stress/strain, and (ii) crack growth and failure from a pre-existing large crack. The study of the former dates back at least to the time of Galileo Galilei (1564-1642) \citep{Yu2002}, but picked up steam in the 19th century with the experimental efforts of \citet{lame1833} and others. The first study of the latter problem is largely credited to \citet{Griffith1921}. These two experimental problems are used to define the two key material properties that dictate the fracture mechanics of brittle solids: strength and toughness. Out of these two properties, strength has proved to be both a harder property to measure and define in so much that a clear definition is rarely laid out and often implicitly misunderstood. The goal of this work is to present a more complete understanding of strength.

Strength has been defined in two subtly distinct ways as a material property. The conventional view treats it as the elastic limit in an elastic brittle material. Peak or maximum attainable stress is used to define this elastic limit in conventional hard brittle materials \citep{Yu2004}, whereas in elastomers, the peak strain or strain energy is often utilized \citep{hamdi2006, volokh2007, chen2017}, although many works report both peak stress and strain \citep{smith1964, smith1969}. 
The response remains elastic below this threshold and stress or strain exceeding it are deemed inadmissible. 
However, this definition is frequently conflated with a failure criterion, although the two are distinct, even in the absence of pre-existing cracks or sharp notches that would induce stress singularities.
To illustrate, consider a structure containing a U-shaped notch. The associated stress concentration depending on the radius of the notch causes the local elastic limit to be reached at relatively low applied loads; yet, the onset of a macroscopic crack may occur only at substantially higher loads.
Another illustrative example is the Brazilian fracture test, commonly used to estimate the tensile strength of rock-like materials \citep{li2013brazilian}. In this test, a thin circular disk without any pre-existing flaw is compressed between two rigid platens until it fractures. A widespread misinterpretation is that failure occurs when the maximum tensile stress reaches the material's tensile strength. However, more careful analyses reveal that, due to the non-uniformity of the stress field, fracture can occur at significantly higher loads \citep{KLDLP24}.
Thus, a point-wise violation of the strength limit does not, by itself, signify material failure, and certainly does not imply structural failure. 
In reality, the material is often presumed to undergo a softening process, wherein its load-carrying capacity degrades progressively until sufficient energy is available for crack formation. However, this post-peak softening behavior is generally not directly accessible from experiments. Consequently, defining strength purely as an elastic limit does not lead to a robust definition of a material property.

A second definition of strength is obtained by identifying it with the onset of failure. Experiments show that this occurs in homogeneous bodies subjected to spatially uniform stress states.  Motivated by this observation, \cite{KumarLopezPamies2020} and \citet{KumarBourdinFrancfortLopezPamies2020} defined strength as follows: ``the strength of an elastic brittle material is the set of all critical stresses at which the material fractures when it is subjected to a state of monotonically increasing, spatially uniform, but otherwise arbitrary stress." This is a more meaningful definition as it relates directly to experimental observations of failure. However, there are two aspects of this definition that need careful scrutiny. 

The first concerns the exclusive characterization of strength as a function of stress, rather than as a function of strain, strain energy, or a combined stress–strain description. Stress-based definitions of strength, represented in the form of a strength surface, are well accepted in the literature and are experimentally consistent \citep{sharma1965experimental, sato1987graphite}. However, as stated before, experimental practice in elastomers often favors strain- or energy-based measures. This preference is largely pragmatic: due to the pronounced strain-stiffening exhibited by elastomers at large deformations, it is often more reliable to identify a critical stretch or strain energy than the corresponding critical stress. Moreover, there is an outsized focus in experimental literature on the uniaxial tensile state. Nevertheless, \citet{KFLP18} and \citet{KumarBourdinFrancfortLopezPamies2020} presented a counterexample illustrating the limitations of strain- or energy-based definitions. In the case of cavitation in an incompressible elastomer subjected to uniform hydrostatic tension, the internal constraint prevents any increase in strain or strain energy, while the stress continues to grow until failure occurs (see Example~\ref{Ex:Hydro-Static-Tension}). Motivated by such examples, and by the success of stress-based strength theories in hard brittle materials, they advocated for a definition of strength based solely on stress. However, in finite elasticity, the choice of stress measure is not unique. Consequently, it is not immediately clear which stress measure is most appropriate for defining strength surfaces and whether purely stress-based definitions of strength are appropriate in finite elasticity. Clarifying this issue constitutes one of the central objectives of the present work.

The second aspect concerns the notion of spatially uniform stress. In linear elasticity, the meaning of this is usually unambiguous, but in finite elasticity one must distinguish carefully between different stress measures and between homogeneous stress and homogeneous deformation. In particular, a homogeneous deformation induces a spatially uniform stress field in a homogeneous compressible hyperelastic (or even Cauchy elastic) solid, but the converse is not true, in general. Thus, the phrase ``spatially uniform stress'' requires further clarification before it can be used as the basis for a mathematical definition of material strength. We discuss this further in this work.

The present study of strength is timely. Recent developments in computational fracture mechanics, in particular the phase-field method \citep{Bourdin00, Bourdin08, KumarBourdinFrancfortLopezPamies2020}, have provided unprecedented ability to simulate precise crack physics. However, these developments also underscore the need to reconcile strength-based and toughness-based descriptions of fracture within a unified framework and with it the need to understand strength as a material property.
Strength and toughness theories of fracture have seen parallel development over the last century. In many materials communities, strength has been effectively sidelined, to the extent that toughness-based approaches are often referred to as \emph{the fracture mechanics theory} \citep{anderson2005}, directly contrasted with strength-based viewpoints. This perspective is, in part, pragmatically justified: in some materials, failure is typically governed by pre-existing microscopic defects introduced during processing, rendering toughness the controlling parameter. Yet this is not universally applicable across all brittle materials. More fundamentally, any comprehensive theory of fracture must account for both the nucleation and propagation of cracks. This has been implicitly realized with the development of cohesive zone models \citep{barenblatt1959, dugdale1960} that account for strength. However, such approaches are often treated as distinct frameworks, rather than as models that bridge and unify strength and toughness. There have been efforts recently to understand strength from a bottom-up approach \citep{bonacci2026stochastic, lamont2025cohesive}. Strength is the macroscopic manifestation of the presence of microscopic defects, i.e., the ``weakest links" in the material. This makes strength inherently stochastic; recent work has explored incorporation of strength as a stochastic parameter in the phase field models \citep{zeng2025stochastic}.

\paragraph{Contributions of this paper.}
In this paper, we formulate material strength for brittle solids within the framework of finite elasticity and examine its geometric, constitutive, and symmetry-theoretic foundations. Spatial covariance requires the physical strength criterion to be independent of the stress measure used to represent it. We show that a criterion depending solely on one stress measure generally acquires dependence on the corresponding strain when expressed using another stress measure. The pair $(\text{stress},\text{strain})$ therefore provides the natural domain for a representation-independent formulation, while classical stress-based criteria remain admissible as a special class. The main contributions of this work can be summarized as follows:

\begin{itemize}[topsep=0pt,noitemsep, leftmargin=10pt]

\item We provide a covariant formulation of material strength in nonlinear elasticity and show that a criterion depending solely on one stress measure generally acquires strain dependence when expressed using another stress measure.

\item We define the strength hypersurface as a subset of the manifold of constitutively admissible stresses and distinguish clearly between constitutive admissibility and fracture.

\item We analyze the geometric and topological properties of the strength hypersurface and show that, under a proportional-reduction hypothesis, the associated safe domain is star-shaped.

\item For isotropic solids, we derive invariant representations of strength functions in terms of stress invariants and study the symmetry properties of the corresponding strength surfaces in principal stress space.

\item We extend the formulation to anelastic brittle solids and show that eigenstrains, through their modification of the material metric, alter the stress invariants and simultaneously affect the manifold of constitutively admissible stresses.

\item We analyze the action of material symmetry on strength functions for anisotropic solids and provide invariant representations for several symmetry classes.

\end{itemize}

This paper is organized as follows. In \S\ref{Sec:Elasticity}, we review the necessary background from nonlinear elasticity, including material symmetry, anisotropic elasticity, and constitutive equations of implicit elasticity. \S\ref{Sec:Material-Strength} formulates the notion of material strength in finite elasticity, discusses the choice of stress measure, loading conditions, the strength function and the associated strength hypersurface, and examines their topological properties, as well as the role of internal constraints. \S\ref{Material-Strength-Isotropic-Solids} specializes the discussion to isotropic solids and studies several classical isotropic strength surfaces and the geometry of their safe domains. \S\ref{Material-Strength-Residual-Stresses} discusses material strength in the presence of residual stresses and introduces the corresponding formulation in anelasticity. \S\ref{Material-Strength-Anisotropic-Solids} turns to anisotropic solids and discusses how material symmetry acts on strength functions. Finally, \S\ref{Sec:Conclusions} summarizes the main conclusions and comments on possible directions for future work.

\section{Nonlinear Elasticity} \label{Sec:Elasticity}

This section reviews the basic framework of nonlinear elasticity used throughout this work. Although much of this material is standard, we include it to establish notation, fix conventions, and make the paper reasonably self-contained. The covariance arguments developed later rely on this formulation and on the geometric interpretation of the deformation and stress measures. We begin with the kinematics of deformation, then review commonly used stress measures and material symmetry, and conclude with the framework of implicit elasticity.

\subsection{Kinematics}

Consider a body whose reference configuration is identified with an embedded submanifold $\mathcal{B}$ of a Euclidean ambient space $\mathcal{S}$. We denote by $\mathbf{g}$ the Euclidean metric of $\mathcal{S}$ and by $\mathbf{G}=\mathbf{g}\big|_{\mathcal{B}}$ the induced metric on the body in its undeformed configuration. A deformation is a map $\varphi:\mathcal{B}\to\mathcal{C}\subset\mathcal{S}$, where $\mathcal{C}=\varphi(\mathcal{B})$ is the current configuration. Its tangent map is the deformation gradient $\mathbf{F}=T\varphi$, which is independent of any metric structure. At each material point $X\in\mathcal{B}$, the deformation gradient is a linear map $\mathbf{F}(X):T_X\mathcal{B}\to T_{\varphi(X)}\mathcal{C}$ where $T_X\mathcal{B}$ and $T_{\varphi(X)}\mathcal{C}$ are the tangent spaces of $\mathcal{B}$ and $\mathcal{C}$ at $X\in\mathcal{B}$ and $\varphi(X)\in\mathcal{C}$, respectively.
If $\{X^A\}$ and $\{x^a\}$ are coordinate charts on $\mathcal{B}$ and $\mathcal{C}$, respectively, then the components of the deformation gradient are $F^a{}_A=\frac{\partial \varphi^a}{\partial X^A}$.
Deformation gradient has the following local coordinate representation
\begin{equation}
	\mathbf{F}=F^a{}_A\,\frac{\partial }{\partial x^a}\otimes dX^A\,.
\end{equation}
Its adjoint $\mathbf{F}^\star: T_{\varphi(X)}^*\mathcal{C}\to T_X^*\mathcal{B}$, where $T_X^*\mathcal{B}$ and $T_{\varphi(X)}^*\mathcal{C}$ are cotangent spaces of $\mathcal{B}$ and $\mathcal{C}$ at $X\in\mathcal{B}$ and $\varphi(X)\in\mathcal{C}$, respectively,\footnote{A cotangent space (space of covectors or $1$-forms) is dual to its corresponding tangent space.} is defined such that $\langle \mathbf{F}^\star\boldsymbol{\alpha},\mathbf{U} \rangle=\langle \boldsymbol{\alpha},\mathbf{F}\mathbf{U} \rangle$, $\forall \boldsymbol{\alpha}\in T_{\varphi(X)}^*\mathcal{C}\,,\mathbf{U}\in T_X\mathcal{B}$, where $\langle .,.\rangle$ is the natural pairing of covectors ($1$-forms) and vectors. The adjoint deformation gradient has the following local coordinate representation
\begin{equation}
	\mathbf{F}^\star=F^a{}_A\, dX^A\otimes \frac{\partial }{\partial x^a}\,.
\end{equation}
While deformation gradient maps a vector in the reference tangent space to a vector in the deformed tangent space, adjoint deformation gradient maps a covector in the deformed cotangent space to a covector in the reference cotangent space. It should be noted that the adjoint operator is metric independent.

The deformation gradient transpose $\mathbf{F}^{\mathsf{T}}: T_{\varphi(X)}\mathcal{C}\to T_X\mathcal{B}$ is defined such that $\llangle \mathbf{F}\mathbf{U},\mathbf{u} \rrangle_{\mathbf{g}}=\llangle \mathbf{U} , \mathbf{F}^{\mathsf{T}}\mathbf{u}\rrangle_{\mathbf{G}}$, $\forall \mathbf{U}\in T_X\mathcal{B}\,, \mathbf{u}\in T_{\varphi(X)}\mathcal{C}$, where $\llangle .,. \rrangle_{\mathbf{g}}$ and $\llangle .,. \rrangle_{\mathbf{G}}$ are the inner products induced by $\mathbf{g}$ and $\mathbf{G}$, respectively. Thus, $\mathbf{F}^{\mathsf{T}}=\mathbf{G}^\sharp \mathbf{F}^\star\,\mathbf{g}$ and in components are $(F^{\mathsf{T}})^A{}_{a}=G^{AB}\,g_{ab}\,F^b{}_B$.

The right Cauchy--Green strain tensor is defined as $\mathbf{C}=\mathbf{F}^{\mathsf{T}}\mathbf{F}$. In components, $C^A{}_B=(F^{\mathsf{T}})^A{}_a\,F^a{}_B$, and hence $C_{AB}=(g_{ab}\circ\varphi)\,F^a{}_A\,F^b{}_B$.
Thus, the covariant right Cauchy--Green tensor is simply the pull-back of the spatial metric, i.e., $\mathbf{C}^\flat=\varphi^*\mathbf{g}$, where $\flat$ denotes lowering indices with $\mathbf{G}$.
The left Cauchy--Green strain tensor is defined by $\mathbf{B}^{\sharp}=\varphi^*(\mathbf{g}^{\sharp})$. Its components are $B^{AB}=F^{-A}{}_a\,F^{-B}{}_b\,g^{ab}$, where $F^{-A}{}_a$ are the components of $\mathbf{F}^{-1}$.
The spatial counterparts of these tensors are obtained by pushing forward the reference metric and its inverse. In particular, the tensor $\mathbf{c}^\flat=\varphi_*\mathbf{G}=\mathbf{F}^{-\star} \mathbf{G}\mathbf{F}^{-1}$ has components $c_{ab}=F^{-A}{}_a\,F^{-B}{}_b\,G_{AB}$, while $\mathbf{b}^{\sharp}=\varphi_*(\mathbf{G}^{\sharp})$ has components $b^{ab}=F^a{}_A\,F^b{}_B\,G^{AB}$. Recall that $\mathbf{b}=\mathbf{c}^{-1}$.
The tensors $\mathbf{C}$ and $\mathbf{b}$ (with components $C^A{}_B$ and $b^a{}_b$) have the same principal invariants. These are \citep{Ogden1984,MarsdenHughes1983}
\begin{equation} \label{Principal-Invariants}
\begin{aligned}
	I_1&=\operatorname{tr}\mathbf{b}=b^{ab}g_{ab}
	=\lambda_1^2+\lambda_2^2+\lambda_3^2\,,\\
	I_2&=\frac{1}{2}\left(I_1^2-\operatorname{tr}\mathbf{b}^2\right)=\frac{1}{2}\left(I_1^2-b^{ab}b^{cd}g_{ac}g_{bd}\right)
	=\lambda_1^2\lambda_2^2+\lambda_2^2\lambda_3^2+\lambda_3^2\lambda_1^2\,,\\
	I_3&=\det\mathbf{b}=\lambda_1^2\lambda_2^2\lambda_3^2=J^2\,.
\end{aligned}
\end{equation}
where $\lambda_1, \lambda_2$, and $\lambda_3$ are the principal stretches.

Nanson's formula is written as \citep{MarsdenHughes1983}
\begin{equation} \label{Nanson}
	\mathbf{n}^{\flat}\,da=J\,\mathbf{F}^{-\star}\,\mathbf{N}^{\flat}\,dA\,,\qquad
	n_a\,da=J\,F^{-A}{}_a\,N_A\,dA
	\,,
\end{equation}
where the Jacobian of deformation is defined by $dv=J\,dV$ and is explicitly given as
\begin{equation} \label{Jacobian-Def}
	J=\sqrt{\frac{\det\mathbf{g}}{\det\mathbf{G}}}\,\det\mathbf{F}\,.
\end{equation}
Raising the spatial index, one obtains $n^a\,da=J\,g^{ab}\,F^{-B}{}_b\,G_{BA}\,N^A\,dA$. Equivalently, Nanson's formula can be written in vector form as
\begin{equation} \label{Nanson-Vector}
	\mathbf{n}\,da=J\,\mathbf{F}^{-\mathsf{T}}\,\mathbf{N}\,dA
	\,.
\end{equation}

The polar decomposition of the deformation gradient is written as
\begin{equation}\label{Polar-Decomposition}
	\mathbf{F}=\mathbf{R}\mathbf{U}=\mathbf{V}\mathbf{R}\,,
\end{equation}
where $\mathbf{U}$ and $\mathbf{V}$ denote the material and spatial stretch tensors, respectively, and should not be confused with the material velocity. The tensor $\mathbf{R}:T\mathcal{B}\to T\mathcal{C}$ is a $(\mathbf{G},\mathbf{g})$-orthogonal tensor field \citep{SimoMarsden1984}, i.e.,\footnote{Equivalently, one may write $\mathbf{G}^\sharp \mathbf{R}^\star (\mathbf{g}\circ\varphi)\,\mathbf{R}=\mathbf{R}^{\mathsf{T}}\mathbf{R}=\operatorname{id}_{T\mathcal{B}}$.}
\begin{equation} \label{G-g-Orthogonal}
	\mathbf{R}^\star (\mathbf{g}\circ\varphi)\,\mathbf{R}=\mathbf{G}\,.
\end{equation}
In components, this relation reads $R^a{}_A\,(g_{ab}\circ\varphi)\,R^b{}_B=G_{AB}$. The polar decomposition itself is written in components as $F^a{}_A=R^a{}_B\,U^B{}_A=V^a{}_b\,R^b{}_A$.
Equation \eqref{G-g-Orthogonal} implies that $(\det\mathbf{R})^2\det\mathbf{g}=\det\mathbf{G}$, while \eqref{Polar-Decomposition} implies that $\det\mathbf{U}=\det\mathbf{V}$. 
From \eqref{Jacobian-Def} one can see that $J=\det\mathbf{U} =\det\mathbf{V}$.
The material stretch tensor $\mathbf{U}:T_X\mathcal{B}\to T_X\mathcal{B}$ and the spatial stretch tensor $\mathbf{V}:T_x\mathcal{C}\to T_x\mathcal{C}$ are related to the right and left Cauchy--Green deformation tensors by
\begin{equation} \label{C-U-b_v}
\begin{aligned}
	\mathbf{C}
	&=\mathbf{F}^{\mathsf{T}}\mathbf{F}
	=(\mathbf{R}\mathbf{U})^{\mathsf{T}}\mathbf{R}\mathbf{U}
	=\mathbf{G}^\sharp(\mathbf{R}\mathbf{U})^\star\mathbf{g}\mathbf{R}\mathbf{U}
	=\mathbf{G}^\sharp\mathbf{U}^\star\mathbf{R}^\star\mathbf{g}\mathbf{R}\mathbf{U} 
	=\mathbf{G}^\sharp\mathbf{U}^\star\mathbf{G}\mathbf{U}
	=\mathbf{U}^2\,,\\
	\mathbf{b}
	&=\mathbf{F}\mathbf{F}^{\mathsf{T}}
	=\mathbf{V}\mathbf{R}(\mathbf{V}\mathbf{R})^{\mathsf{T}}
	=\mathbf{V}\mathbf{R}\mathbf{G}^\sharp(\mathbf{V}\mathbf{R})^\star\mathbf{g}
	=\mathbf{V}\mathbf{R}\mathbf{G}^\sharp\mathbf{R}^\star\mathbf{V}^\star\mathbf{g} 
	=\mathbf{V}\mathbf{g}^\sharp\mathbf{V}^\star\mathbf{g}
	=\mathbf{V}^2\,.
\end{aligned}
\end{equation}
Equivalently, one may write
\begin{equation}
	\mathbf{C}^\flat=\mathbf{U}^\star\mathbf{G}\mathbf{U}\,,\qquad
	\mathbf{b}^\sharp=\mathbf{V}\mathbf{g}^\sharp\mathbf{V}^\star\,.
\end{equation}
In components, these relations read $C_{AB}=U^M{}_A\,G_{MN}\,U^N{}_B$ and $b^{ab}=V^a{}_m\,g^{mn}\,V^b{}_n$. The relations \eqref{C-U-b_v} are commonly written as $\mathbf{U}=\sqrt{\mathbf{C}}$ and $\mathbf{V}=\sqrt{\mathbf{b}}$.
It is straightforward to show that $U_{AB}=G_{AM} U^M{}_B=U_{BA}$.

Let $\{\N,\NN, \NNN\}$ be a $\mathbf{G}$-orthonormal eigenbasis of $\mathbf{U}$, with corresponding principal stretches $\lambda_j$, $j=1,2,3$. Using the spectral decomposition of $\mathbf{C}$ and $\mathbf{U}$, one has \citep{Ogden1984}
\begin{equation} \label{C-Spectral}
	\mathbf{C}^\sharp= \lambda_1^2 \,\N \otimes\N+\lambda_2^2 \,\NN \otimes\NN
	+\lambda_3^2 \,\NNN \otimes\NNN
	\,,\qquad
	\mathbf{U}^\sharp= \lambda_1 \,\N \otimes\N+\lambda_2 \,\NN \otimes\NN
	+\lambda_3 \,\NNN \otimes\NNN\,,
\end{equation}
where $\lambda_1,\lambda_2,\lambda_3$ are the principal stretches and $\N,\NN,\NNN$ are the corresponding principal directions. 
Recall that $\N\otimes\N+\NN\otimes\NN+\NNN\otimes\NNN=\mathbf{G}^\sharp$.
The representation \eqref{C-Spectral}$_2$ is equivalent to
\begin{equation}
	\mathbf{U}= \lambda_1 \,\N \otimes\N^\flat+\lambda_2 \,\NN \otimes\NN^\flat+\lambda_3 \,\NNN \otimes\NNN^\flat\,,
\end{equation}
and therefore
\begin{equation}
	\mathbf{U}\Nj=\lambda_j\Nj\,,\qquad\text{~(no summation~on~}j)\,.
\end{equation}
Using the polar decomposition $\mathbf{F}=\mathbf{R}\mathbf{U}=\mathbf{V}\mathbf{R}$, one obtains
\begin{equation}
	\mathbf{F}\Nj=\mathbf{R}\mathbf{U}\Nj=\lambda_j \mathbf{R}\Nj=\mathbf{V}\mathbf{R}\Nj
	\,,\qquad j=1,2,3\,.
\end{equation}
Thus, if $\{\n,\nn,\nnn\}$ is the eigenbasis of $\mathbf{V}$, then \citep{Ogden1984}
\begin{equation}
	\nj=\mathbf{R}\Nj\,,\qquad j=1,2,3\,.
\end{equation}
Knowing that the eigenvalues of $\mathbf{b}$ and $\mathbf{V}$ are $\lambda_j^2$ and $\lambda_j$, respectively, one concludes that the Finger and spatial stretch tensors admit the spectral representations
\begin{equation}
	\mathbf{b}^\sharp= 
	\lambda_1^2 \,\n \otimes\n+\lambda_2^2 \,\nn \otimes\nn+\lambda_3^2 \,\nnn \otimes\nnn\,,\qquad
	\mathbf{V}^\sharp= 
	\lambda_1 \,\n \otimes\n+\lambda_2 \,\nn \otimes\nn+\lambda_3 \,\nnn \otimes\nnn
	\,.
\end{equation}
Note that
\begin{equation} 
	\sum_{j=1}^3 \nj\otimes \nj = \sum_{j=1}^3 \mathbf{R}\Nj\otimes \mathbf{R}\Nj
	= \mathbf{R} \Big(\sum_{j=1}^3 \Nj\otimes \Nj\Big) \mathbf{R}^\star
	= \mathbf{R} \mathbf{G}^\sharp \mathbf{R}^\star =\mathbf{g}^\sharp \,.
\end{equation}

Suppose $f:\mathbb{R}\to\mathbb{R}$ is a smooth monotone function such that $f(1)=0$ and $f'(1)=1$. Hill's strain measures are defined by applying $f$ to the principal stretches of $\mathbf{V}$, namely \citep{Hill1968,Hill1970,Hill1978}
\begin{equation}
	f(\mathbf{V}^\sharp)= 
	f(\lambda_1) \,\n \otimes\n+ f(\lambda_2) \,\nn \otimes\nn+f(\lambda_3) \,\nnn \otimes\nnn
	\,.
\end{equation}
A particularly important example is Hencky's logarithmic strain \citep{Neff2016}, which is given by
\begin{equation}
	\mathbf{h}^\sharp= \log\mathbf{V}^\sharp= 
	\log \lambda_1 \,\n \otimes\n+\log\lambda_2 \,\nn \otimes\nn+\log\lambda_3 \,\nnn \otimes\nnn
	\,.
\end{equation}
The material analogue of Hencky's logarithmic strain is defined using the spectral decomposition of $\mathbf{U}$. Using \eqref{C-Spectral}$_2$, one defines
\begin{equation}
	\log \mathbf{U}^\sharp=
	\log \lambda_1 \,\N \otimes\N+\log \lambda_2 \,\NN \otimes\NN+\log \lambda_3 \,\NNN \otimes\NNN
	\,.
\end{equation}
Equivalently, one may write
\begin{equation}
	\log \mathbf{U}
	=\log \lambda_1 \,\N \otimes\N^\flat+\log \lambda_2 \,\NN \otimes\NN^\flat+\log \lambda_3 \,\NNN \otimes\NNN^\flat\,.
\end{equation}
Thus, $\log \mathbf{U}$ has the same eigenvectors as $\mathbf{U}$, and its eigenvalues are $\log \lambda_j$, $j=1,2,3$.

\subsection{Measures of stress}

We briefly discuss a few commonly used stress tensors in nonlinear elasticity. Although continuum mechanics admits infinitely many stress measures, the tensors reviewed here are among the most useful in applications.

Consider an area element $da$ in the deformed configuration $\mathcal{C}$ with $\mathbf{g}$-unit normal $\mathbf{n}$, i.e., $\llangle \mathbf{n},\mathbf{n} \rrangle_{\mathbf{g}}=1$. The corresponding traction vector is defined by $\mathbf{t}=\boldsymbol{\sigma}\mathbf{n}^\flat$, where $\boldsymbol{\sigma}$ is the Cauchy stress and $\mathbf{n}^\flat=\mathbf{g}\mathbf{n}$. Hence, the force acting on this area element is $\mathbf{f}=\mathbf{t}\,da$. In components, one has $t^a=\sigma^{ab}n_b$, where $n_b=g_{bc}n^c$.
Let $dA$ be the corresponding area element in the reference configuration $\mathcal{B}$ with $\mathbf{G}$-unit normal $\mathbf{N}$, i.e., $\llangle \mathbf{N},\mathbf{N} \rrangle_{\mathbf{G}}=1$. The first Piola--Kirchhoff stress tensor $\mathbf{P}$ is defined by requiring that
\begin{equation}
	\mathbf{t}\,da = \mathbf{P}\,\mathbf{N}^\flat\,dA\,.
\end{equation}
Using Nanson's formula \eqref{Nanson}, one obtains
\begin{equation}
	\mathbf{P} = J\,\boldsymbol{\sigma}\,\mathbf{F}^{-\star}\,.
\end{equation}
In components, this reads $P^{aA}=J\,\sigma^{ab}\,F^{-A}{}_b$.

Pulling back the force $\mathbf{f}$ to the reference configuration, one next defines the second Piola--Kirchhoff stress tensor $\mathbf{S}$ by
$\mathbf{F}^{-1}\mathbf{t}\,da=\mathbf{S}\,\mathbf{N}^\flat\,dA$. Thus,
\begin{equation}
	\mathbf{S} = \mathbf{F}^{-1}\mathbf{P} = J\,\mathbf{F}^{-1}\,\boldsymbol{\sigma}\,\mathbf{F}^{-\star}\,.
\end{equation}
In components, one has $S^{AB}=F^{-A}{}_a\,P^{aB}=J\,F^{-A}{}_a\,\sigma^{ab}\,F^{-B}{}_b$. The balance of angular momentum implies that $\mathbf{S}=\mathbf{S}^\star$ or in components $S^{AB}=S^{BA}$. 

The Kirchhoff stress is defined as $\boldsymbol{\tau}=J\,\boldsymbol{\sigma}$, while the convected stress is defined as $\boldsymbol{\Sigma}=\varphi_t^*\boldsymbol{\sigma}=\mathbf{F}^{-1}\boldsymbol{\sigma}\mathbf{F}^{-\star}=J^{-1}\mathbf{S}$.
The rotated stress tensor is defined as \citep{GreenNaghdi1965}
\begin{equation}
	\sigmaR=\mathbf{R}^{*}\boldsymbol{\sigma}
	=\mathbf{R}^{-1}\boldsymbol{\sigma}\mathbf{R}^{-\star}\,,
\end{equation}
i.e., the pull-back of the Cauchy stress by the orthogonal tensor $\mathbf{R}$. In components, this reads $\sigmaRc^{AB}=R^A{}_a\,\sigma^{ab}\,R^B{}_b$.

Recall that $\mathbf{R}:T_X\mathcal{B}\to T_x\mathcal{C}$ is a $(\mathbf{G},\mathbf{g})$-orthogonal tensor field with components $R^a{}_A$. The components of its inverse $\mathbf{R}^{-1}:T_x\mathcal{C}\to T_X\mathcal{B}$ are denoted by $R^A{}_a$ and are given by $R^A{}_a=G^{AM}\,R^m{}_M\,g_{ma}$.

The Biot stress $\SOne$ is the material stress tensor that is work-conjugate to the material stretch tensor $\mathbf{U}$. Note that 
\begin{equation}
	\frac{1}{2}\mathbf{S}\!:\!\dot{\mathbf{C}}^\flat
	=\frac{1}{2} \left( \mathbf{S}\!:\!\mathbf{G}\dot{\mathbf{U}}\mathbf{U} 
	+  \mathbf{S}\!:\!\mathbf{G}\mathbf{U}\dot{\mathbf{U}} \right)
	=\frac{1}{2} \left( \mathbf{U}\mathbf{S} + \mathbf{S}\mathbf{U}^\star \right)\!:\!\dot{\mathbf{U}}^\flat
	=\SOne\!:\!\dot{\mathbf{U}}^\flat\,,
\end{equation}
where 
\begin{equation}
\begin{aligned}
	& \SOne =\operatorname{sym}\!\left(\mathbf{U}\mathbf{S}\right)
	=\frac{1}{2}\Big(\mathbf{U}\mathbf{S}+(\mathbf{U}\mathbf{S})^\star\Big)
	=\frac{1}{2}\left(\mathbf{U}\mathbf{S}+\mathbf{S}^\star\mathbf{U}^\star\right)
	=\frac{1}{2}\left(\mathbf{U}\mathbf{S}+\mathbf{S}\mathbf{U}^\star\right)\,,\\
	& \SOneC^{AB} =\frac{1}{2}\left(U^A{}_M\,S^{MB}+S^{AM}\,U^B{}_M\right)\,.
\end{aligned}
\end{equation}
Using the polar decomposition $\mathbf{F}=\mathbf{R}\mathbf{U}$ and the relation $\mathbf{S}=\mathbf{F}^{-1}\mathbf{P}$, one concludes that $\mathbf{U}\mathbf{S}=\mathbf{U}\mathbf{F}^{-1}\mathbf{P}=\mathbf{R}^{-1}\mathbf{P}$.
Therefore,
\begin{equation}
	\SOne
	=\frac{1}{2}\left(\mathbf{R}^{-1}\mathbf{P}+\mathbf{P}^\star\mathbf{R}^{-\star}\right)
	=\operatorname{sym}\!\left(\mathbf{R}^{-1}\mathbf{P}\right)\,,\qquad
	\SOneC^{AB}
	=\frac{1}{2}\left(R^A{}_a\,P^{aB}+R^B{}_a\,P^{aA}\right)\,.
\end{equation}

\paragraph{Eigenvalues of Biot stress for isotropic solids.}
Let $\{\mathbf{E}_A\}$ be a $\mathbf{G}$-orthonormal eigenbasis of $\mathbf{U}$, i.e.,
\begin{equation}
	\mathbf{U}\mathbf{E}_A=\lambda_A\,\mathbf{E}_A\,,
	\quad\text{~(no~summation~on~}A)\,,\qquad
	\llangle \mathbf{E}_A,\mathbf{E}_B \rrangle_{\mathbf{G}}=\delta_{AB}\,.
\end{equation}
For an isotropic solid $\mathbf{U}$ and $\mathbf{S}$ are coaxial, i.e., share the same eigenvectors; this property does not hold in general for anisotropic solids. Hence,
\begin{equation}
	\mathbf{S}\mathbf{E}_A=S_A\,\mathbf{E}_A\,,
	\quad\text{~(no~summation~on~}A)\,.
\end{equation}
Knowing that $\mathbf{U}$ is $\mathbf{G}$-symmetric, i.e., $U_{AB}=U_{BA}$, one has
\begin{equation}
	\llangle \mathbf{U}\mathbf{X},\mathbf{Y}\rrangle_{\mathbf{G}}
	=\llangle \mathbf{X},\mathbf{U}\mathbf{Y}\rrangle_{\mathbf{G}}\,,
	\qquad \forall\,\mathbf{X},\mathbf{Y}\in T_X\mathcal{B}\,.
\end{equation}
Hence,
\begin{equation}
\begin{aligned}
	\llangle \mathbf{S}\mathbf{U}^\star \mathbf{E}_A,\mathbf{E}_B\rrangle_{\mathbf{G}}
	&=\llangle \mathbf{U}^\star \mathbf{E}_A,\mathbf{S}\mathbf{E}_B\rrangle_{\mathbf{G}}
	=S_B\,\llangle \mathbf{U}^\star \mathbf{E}_A,\mathbf{E}_B\rrangle_{\mathbf{G}}
	=S_B\,\llangle \mathbf{E}_A,\mathbf{U}\mathbf{E}_B\rrangle_{\mathbf{G}}\\
	&=S_B\lambda_B\,\llangle \mathbf{E}_A,\mathbf{E}_B\rrangle_{\mathbf{G}}
	=S_B\lambda_B\,\delta_{AB}\,,
	\quad\text{~(no~summation~on~}A~\text{or~}B)\,.
\end{aligned}
\end{equation}
Therefore,
\begin{equation}
	\mathbf{S}\mathbf{U}^\star \mathbf{E}_A=\lambda_A S_A\,\mathbf{E}_A\,,
	\quad\text{~(no~summation~on~}A)\,.
\end{equation}
Since also
\begin{equation}
	\mathbf{U}\mathbf{S}\mathbf{E}_A =\lambda_A S_A\,\mathbf{E}_A\,,
	\quad\text{~(no~summation~on~}A)\,,
\end{equation}
it follows that
\begin{equation}
	\SOne \mathbf{E}_A
	=\frac{1}{2}(\mathbf{U}\mathbf{S}+\mathbf{S}\mathbf{U}^\star)\mathbf{E}_A
	=\lambda_A S_A\,\mathbf{E}_A\,, \quad\text{~(no~summation~on~}A)\,.
\end{equation}
Hence, $\mathbf{U}$, $\mathbf{S}$, and $\SOne$ are coaxial and share the same eigenvectors, and the eigenvalues of the Biot stress are given by
\begin{equation}
	\beta_A=\lambda_A S_A\,,\quad\text{~(no~summation~on~}A)\,.
\end{equation}
Moreover, the eigenvalues of the Cauchy stress are related to those of the Biot stress as follows. Using $\boldsymbol{\sigma}=J^{-1}\,\mathbf{F}\mathbf{S}\mathbf{F}^\star=J^{-1}\,\mathbf{R}\mathbf{U}\mathbf{S}\mathbf{U}^\star\mathbf{R}^\star$, one sees that $\boldsymbol{\sigma}$ and $J^{-1}\mathbf{U}\mathbf{S}\mathbf{U}^\star$ have the same eigenvalues. Since
\begin{equation}
	\mathbf{U}\mathbf{S}\mathbf{U}^\star \mathbf{E}_A 
	=\lambda_A^2 S_A\,\mathbf{E}_A\,,
	\quad\text{~(no~summation~on~}A)\,,
\end{equation}
it follows that
\begin{equation} \label{Eigenvalues-sigma-B-S-iso}
	\sigma_A=J^{-1}\lambda_A^2 S_A =J^{-1}\lambda_A\beta_A\,,
	\quad\text{~(no~summation~on~}A)\,.
\end{equation}

Another useful stress measure is the Mandel stress $\mathbf{M}=\mathbf{C}\mathbf{S}$. It is work-conjugate to the material Hencky strain $\log\mathbf{U}$ \citep{Hoger1987,Xiao1997} in the sense that the stress power may be written as $\mathbf{M}\!:\!\mathring{\overline{\log\mathbf{U}}}$, where for any material tensor $\mathbf{A}$ the corotational rate is defined by $\mathring{\mathbf{A}}\coloneqq\dot{\mathbf{A}}+\boldsymbol{\Omega}\mathbf{A}-\mathbf{A}\boldsymbol{\Omega}$, and $\boldsymbol{\Omega}=\mathbf{R}^{-1}\dot{\mathbf{R}}$ is the material spin associated with the rotation tensor $\mathbf{R}$. Although the Mandel stress is widely used in finite-strain plasticity, we will not use it in this work.

\subsection{Material symmetry group}

For a hyperelastic solid, consider a stored energy density of the form $W=W(X,\mathbf{F},\mathbf{G},\mathbf{g})$, where $\mathbf{g}$ is the metric of the Euclidean ambient space and $\mathbf{G}$ is the induced metric on the body in the reference configuration. In the absence of eigenstrains, $\mathbf{G}$ is the material metric.

The material symmetry group at a point $X\in\mathcal{B}$, relative to the Euclidean reference configuration $(\mathcal{B},\mathbf{G})$, is the subgroup $\mathcal{G}_X\leqslant \mathrm{Orth}(\mathbf{G})$ consisting of all linear maps $\mathbf{K}:T_X\mathcal{B}\to T_X\mathcal{B}$ such that
\begin{equation} \label{W-Material-Symmetry}
	W(X,\mathbf{F}\mathbf{K},\mathbf{G},\mathbf{g})
	= W(X,\mathbf{F},\mathbf{G},\mathbf{g})\,,\qquad
	\forall\,\mathbf{F}\,,~
	\forall\,\mathbf{K}\in\mathcal{G}_X\leqslant \mathrm{Orth}(\mathbf{G})\,,
\end{equation}
where
\begin{equation}
	\mathrm{Orth}(\mathbf{G})
	=
	\left\{
	\mathbf{Q}:T_X\mathcal{B}\to T_X\mathcal{B}\ \big|\ 
	\mathbf{Q}^{\star}\mathbf{G}\mathbf{Q}=\mathbf{G}
	\right\}\,.
\end{equation}
Thus, the stored energy is invariant under the action of $\mathcal{G}_X$.

\paragraph{Isotropic solids.}
For an isotropic solid $\mathcal{G}_X=\mathrm{Orth}(\mathbf{G})$, the stored energy depends only on the principal invariants $I_1,I_2,I_3$, and hence $W=W(I_1,I_2,I_3)$.

\subsection{Anisotropic elasticity} \label{Sec:Anisotropic-Elasticity}

For a hyperelastic anisotropic solid, the stored energy density per unit reference volume is written as
\begin{equation}\label{eneg}
	W=\hat{W}(\mathbf{C}^\flat,\mathbf{G},\boldsymbol{\zeta}_1,\dots,\boldsymbol{\zeta}_n)\,,
\end{equation}
where $\boldsymbol{\zeta}_i$, $i=1,\dots,n$, are structural tensors characterizing the material symmetry group. The role of the structural tensors is to represent the energy as an isotropic scalar-valued function of its arguments. By Hilbert's theorem, for any finite collection of tensors there exists a finite integrity basis for the algebra of isotropic invariants generated by that collection. Therefore, if $I_j$, $j=1,\dots,m$, is an integrity basis associated with the arguments in \eqref{eneg}, then the energy may be written as $W=W(X,I_1,\dots,I_m)$. The Doyle--Ericksen formula \citep{Doyle1956,MarsdenHughes1983,Yavari2006} then gives us
\begin{equation}\label{invse1}
	\mathbf{S}=2\frac{\partial \hat{W}}{\partial\mathbf{C}^\flat}
	=\sum_{j=1}^{m}2W_j\frac{\partial I_j}{\partial\mathbf{C}^\flat}\,,\qquad
	W_j=\frac{\partial W}{\partial I_j}\,,\qquad j=1,\dots,m\,.
\end{equation}

\paragraph{Transversely isotropic solids.}
A transversely isotropic solid has, at each material point, a single preferred direction orthogonal to the plane of isotropy. Let $\mathbf{N}(X)$ denote this preferred direction at $X\in\mathcal{B}$. A choice for the structural tensor is $\mathbf{A}=\mathbf{N}\otimes\mathbf{N}$, and the energy is written as $W=W(\mathbf{G},\mathbf{C}^\flat,\mathbf{A})$ \citep{Doyle1956,Spencer1982,Lu2000}. In this case the energy depends on five independent invariants,
\begin{equation} 
	I_1=\mathrm{tr}\,\mathbf{C}\,,\qquad 
	I_2=\mathrm{det}\,\mathbf{C}\,\mathrm{tr}\,\mathbf{C}^{-1}\,,\quad 
	I_3=\mathrm{det}\,\mathbf{C}\,,\qquad 
	I_4=\mathbf{N}\cdot\mathbf{C}\cdot\mathbf{N}\,,\qquad 
	I_5=\mathbf{N}\cdot\mathbf{C}^2\cdot\mathbf{N}\,,
\end{equation}
i.e., $W=W(I_1,\hdots,I_5)$.

\paragraph{Orthotropic solids.}
An orthotropic solid possesses reflection symmetry with respect to three mutually orthogonal planes. Let $\mathbf{N}_1(X)$, $\mathbf{N}_2(X)$, and $\mathbf{N}_3(X)$ be $\mathbf{G}$-orthonormal vectors specifying the orthotropic axes at $X$. One may choose the structural tensors $\mathbf{A}_1=\mathbf{N}_1\otimes\mathbf{N}_1$, $\mathbf{A}_2=\mathbf{N}_2\otimes\mathbf{N}_2$, and $\mathbf{A}_3=\mathbf{N}_3\otimes\mathbf{N}_3$. Since $\mathbf{A}_1+\mathbf{A}_2+\mathbf{A}_3=\mathbf{I}=\operatorname{id}_{T_X\mathcal{B}}$, only two of these are independent. Accordingly, one writes $W=W(\mathbf{G},\mathbf{C}^\flat,\mathbf{A}_1,\mathbf{A}_2)$ \citep{Doyle1956,Spencer1982,Lu2000}. A convenient integrity basis is given by the following seven invariants
\begin{equation} \label{Orthotropic-Invariants}
\begin{aligned}
	& I_1=\mathrm{tr}\,\mathbf{C}\,,&& I_2=\mathrm{det}\,\mathbf{C}\,\mathrm{tr}\,\mathbf{C}^{-1}\,,
	&&  I_3=\mathrm{det}\,\mathbf{C}\,,\\
	& I_4=\mathbf{N}_1\cdot\mathbf{C}\cdot\mathbf{N}_1\,,&&
	I_5=\mathbf{N}_1\cdot\mathbf{C}^2\cdot\mathbf{N}_1\,,&&
	I_6=\mathbf{N}_2\cdot\mathbf{C}\cdot\mathbf{N}_2\,,&&
	I_7=\mathbf{N}_2\cdot\mathbf{C}^2\cdot\mathbf{N}_2\,.
\end{aligned}
\end{equation}
Thus, $W=W(I_1,\hdots,I_7)$.

\paragraph{Monoclinic solids.}
A monoclinic solid is characterized by three material preferred directions represented by three unit vectors $\{\mathbf{N}_1,\mathbf{N}_2,\mathbf{N}_3\}$, where $\mathbf{N}_1\cdot\mathbf{N}_2\neq 0$ and $\mathbf{N}_3$ is normal to the plane spanned by $\mathbf{N}_1$ and $\mathbf{N}_2$ \citep{merodio2020finite}. The energy function of a monoclinic solid depends on nine invariants \citep{Spencer1986}. Seven of these coincide with the orthotropic invariants in \eqref{Orthotropic-Invariants}, while the remaining two are
\begin{equation}
	I_8=\mathfrak{g}\,\mathbf{N}_1\cdot\mathbf{C}\cdot\mathbf{N}_2,\qquad
	I_9=\mathfrak{g}^2\,,
\end{equation}  
where $\mathfrak{g}=\mathbf{N}_1\cdot\mathbf{N}_2$. Thus, $W=W(I_1,\hdots,I_9)$.

\subsection{Constitutive equations of implicit elasticity}

In the literature of elasticity it is often tacitly assumed that stress is given explicitly as a function of strain, as in Cauchy elasticity \citep{Cauchy1828,YavariGoriely2025} and hyperelasticity \citep{Truesdell1952} (Green elasticity \citep{Green1838,Green1839,Spencer2015}). In these theories one writes constitutive equations of the form $\boldsymbol{\sigma}=\hat{\boldsymbol{\sigma}}(\mathbf{b},\mathbf{g})$. However, this is only one possible constitutive setting. One may also consider elastic materials for which strain is determined by stress.
In his work on \emph{controllable states of stress}, \citet{Carroll1973} considered constitutive laws of the form
\begin{equation} \label{invertible-stress}
	\mathbf{b}^\sharp=\xi_0\,\mathbf{g}^\sharp+\xi_1 \boldsymbol{\sigma}
	+\xi_2 \boldsymbol{\sigma}^2\,.
\end{equation}
Here, the scalar response functions $\xi_0$, $\xi_1$, and $\xi_2$ depend on the principal invariants of the Cauchy stress. Carroll defined a stress field to be controllable if it satisfies the equilibrium equations in the absence of body forces and if the associated strain is compatible for arbitrary choices of the response functions $\xi_0$, $\xi_1$, and $\xi_2$. He proved that, for isotropic elastic solids with constitutive equations of the form \eqref{invertible-stress}, every controllable stress field is necessarily homogeneous.\footnote{See \citet{Carroll1973ControllableIncompressible} for similar discussions for incompressible solids.} At the same time, he observed that, for a given material in this class, not every homogeneous stress field is constitutively admissible as its corresponding $\mathbf{b}^\sharp$ may not be compatible.

More generally, one may consider elastic materials whose constitutive response is described by an implicit relation between stress and strain measures. This broader class is usually referred to as implicit elasticity \citep{Morgan1966,Rajagopal2003,Rajagopal2007}. In such theories, the constitutive equation is not written in the explicit form $\boldsymbol{\sigma}=\hat{\boldsymbol{\sigma}}(\mathbf{b},\mathbf{g})$ or $\mathbf{b}=\hat{\mathbf{b}}(\boldsymbol{\sigma},\mathbf{g})$, but rather as an implicit tensor equation of the following form
\begin{equation} \label{ImplicitStressStrain}
	\boldsymbol{\mathsf{f}}(\boldsymbol{\sigma},\mathbf{b},\mathbf{g})=\mathbf{0}\,.
\end{equation}
Cauchy elastic and Green elastic solids are special cases of this more general framework.

\begin{remark}
This observation is important for the notion of material strength. If strength is defined as the set of critical homogeneous stresses at which fracture nucleates (see \S\ref{Sec:Material-Strength}), then its definition does not require stress to be an explicit function of strain. What is required instead is that the stress state be constitutively admissible. In an implicit theory this means that the pair $(\boldsymbol{\sigma},\mathbf{b})$ must lie in the constitutive stress-strain manifold. Thus, the notion of material strength extends naturally beyond explicit constitutive theories. At the same time, this suggests that a general notion of strength should not be restricted to purely stress-based descriptions, since fracture is expected to depend on both stress and strain. In the following sections we will see that covariance considerations naturally lead to strength functions that depend on both stress and strain measures.
\end{remark}

It is important to keep in mind that, in elasticity, stress and strain are on the same footing. In other words, it does not make sense to say that strain causes stress or that stress causes strain. Constitutive equations simply specify a relation between appropriate measures of stress and strain, whether written explicitly or implicitly. This viewpoint, together with the consequences of spatial covariance, also motivates our formulation of material strength. In general, strength need not be stress-based. Rather, strength is controlled by the pair $(\text{stress},\text{strain})$.

\begin{example}[Rajagopal bar]
Motivated by the examples provided by \citet{Rajagopal2003}, consider a one-dimensional body whose response is described by an elastic bar connected in parallel with an inextensible string. Let $\lambda$ denote the stretch, and let $\hat{\sigma}(\lambda)$ be the elastic response of the bar. Let $\lambda_c>1$ denote the limiting stretch at which the inextensible string becomes taut, and define $\sigma_c=\hat{\sigma}(\lambda_c)$. The constitutive relation is written as (we assume that for $\lambda<1$ the body deforms elastically, neglecting buckling)
\begin{equation}
	\begin{cases}
		\sigma=\hat{\sigma}(\lambda)\,, & \lambda<\lambda_c\,,\\[4pt]
		\lambda=\lambda_c\,, & \sigma\ge\sigma_c\,.
	\end{cases}
\end{equation}
An equivalent implicit representation of this response is
\begin{equation}
	(\lambda-\lambda_c)\big(\sigma-\hat{\sigma}(\lambda) \big)=0\,,\qquad
	\lambda\le\lambda_c\,,\quad	\sigma\ge\sigma_c\,.
\end{equation}
For $\lambda<\lambda_c$ the string is slack, and the response is governed solely by the elastic relation $\sigma=\hat{\sigma}(\lambda)$. Once the stretch reaches $\lambda_c$, the string becomes taut and enforces the constraint $\lambda=\lambda_c$. In this regime, the stretch remains fixed while the stress can increase or change arbitrarily, subject only to $\sigma\ge\sigma_c$. Thus, at $\lambda=\lambda_c$ there is a single admissible stretch but infinitely many admissible stresses. This response is non-dissipative, but the stress cannot be expressed globally as a single-valued function of the stretch. Hence, it does not fall within the classical framework of Cauchy elasticity or Green elasticity, and is instead viewed as belonging to the broader class of implicit constitutive theories.
Upon unloading from a state with $\lambda=\lambda_c$ and $\sigma>\sigma_c$, the response is uniquely determined by the requirement of non-dissipation. The body first unloads along the vertical branch $\lambda=\lambda_c$ until $\sigma=\sigma_c$. At this point the inextensible string becomes slack, and the constraint is no longer active. Further unloading follows the elastic relation $\sigma=\hat{\sigma}(\lambda)$. Thus, the loading and unloading paths coincide, and the response is fully reversible with no hysteresis. This highlights that, although the stress-stretch relation is not single-valued, the admissible path in any process is uniquely determined by the requirement that the elastic branch be followed whenever it is accessible.
\end{example}

\subsection{Covariance, objectivity, and material symmetry in continuum mechanics} \label{Sec:Elasticity-Covariance}

In this section we review the notion of covariance in continuum mechanics and nonlinear elasticity. In later sections, we will use these ideas to formulate a covariant theory of material strength. As we will see in \S\ref{Strength-Hypersurface}, covariance provides the framework for relating strength functions corresponding to different measures of stress.

The modern notion of covariance in physics is most closely associated with Einstein's theory of relativity, where the requirement that the governing equations of a given field theory preserve their form under a prescribed class of transformations became a central organizing principle. In particular, general covariance emerged in the development of general relativity as invariance under arbitrary smooth changes of coordinates, although the underlying mathematical ideas trace back to nineteenth-century differential geometry and tensor calculus. The historical and conceptual status of covariance, especially its relation to symmetry and background independence, has been discussed extensively in the literature \citep{Norton1993,Giulini2007}.

The notion of covariance in continuum mechanics appears in several closely related contexts involving invariance under transformation groups acting on the spatial and material configurations. Three important manifestations are the following.
\begin{itemize}[topsep=0pt,noitemsep, leftmargin=10pt]
\item When the ambient space is Euclidean, constitutive equations are postulated to be invariant under isometries of Euclidean space, a requirement usually referred to as objectivity or material frame indifference \citep{TruesdellNoll2004}. For a non-flat ambient space, however, there may be no nontrivial isometries. In this case, the natural extension of material frame indifference is the spatial covariance of the energy function, namely invariance under arbitrary diffeomorphisms of the ambient space \citep{HuMa1977,MarsdenHughes1983}.

\item In the classical setting of Euclidean ambient spaces, \citet{Green64} showed that the balance laws of nonlinear elasticity may be derived by postulating the balance of energy together with its invariance under superposed rigid body motions of the ambient space, while \citet{Noll1963} gave a related passive interpretation in terms of time-dependent changes of spatial frame. This idea was later extended to hyperelasticity in Riemannian ambient spaces by \citet{HuMa1977}, who postulated covariance of the energy balance under arbitrary ambient space diffeomorphisms and showed that this gives the balance laws of hyperelasticity together with the Doyle--Ericksen formula. The covariance-based formulation was later extended to Cauchy elasticity \citep{YavariGoriely2025}; see also \citep{MarsdenHughes1983,Yavari2006,YavariMarsden2012}.

\item Material covariance of constitutive equations is a generalization of material isotropy. In the classical Euclidean setting, isotropy is understood as invariance of the constitutive equation under isometries of the material manifold. For a general Riemannian material manifold, however, there may be no nontrivial isometries, and this classical characterization is no longer adequate. In that case, the natural generalization of isotropy is covariance of the constitutive equation under arbitrary diffeomorphisms of the material manifold \citep{MarsdenHughes1983}. See also \citep{Lu2000,Lu2012,YavariSozio2023} for extensions of material covariance to anisotropic elasticity and anelasticity.
\end{itemize}
\smallskip

A diffeomorphism is a map $\varphi:\mathcal{B}\to\mathcal{C}$ that is smooth, invertible, and whose inverse $\varphi^{-1}:\mathcal{C}\to\mathcal{B}$ is also smooth. Equivalently, it is a smooth bijection with a smooth inverse.
In continuum mechanics, a diffeomorphism represents an admissible deformation, i.e., a smooth one-to-one mapping that excludes interpenetration of matter, tearing, and the formation of voids.

The configuration space of nonlinear elasticity is the set of all embeddings of the body $\mathcal{B}$ into the ambient space $\mathcal{S}$. We denote this set by $\mathcal{E}=\operatorname{Emb}(\mathcal{B},\mathcal{S})$ \citep{EbinMarsden1970,Simo1988}. Let $\xi:\mathcal{S}\to\mathcal{S}$ and $\Xi:\mathcal{B}\to\mathcal{B}$ be spatial and material diffeomorphisms, respectively. The collections of spatial and material diffeomorphisms form groups, denoted by $\operatorname{Diff}(\mathcal{S})$ and $\operatorname{Diff}(\mathcal{B})$, respectively.\footnote{For example, $\operatorname{Diff}(\mathcal{S})$ is a group under composition: the identity element is $\operatorname{id}_{\mathcal{S}}$, the inverse of any diffeomorphism $\xi$ is $\xi^{-1}$, which is well-defined and unique, and associativity follows from the associativity of composition. The same holds for $\operatorname{Diff}(\mathcal{B})$.}

\begin{remark}\label{Diff-Emb}
Since $\mathcal{B}\subset \mathcal{S}$ is an embedded submanifold, let $\iota:\mathcal{B}\hookrightarrow \mathcal{S}$ denote the inclusion map. For any $\Xi\in \operatorname{Diff}(\mathcal{B})$, $\Xi:\mathcal{B}\to\mathcal{B}$, and hence $\iota\circ \Xi:\mathcal{B}\to\mathcal{S}$. Since $\Xi$ is a diffeomorphism, $\iota\circ \Xi$ is an embedding, and thus $\iota\circ \Xi\in \mathcal{E}=\operatorname{Emb}(\mathcal{B},\mathcal{S})$. Therefore, $\iota\circ \operatorname{Diff}(\mathcal{B}) \subset \mathcal{E}$. 
For any $\xi\in \operatorname{Diff}(\mathcal{S})$, the restriction $\xi\big|_{\mathcal{B}}:\mathcal{B}\to \mathcal{S}$ is an embedding, and hence $\xi\big|_{\mathcal{B}}\in \mathcal{E}$. 
In particular, any deformation $\varphi:\mathcal{B}\to\mathcal{C}\subset\mathcal{S}$ is an embedding, and thus $\varphi\in \mathcal{E}$. Since $\mathcal{B}$ is compact, any embedding $\varphi\in \mathcal{E}$ can be extended to a diffeomorphism $\xi\in \operatorname{Diff}(\mathcal{S})$ such that $\varphi=\xi\big|_{\mathcal{B}}$. This is a consequence of the isotopy extension theorem \citep{Hirsch1976}.
\end{remark}

Let us define the left translation of $\varphi\in\mathcal{E}$ as
\begin{equation} \label{Left-Action}
\begin{aligned}
	L_{\xi}:\operatorname{Diff}(\mathcal{S})\times\mathcal{E} &\longrightarrow \mathcal{E} \\
	(\xi,\varphi) &\longmapsto L_{\xi}(\varphi)=\xi\circ\varphi\,.
\end{aligned}
\end{equation}
Similarly, the right translation of $\varphi\in\mathcal{E}$ is defined as
\begin{equation} \label{Right-Action}
\begin{aligned}
	R_{\Xi}: \mathcal{E} \times\operatorname{Diff}(\mathcal{B}) &\longrightarrow \mathcal{E} \\
	(\xi,\varphi) &\longmapsto R_{\Xi}(\varphi)= \varphi\circ\Xi\,.
\end{aligned}
\end{equation}
In order to define the left and right actions on spatial and material tensor fields, one first computes the tangent maps associated with $L_{\xi}$ and $R_{\Xi}$. To this end, consider a $1$-parameter family of maps $\varphi_{\epsilon}\in \mathcal{E}$ such that $\varphi_{\epsilon}\big|_{\epsilon=0}=\varphi$. Then
\begin{equation}
	\delta\varphi(X)\coloneqq \frac{d}{d\epsilon} \varphi_{\epsilon}(X)\Big|_{\epsilon=0}	\,,
\end{equation}
where $\delta\varphi$ is a vector field along $\varphi$.\footnote{A vector field along $\varphi$ is a map $\mathbf{Z}:\mathcal{B}\to T\mathcal{S}$ such that $\mathbf{Z}(X)\in T_{\varphi(X)}\mathcal{S}$ for every $X\in\mathcal{B}$, i.e., it assigns to each material point a vector based at its image under $\varphi$.} 
If $\epsilon$ is interpreted as time, then $\delta\varphi$ coincides with the material velocity field $\mathbf{V}$.
Now, the tangent map of $L_{\xi}$ is computed as
\begin{equation}
	(T_{\varphi}L_{\xi})\cdot \delta\varphi(X) 
	= \frac{d}{d\epsilon}\Bigg|_{\epsilon=0} L_{\xi}\varphi_{\epsilon}(X) \\
	= \frac{d}{d\epsilon}\Bigg|_{\epsilon=0} \xi(\varphi_{\epsilon}(X)) \\
	= T\xi(\varphi(X))\cdot \delta\varphi(X)
	\,,
\end{equation}
where in the last equality the chain rule was used. Note that $\delta\varphi(X)\in T_{\varphi(X)}\mathcal{S}$, while $T\xi(\varphi(X)):T_{\varphi(X)}\mathcal{S}\to T_{\xi(\varphi(X))}\mathcal{S}$. Therefore, $(T_{\varphi}L_{\xi})\cdot \delta\varphi$ is a vector field along $\xi\circ\varphi$, and one may write
\begin{equation}
	(T_{\varphi}L_{\xi})\cdot \delta\varphi
	=(T\xi\circ\varphi)\cdot \delta\varphi\,,\qquad \forall \xi\in \operatorname{Diff}(\mathcal{S})
	\,.
\end{equation}
Similarly, the tangent map of $R_{\Xi}$ is computed as
\begin{equation}
	(T_{\varphi}R_{\Xi})\cdot \delta\varphi(X)
	= \frac{d}{d\epsilon}\Bigg|_{\epsilon=0} R_{\Xi}\varphi_{\epsilon}(X)
	= \frac{d}{d\epsilon}\Bigg|_{\epsilon=0} \varphi_{\epsilon}(\Xi(X))
	= \delta\varphi(\Xi(X))
	\,.
\end{equation}
Note that $\delta\varphi(\Xi(X))\in T_{\varphi(\Xi(X))}\mathcal{S}$. Therefore, $(T_{\varphi}R_{\Xi})\cdot \delta\varphi$ is a vector field along $\varphi\circ\Xi$, and one may write
\begin{equation}
	(T_{\varphi}R_{\Xi})\cdot \delta\varphi
	=\delta\varphi\circ\Xi\,,\qquad \forall \Xi\in \operatorname{Diff}(\mathcal{B})
	\,.
\end{equation}

The geometric meaning of the left and right actions defined above is illustrated in Fig.~\ref{Fig:Covariance}. A deformation $\varphi$ is transformed by a material relabeling $\Xi$ and a spatial reparametrization $\xi$ to give $\varphi’=\xi\circ\varphi\circ\Xi^{-1}$, where both mappings describe the same physical deformation process. This diagram clarifies the roles of $\Xi$ and $\xi$ and the distinction between material and spatial covariance.
\begin{figure}[t!]
\centering
\includegraphics[width=0.5\textwidth]{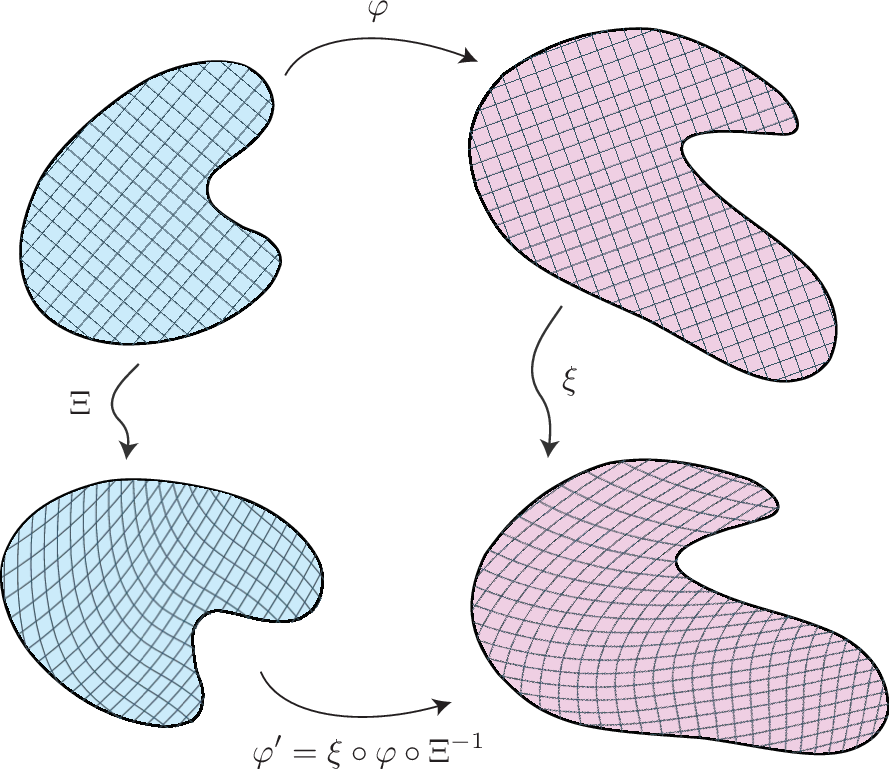}
\vspace*{0.10in}
\caption{Schematic illustration of spatial and material covariance. The mappings $\varphi$ and $\varphi’=\xi\circ\varphi\circ\Xi^{-1}$ describe the same physical deformation process, and are related by a material relabeling $\Xi$ and a spatial reparametrization $\xi$. The blue configurations represent the undeformed, stress-free body, while the pink configurations represent the same deformed, stressed body. Restricting to $\Xi=\mathrm{id}_{\mathcal{B}}$ gives spatial covariance, while restricting to $\xi=\mathrm{id}_{\mathcal{S}}$ gives material covariance. The map $\xi$ may be viewed as a warped lens through which the same deformed body is described. Covariance of a continuum field theory means that any tensor field, including scalar and vector fields, transforms under these maps via push-forward or pull-back.}
\label{Fig:Covariance}
\end{figure}

\subsubsection{Spatial covariance}

An energy function $W=W(X,\mathbf{F},\mathbf{G},\mathbf{g}\circ\varphi)$ is \textit{spatially covariant} if it is left invariant under the action of the group $\operatorname{Diff}(\mathcal{S})$. This can be informally written as $L_{\xi}W=W$. More formally, one has 
\begin{equation} \label{W-Spatial-Covariance}
	W(X,\xi_*\mathbf{F},\mathbf{G},\xi_*\mathbf{g})
	=W(X,\mathbf{F},\mathbf{G},\mathbf{g}\circ\xi\circ\varphi)\,.
\end{equation}
This is more specifically written as
\begin{equation} 
	W\big(X,T\xi\cdot\mathbf{F},\mathbf{G},(T\xi)^{-\star}\cdot\mathbf{g}\cdot(T\xi)^{-1}\big)
	=W(X,\mathbf{F},\mathbf{G},\mathbf{g}\circ\xi\circ\varphi)\,.
\end{equation}
Note that under a spatial diffeomorphism material tensors remain unchanged. 

Next we show that, by choosing $\xi\in\operatorname{Diff}(\mathcal{S})$ to be a diffeomorphic extension of $\varphi^{-1}$, spatial covariance implies that the dependence of the energy function on $\mathbf{F}$ and $\mathbf{g}$ reduces to that on $\mathbf{C}^\flat$.
Since $\varphi:\mathcal{B}\to\mathcal{C}\subset\mathcal{S}$ is an embedding and $\mathcal{B}$ is compact, $\varphi$ is a diffeomorphism onto its image $\mathcal{C}$. Hence, $\varphi^{-1}:\mathcal{C}\to\mathcal{B}$ is a diffeomorphism. Because $\mathcal{C}$ is compact, the isotopy extension theorem (see Remark~\ref{Diff-Emb}) implies that there exists $\xi\in\operatorname{Diff}(\mathcal{S})$ such that $\xi \big|_{\mathcal{C}}=\varphi^{-1}$. Thus, in the spatial covariance relation one may choose $\xi$ to be any diffeomorphic extension of $\varphi^{-1}$ to $\mathcal{S}$. Therefore,
\begin{equation}
	W(X,\mathbf{F},\mathbf{G},\mathbf{g}\circ\varphi)
	= W\big(X,\mathbf{F}^{-1}\cdot\mathbf{F},\mathbf{G},\mathbf{F}^\star \mathbf{g}\,\mathbf{F}\big)
	= W\big(X,\operatorname{id}_{T_X\mathcal{B}},\mathbf{G},\mathbf{C}^\flat\big)
	= \hat{W}\big(X,\mathbf{G},\mathbf{C}^\flat\big)
	\,,
\end{equation}
where
\begin{equation}
	\hat{W}\big(X,\mathbf{G},\mathbf{C}^\flat\big)
	\coloneqq	W\big(X,\operatorname{id}_{T_X\mathcal{B}},
	\mathbf{G},\mathbf{C}^\flat\big)	\,.
\end{equation}

\subsubsection{Material covariance}

An energy function $W=W(X,\mathbf{F},\mathbf{G},\mathbf{g}\circ\varphi)$ is \textit{materially covariant} if it is right invariant under the action of the group $\operatorname{Diff}(\mathcal{B})$. This can be informally written as $R_{\Xi}W=W$. More formally, one has 
\begin{equation} \label{W-Material-Covariance}
	W(\Xi(X),\Xi_*\mathbf{F},\Xi_*\mathbf{G},\mathbf{g}\circ\varphi\circ\Xi)
	=W\big(\Xi(X),\mathbf{F}\cdot T\Xi,(T\Xi)^{-\star}\mathbf{G}\,(T\Xi)^{-1},\mathbf{g}\circ\varphi\circ\Xi\big)
	=W(X,\mathbf{F},\mathbf{G},\mathbf{g}\circ\varphi)\,.
\end{equation}
Using the representation resulting from spatial covariance, the material covariance can be expressed as
\begin{equation}
	\hat{W}\big(\Xi(X),\Xi_*\mathbf{G},\Xi_*\mathbf{C}^\flat\big)
	=\hat{W}\big(\Xi(X),(T\Xi)^{-\star}\mathbf{G}\,(T\Xi)^{-1},(T\Xi)^{-\star}\mathbf{C}^\flat\,(T\Xi)^{-1}\big)
	= \hat{W}\big(X,\mathbf{G},\mathbf{C}^\flat\big)
	\,.
\end{equation}

Next we show that, by choosing $\Xi\in\operatorname{Diff}(\mathcal{B})$ to be the material diffeomorphism induced by $\varphi$, material covariance implies that the dependence of $\hat{W}$ on $\mathbf{G}$ and $\mathbf{C}^\flat$ reduces to that on $\mathbf{c}^\flat$ and $\mathbf{g}$, i.e., material isotropy.\footnote{Dependence on $\mathbf{c}^\flat$ and $\mathbf{g}$ implies that the energy is expressed purely in terms of spatial tensors and their invariants, and is therefore independent of any preferred material directions, which is precisely isotropy.} As $\varphi:\mathcal{B}\to\mathcal{C}\subset\mathcal{S}$ is an embedding and $\mathcal{B}$ is compact, $\varphi$ is a diffeomorphism onto its image $\mathcal{C}$. Therefore, one may use $\varphi$ itself as the material relabeling map in the covariance relation, i.e., one may choose $\Xi=\varphi$. From material covariance one obtains
\begin{equation}
	\hat{W}\big(X,\mathbf{G},\mathbf{C}^\flat\big)
	=\hat{W}\big(\varphi(X),\varphi_*\mathbf{G},\varphi_*\mathbf{C}^\flat\big)
	=\hat{W}\big(\varphi(X),\mathbf{F}^{-\star}\mathbf{G}\mathbf{F}^{-1},\varphi_*\varphi^*\mathbf{g}\big)
	=\hat{W}\big(x,\mathbf{c}^\flat,\mathbf{g}\big)
	\,.
\end{equation}

In the next section we show that spatial covariance plays a central role in rewriting a strength function in terms of different stress measures.

\section{Material Strength} \label{Sec:Material-Strength}

Fracture nucleation and evolution in brittle solids under quasi-static loading are governed by three macroscopic (continuum) material properties: (i) the elastic response of the material, characterized by elastic constants in linear elasticity and by a strain-energy function in hyperelasticity,\footnote{For Cauchy elastic solids, the elastic response is characterized by constitutive response functions rather than by a strain-energy function.} (ii) the fracture toughness, or equivalently a critical energy release rate, and (iii) the material strength. The present discussion precludes dynamic loading which would require additional analysis.

As stated in the Introduction, we adopt the following definition for material strength \citep{KumarLopezPamies2020,KumarBourdinFrancfortLopezPamies2020}: 
``the strength of an elastic brittle material is the set of all critical stresses $\mathbf{S}$ at which the material fractures when it is subjected to a state of monotonically increasing,\footnote{The phrase ``monotonically increasing stress" is part of the definition quoted from \citep{KumarLopezPamies2020,KumarBourdinFrancfortLopezPamies2020}. 
This phrase is intended only to exclude loading paths that involve unloading or cyclic variations of the applied stress state. No ordering of general tensor-valued stress states is intended. In the mathematical formulation used here, material strength is represented by the boundary of the safe domain in the space of spatially uniform, constitutively admissible stresses. The loading qualification indicates that this boundary is approached from within the safe domain before the first occurrence of fracture. For proportional loading, the loading path is a straight ray emanating from the origin, as in \citep{Malmeister1966}.}
spatially uniform, but otherwise arbitrary stress."
In other words, the strength surface controls when crack formation occurs under uniform stresses. However, in general, the strength surface violation acts as a necessary but not sufficient condition for crack evolution. Material toughness plays an important role under non-uniform stresses. A complete mathematical understanding of how strength and toughness couple with each other is still lacking. However, it has been hypothesized with some evidence \citep{lopezpamieskamarei2025, ward2025} that the strength surface acts as a constraint on the variational statement of Griffith fracture as put forward by \citet{FrancfortMarigo1998} and extended to nonlinear elastic materials by \citet{francfort2005nonlinear}. 
Mathematically, this may be stated as follows: under Dirichlet boundary conditions, at a discrete time $t_k$ the pair of deformation field and the crack set  $\left( \varphi_k := \varphi_k(X, t_k), \; \Gamma_k := \Gamma(t_k) \right)$ minimizes the functional
\begin{equation}
	\mathcal{E}(\varphi,\Gamma) \coloneqq 
	\int_{\Omega_0 \setminus \Gamma} W(\mathbf{F},\mathbf{G},\mathbf{g}) \, \mathrm{d}V 
	+ G_c \, \mathcal{A}(\Gamma)\,,
	\label{Variational}
\end{equation}
among all admissible pairs $(\varphi,\Gamma = \Gamma_{k-1} \cup \Delta \Gamma)$ such that
\begin{equation}
 \Delta \Gamma \subset \mathcal{B}_f(t)\,,
\end{equation}
where
\begin{equation}
	\mathcal{B}_f(t) = \left\{ X\in\mathcal{B} : 
	\mathsf{F}(\mathbf{P},\mathbf{F},\mathbf{g},\mathbf{G}) \ge 0 \right\}\,,
\end{equation}
is the failed subset of the body, while $\mathsf{F}(\mathbf{P},\mathbf{F},\mathbf{g},\mathbf{G})=0$ defines the strength hypersurface. Many challenging mathematical questions remain regarding this formulation and its extensions, for instance, to account for boundary loads \citep{chambolle2009, larsen2021, larsen2024}. 
Still, a phase field model corresponding to this formulation has been developed by \citet{KumarBourdinFrancfortLopezPamies2020} and \citet{KumarLopezPamies2020} and has proven successful in predicting crack nucleation and propagation in a wide variety of brittle materials under arbitrary loading conditions, including finite deformations \citep{Kamarei2026}. The model is not developed as a direct regularization of the above constrained variational formulation; however, it contains the essential features. We note here also other recent modeling efforts to incorporate strength into the classical variational formulation \citep{bourdin2025variational}.

As discussed in the Introduction, a critical question to evaluate in this definition is the phrase ``spatially uniform stress''.
To see the difficulty, recall that for a homogeneous compressible hyperelastic solid a homogeneous deformation, or equivalently a uniform strain, induces a spatially uniform stress field. However, a spatially uniform stress field may also be supported by an inhomogeneous deformation, so that uniform stress need not correspond to a homogeneous strain field. This can already be seen for a compressible hyperelastic material with energy function $W=W(J)$, for which the Cauchy stress is purely hydrostatic and depends only on $J$ (a hyperelastic fluid). Consider the inhomogeneous simple shear deformation $\varphi(X_1,X_2,X_3)=\big(X_1\,,X_2+f(X_1)\,,X_3\big)$ with a non-affine function $f$. This deformation is isochoric, i.e., $J=1$, while the strain is inhomogeneous because the deformation gradient is written as
\begin{equation}
	\mathbf{F}=\begin{bmatrix}
	1 & 0 & 0 \\
	f'(X_1) & 1 & 0\\
	0 & 0 & 1
	\end{bmatrix}\,,
\end{equation}
and $f'(X_1)$ varies with $X_1$. Nonetheless, the Cauchy stress is spatially uniform, namely $\boldsymbol{\sigma}=W'(1)\,\mathbf{g}^\sharp$. 
It is known more generally that homogeneous Cauchy stress in isotropic hyperelasticity does not necessarily imply homogeneous deformations \citep{MihaiNeff2017,MihaiNeff2018}. The examples constructed in these works are piecewise homogeneous deformations whose deformation gradients are rank-one connected across planar interfaces, in the spirit of the classical Ball--James theory of microstructure \citep{BallJames1987}. However, if the stored-energy function is strictly rank-one convex, then such non-homogeneous rank-one connected deformations cannot support a homogeneous Cauchy stress, since in that case the Cauchy stress is injective along rank-one connected lines \citep{NeffMihai2017}.

These examples show that, in finite elasticity, the notion of a spatially uniform stress state is more subtle than in the linear theory. This naturally leads to the question of what measure of stress should be used in defining material strength.

\subsection{What measure of stress should be used in defining material strength?} 

If one wishes to define material strength as the set of critical spatially uniform stresses, one must first decide which stress measure is to be used and under what assumptions a uniform stress state is regarded as representing a homogeneous loading state. This issue is especially important in finite elasticity, where different stress measures live in different configurations and encode different geometric information.

Let us consider a homogeneous undeformed body made of some hyperelastic solid. 
Let us define the material strength of this solid as the set of critical spatially uniform stresses under which the body fractures.
For an arbitrary material point $X\in\mathcal{B}$ and given any element of area $dA$ with unit normal $\mathbf{N}(X)$, traction in the deformed configuration is $\mathbf{t}=\mathbf{P}\mathbf{N}^\flat$, where $\mathbf{P}$ is the first Piola--Kirchhoff stress. Traction $\mathbf{t}$ acts on the element of area $da$ with unit normal $\mathbf{n}\circ\varphi(X)$ in the current configuration. By Nanson's formula one has $\mathbf{n}^{\flat}\,da=J\,\mathbf{F}^{-\star}\,\mathbf{N}^{\flat}\,dA$, or equivalently $\mathbf{n}\,da=J\mathbf{F}^{-\mathsf{T}}\mathbf{N}\,dA$.
These considerations suggest that, if one wants to characterize strength in terms of boundary loading, the first Piola--Kirchhoff stress is a natural stress measure to use. Indeed, $\mathbf{P}$ directly determines the traction vector associated with a given oriented area element in the reference configuration, which is the materially homogeneous stress-free configuration of the body. In this sense, $\mathbf{P}$ is convenient for prescribing and characterizing spatially homogeneous loading by boundary tractions. 

On the other hand, if one wishes to regard homogeneity in a purely spatial sense, then the Cauchy stress $\boldsymbol{\sigma}$ is a more natural stress measure. Thus, even at the level of definitions, one must distinguish between uniformity relative to the reference configuration and uniformity relative to the current configuration. However, uniformity of the Cauchy stress, in general, is not equivalent to uniformity of the first Piola--Kirchhoff stress.
It should be emphasized that the various stress measures are mechanically equivalent in the sense that, when paired with the corresponding area element in the appropriate configuration, they represent the same physical traction vector. Nevertheless, the notion of strength depends, in general, on which stress measure is assumed to be homogeneous at the onset of fracture. Once strength is defined with respect to one stress measure, one may then ask how the corresponding strength function is represented in terms of any other stress measure.
We examine each stress measure in turn---namely the first and second Piola--Kirchhoff stresses and the Cauchy stress---and compare the corresponding notions of uniformity in the reference and spatial configurations.

\paragraph{Uniform first Piola--Kirchhoff stress.}
Let us follow \citep{KumarLopezPamies2020,KumarBourdinFrancfortLopezPamies2020} and assume that the first Piola--Kirchhoff stress is spatially uniform at the onset of fracture. For the two-point tensor $\mathbf{P}$, spatial uniformity means that its spatial covariant derivative vanishes, i.e., in components $P^{aA}{}_{|b}=0$. Recall that \citep{MarsdenHughes1983}
\begin{equation}
	P^{aA}{}_{|B} = P^{aA}{}_{,B} +\gamma^a{}_{cd}\,F^d{}_B\,P^{cA}
	+\Gamma^A{}_{CB}\,P^{aC}\,,
\end{equation}
where $\gamma^a{}_{bc}$ are the Christoffel symbols of the spatial metric $\mathbf{g}$ in a coordinate chart $\{x^a\}$ and $\Gamma^A{}_{BC}$ are the Christoffel symbols of the material metric $\mathbf{G}$ in a coordinate chart $\{X^A\}$. We also know that $P^{aA}{}_{|B}=F^b{}_B\,P^{aA}{}_{|b}$. Since the deformation gradient is invertible, it follows that
\begin{equation}
	P^{aA}{}_{|b}=0 \quad\Leftrightarrow\quad P^{aA}{}_{|B}=0\,.
\end{equation}
In other words, material and spatial uniformity of the first Piola-Kirchhoff stress are equivalent.

The notion of spatial or material uniformity for a two-point tensor such as the first Piola--Kirchhoff stress requires some care. Unlike a purely spatial or purely material tensor, $\mathbf{P}$ maps material covectors to spatial vectors, i.e., $\mathbf{P}:T_X^*\mathcal{B}\to T_x\mathcal{C}$, and therefore depends simultaneously on material and spatial positions. For this reason, it does not make sense to define uniformity by saying that the components of $\mathbf{P}$ are constant in a given coordinate system. The appropriate notion of uniformity is covariant constancy. The condition $P^{aA}{}_{|b}=0$ means that $\mathbf{P}$ does not vary from point to point in the spatial manifold once the geometry of the current configuration is taken into account, while $P^{aA}{}_{|B}=0$ means that $\mathbf{P}$ does not vary from point to point in the material manifold once the geometry of the reference configuration is taken into account. Since these two conditions are related by $P^{aA}{}_{|B}=F^b{}_B\,P^{aA}{}_{|b}$ and the deformation gradient is invertible, spatial and material uniformity are equivalent. 
Thus, for a two-point tensor, homogeneity means that the mapping from material covectors to spatial vectors is the same everywhere in the body, after accounting for the geometries of the reference and current configurations.

We next examine the implications of spatial uniformity of the first Piola--Kirchhoff stress for the second Piola--Kirchhoff and Cauchy stresses.

\paragraph{Uniform second Piola--Kirchhoff stress.}
The material covariant derivative of the second Piola--Kirchhoff stress with respect to the material metric $\mathbf{G}$, $S^{AB}{}_{|C}$, is computed as follows. Recall that $S^{AB}=F^{-A}{}_a\,P^{aB}$. Hence,
\begin{equation}
	S^{AB}{}_{|C} = F^{-A}{}_{a|C}\,P^{aB} +F^{-A}{}_a\,P^{aB}{}_{|C}= F^{-A}{}_{a|C}\,P^{aB}\,.
\end{equation}
Thus, uniformity of the first Piola--Kirchhoff stress does not, in general, imply material uniformity of the second Piola--Kirchhoff stress. The latter holds only under the additional condition $F^{-A}{}_{a|C}=0$, i.e., for homogeneous deformations.

\begin{remark}
A clarification is needed here. When the balance of linear momentum is written in terms of the second Piola--Kirchhoff stress, the operator $\operatorname{Div}\mathbf{S}$ is the divergence with respect to the metric $\mathbf{C}^\flat$, not the material metric $\mathbf{G}$. Thus, the equilibrium equation for $\mathbf{S}$ involves the Levi--Civita connection of $\mathbf{C}^\flat$.
This should not be confused with the notion of material uniformity. Uniformity of $\mathbf{S}$ is defined using the material covariant derivative associated with $\mathbf{G}$, since uniformity is a geometric notion on the material manifold $(\mathcal{B},\mathbf{G})$. In elasticity, $\mathbf{G}$ is the induced metric on the reference configuration, namely $\mathbf{G}=\mathbf{g}\big|_{\mathcal{B}}$.
Therefore, two different metrics appear in the discussion of $\mathbf{S}$. The metric $\mathbf{C}^\flat$ enters the equilibrium equation through the divergence operator, whereas the metric $\mathbf{G}$ enters the definition of material uniformity. These two notions are distinct and should not be conflated.
\end{remark}

\paragraph{Uniform Cauchy stress.}
Recall that $\sigma^{ab}=J^{-1}P^{aA}F^b{}_A$. Taking the spatial covariant derivative, and assuming that the first Piola--Kirchhoff stress has vanishing covariant derivative, one obtains
\begin{equation}
	\sigma^{ab}{}_{|c} = (J^{-1})_{,c}\,P^{aA}F^b{}_A +J^{-1}P^{aA}F^b{}_{A|c}\,.
\end{equation}
Now, since $J_{,c}=F^{-C}{}_c\,J_{,C}$ and $J_{,C}=J\,F^{-A}{}_d\,F^d{}_{A|C}$, one finds $(J^{-1})_{,c}=-J^{-1}F^{-C}{}_c\,F^{-A}{}_d\,F^d{}_{A|C}$. Also, $F^b{}_{A|c}=F^b{}_{A|C}\,F^{-C}{}_c$. Therefore,
\begin{equation}
\begin{aligned}
	\sigma^{ab}{}_{|c} &= -J^{-1}F^{-C}{}_c\,F^{-A}{}_d\,F^d{}_{A|C}\,\sigma^{ab} 
	+J^{-1}P^{aA}F^b{}_{A|C}\,F^{-C}{}_c \\
	&= J^{-1}F^{-C}{}_c
	\left( P^{aA}F^b{}_{A|C} - F^{-A}{}_d\,F^d{}_{A|C}\,\sigma^{ab}	\right)\,.
\end{aligned}
\end{equation}
Thus, vanishing covariant derivative of the first Piola--Kirchhoff stress does not, in general, imply vanishing covariant derivative of the Cauchy stress. For homogeneous deformations, however, $F^a{}_{A|B}=0$, and hence $\sigma^{ab}{}_{|c}=0$, i.e., the Cauchy stress is spatially uniform.

In summary, uniformity of one measure of stress does not, in general, imply uniformity of another. In particular, for the first Piola--Kirchhoff stress, spatial and material uniformity are equivalent. However, uniformity of the first Piola--Kirchhoff stress does not, in general, imply either spatial uniformity of the Cauchy stress or material uniformity of the second Piola--Kirchhoff stress. For homogeneous deformations, on the other hand, spatial uniformity of the Cauchy stress, material uniformity of the second Piola--Kirchhoff stress, and uniformity of the first Piola--Kirchhoff stress are equivalent.

\begin{remark}
Local invertibility of the stress-strain relation is not sufficient to conclude that a spatially uniform stress field necessarily corresponds to a spatially uniform strain field. As a matter of fact, local invertibility only guarantees uniqueness of strain within a neighborhood of a given state, and therefore does not exclude the possibility that the same uniform stress may correspond to strain states lying on different branches of the constitutive equation. In contrast, if the constitutive equation is globally invertible, then a uniform stress field uniquely determines a uniform strain field. In particular, let us assume that the constitutive equation $\boldsymbol{\sigma}=\hat{\boldsymbol{\sigma}}(\mathbf{b}^\sharp,\mathbf{g})$ is invertible. Let $\boldsymbol{\sigma}_0$ be a homogeneous Cauchy stress. Then $\mathbf{b}^\sharp=\mathbf{b}_0^\sharp=\hat{\boldsymbol{\sigma}}^{-1}(\boldsymbol{\sigma}_0,\mathbf{g})$ is uniform. This gives a unique uniform spatial stretch tensor that corresponds to a homogeneous deformation \citep{Ciarlet1988,MihaiNeff2018}.
In this case, the first Piola--Kirchhoff, Cauchy, and second Piola--Kirchhoff stresses are uniform simultaneously.
\end{remark}

\begin{remark}[Invertibility of the linear isotropic stress-strain map]
The linear isotropic constitutive equations $\boldsymbol{\sigma}=2\mu\,\mathrm{dev}\,\boldsymbol{\varepsilon}+\kappa\,\mathrm{tr}(\boldsymbol{\varepsilon})\,\mathbf{g}$ ($\mu$ is shear modulus and $\kappa$ is bulk modulus) are invertible if and only if $\mu\neq 0$ and $\kappa\neq 0$. Note that taking the trace, one finds $\mathrm{tr}(\boldsymbol{\sigma})=3\kappa\,\mathrm{tr}(\boldsymbol{\varepsilon})$, and hence $\mathrm{tr}(\boldsymbol{\varepsilon})=\dfrac{1}{3\kappa}\,\mathrm{tr}(\boldsymbol{\sigma})$, provided $\kappa\neq 0$. Taking the deviatoric part, one finds $\mathrm{dev}\,\boldsymbol{\sigma}=2\mu\,\mathrm{dev}\,\boldsymbol{\varepsilon}$, and hence $\mathrm{dev}\,\boldsymbol{\varepsilon}=\dfrac{1}{2\mu}\,\mathrm{dev}\,\boldsymbol{\sigma}$, provided $\mu\neq 0$. Therefore, both $\mathrm{tr}(\boldsymbol{\varepsilon})$ and $\mathrm{dev}\,\boldsymbol{\varepsilon}$ are uniquely determined by $\boldsymbol{\sigma}$. Using the decomposition $\boldsymbol{\varepsilon}=\mathrm{dev}\,\boldsymbol{\varepsilon}+\dfrac{1}{3}\,\mathrm{tr}(\boldsymbol{\varepsilon})\,\mathbf{g}$, one concludes that $\boldsymbol{\varepsilon}$ is uniquely determined by $\boldsymbol{\sigma}$.
\end{remark}

\begin{remark}[Non-invertibility of $\mathbf{P}=\hat{\mathbf{P}}(\mathbf{F},\mathbf{G},\mathbf{g})$ with respect to $\mathbf{F}$]
Let us consider a Cauchy elastic solid for which $\mathbf{P}=\hat{\mathbf{P}}(\mathbf{F},\mathbf{G},\mathbf{g})$. It is a classic result that constitutive equations cannot be invertible when written with respect to the deformation gradient \citep{Schweickert2018}.
Objectivity of $\hat{\mathbf{P}}$ implies that for every $\mathbf{Q}\in SO(3)$, one has
\begin{equation}
	\hat{\mathbf{P}}(\mathbf{Q}\mathbf{F},\mathbf{G},\mathbf{g})
	=\mathbf{Q}\hat{\mathbf{P}}(\mathbf{F},\mathbf{G},\mathbf{g})\,.
\end{equation}
Assume now, for the sake of contradiction, that $\hat{\mathbf{P}}$ is injective in $\mathbf{F}$ for fixed $\mathbf{G}$ and $\mathbf{g}$. Then
\begin{equation}
	\hat{\mathbf{P}}(\mathbf{F}_{1},\mathbf{G},\mathbf{g})
	=\hat{\mathbf{P}}(\mathbf{F}_{2},\mathbf{G},\mathbf{g})
	\quad\Longrightarrow\quad
	\mathbf{F}_{1}=\mathbf{F}_{2}\,.
\end{equation}
Consider the stress-free reference configuration corresponding to the inclusion map $\iota:\mathcal{B}\hookrightarrow\mathcal{S}$, and denote its tangent map by $\mathbf{I}=T\iota$. Assume that $\hat{\mathbf{P}}(\mathbf{I},\mathbf{G},\mathbf{g}) =\mathbf{0}$. Using material frame indifference with $\mathbf{F}=\mathbf{I}$, for any $\mathbf{Q}\in SO(3)$, one obtains
\begin{equation}
	\hat{\mathbf{P}}(\mathbf{Q},\mathbf{G},\mathbf{g})
	=\hat{\mathbf{P}}(\mathbf{Q}\mathbf{I},\mathbf{G},\mathbf{g})
	=\mathbf{Q}\hat{\mathbf{P}}(\mathbf{I},\mathbf{G},\mathbf{g})
	=\mathbf{Q}\mathbf{0}
	=\mathbf{0}
	=\hat{\mathbf{P}}(\mathbf{I},\mathbf{G},\mathbf{g})\,.
\end{equation}
Since $\hat{\mathbf{P}}$ is injective in $\mathbf{F}$, it follows that $\mathbf{Q}=\mathbf{I}$, $\forall\,\mathbf{Q}\in SO(3)$, which is absurd. Therefore, a constitutive equation satisfying material frame indifference cannot be injective in $\mathbf{F}$. In particular, $\mathbf{P}=\hat{\mathbf{P}}(\mathbf{F},\mathbf{G},\mathbf{g})$ cannot be inverted uniquely to recover $\mathbf{F}$.
\end{remark}

When written in terms of the Cauchy stress, constitutive equations may be invertible with respect to the strain measure $\mathbf{b}^\sharp$, i.e., $\boldsymbol{\sigma}=\hat{\boldsymbol{\sigma}}(\mathbf{b}^\sharp,\mathbf{g})$ is invertible. In such cases, a uniform state of Cauchy stress uniquely determines $\mathbf{b}^\sharp$. Consequently, a uniform state of Cauchy stress corresponds to a unique homogeneous deformation up to isometries of the ambient space.

\begin{example}
Consider the standard compressible neo-Hookean constitutive equation written in terms of the Cauchy stress:
\begin{equation}
	\boldsymbol{\sigma}
	=\frac{\mu}{J}\,(\mathbf{b}^\sharp-\mathbf{g}^\sharp)
	+\frac{\kappa \ln J}{J}\,\mathbf{g}^\sharp\,,
\end{equation}
where $J>0$ and $\det\mathbf{b}^\sharp=J^2(\det\mathbf{g})^{-1}$. Unlike linear elasticity, this is not an affine relation in $\mathbf{b}^\sharp$. Even if $\boldsymbol{\sigma}$ is prescribed, the unknown $\mathbf{b}^\sharp$ appears both explicitly and implicitly through $J$, since $J$ itself depends on $\mathbf{b}^\sharp$. Rearranging the constitutive equation, one obtains
\begin{equation}
	\mathbf{b}^\sharp
	=\frac{J}{\mu}\,\boldsymbol{\sigma}
	+\left(1-\frac{\kappa}{\mu}\ln J\right)\mathbf{g}^\sharp\,.
\end{equation}
This is not yet a solution for $\mathbf{b}^\sharp$, because $J=J(\mathbf{b}^\sharp,\mathbf{g})$. Substituting this expression into the determinant relation leads to a nonlinear scalar equation for $J$. In general, it is not clear a priori whether this equation has one solution, several solutions, or none. Thus, in the general case, global invertibility of the constitutive mapping $\mathbf{b}^\sharp\mapsto\boldsymbol{\sigma}$ is a genuinely nontrivial question.
To see concretely what can happen, let us restrict attention to purely dilatational deformations and assume $\mathbf{b}^\sharp=\alpha\,\mathbf{g}^\sharp$, $\alpha>0$. Then, $\det\mathbf{b}^\sharp =\alpha^3 \det\mathbf{g}^\sharp =\alpha^3(\det\mathbf{g})^{-1}$. Comparing with $\det\mathbf{b}^\sharp=J^2/\det\mathbf{g}$, one obtains $J=\alpha^{\frac{3}{2}}$. Substituting this into the constitutive equation gives us
\begin{equation}
	\boldsymbol{\sigma}
	=\frac{\mu}{J}\,(\alpha\,\mathbf{g}^\sharp-\mathbf{g}^\sharp)
	+\frac{\kappa \ln J}{J}\,\mathbf{g}^\sharp
	=f(\alpha)\,\mathbf{g}^\sharp\,,\qquad
	f(\alpha)=\frac{\mu(\alpha-1)+\frac{3\kappa}{2}\ln\alpha}{\alpha^{\frac{3}{2}}}\,.
\end{equation}
Thus, along purely dilatational deformations, the question of global invertibility reduces to the injectivity of the scalar function $\alpha\mapsto f(\alpha)$. Now observe that
\begin{equation}
	\lim_{\alpha\to 0^+} f(\alpha)=-\infty\,,
	\qquad f(1)=0\,, \qquad \lim_{\alpha\to\infty} f(\alpha)=0^+\,.
\end{equation}
Moreover,
\begin{equation}
	f'(1)=\frac{2\mu+3\kappa}{2}>0\,.
\end{equation}
Hence $f(\alpha)$ is increasing near $\alpha=1$, but tends to $0^+$ as $\alpha\to\infty$. Therefore, it must attain a positive maximum at some $\alpha_\ast>1$. Consequently, for any $f$ satisfying $0<f<f(\alpha_\ast)$, there exist two distinct values $\alpha_1\neq \alpha_2$ such that $f(\alpha_1)=f(\alpha_2)=f$. Therefore,
\begin{equation}
	\hat{\boldsymbol{\sigma}}(\alpha_1\,\mathbf{g}^\sharp,\mathbf{g})
	= \hat{\boldsymbol{\sigma}}(\alpha_2\,\mathbf{g}^\sharp,\mathbf{g})\,,  \qquad	\alpha_1\neq \alpha_2\,,
\end{equation}
and hence the constitutive mapping $\mathbf{b}^\sharp\mapsto\boldsymbol{\sigma}$ is not globally injective. 
\end{example}

\subsection{Loading conditions for defining strength}

As we have seen, the notion of strength is based on considering homogeneous states of stress and identifying those that cause a homogeneous body to fracture. One may then ask how such a stress state is realized physically. Are only boundary tractions applied, or are body forces also involved? If one uses the first Piola--Kirchhoff stress as the stress measure in defining strength, and assumes quasistatic loading, then the balance of linear momentum reads $\operatorname{Div}\mathbf{P}+\rho_0\mathbf{B}=\mathbf{0}$, where $\rho_0$ and $\mathbf{B}$ are the material mass density and the body force per unit mass, respectively. Since $P^{aA}{}_{|B}=0$, one immediately has $P^{aA}{}_{|A}=0$, and hence $\mathbf{B}=\mathbf{0}$. Thus, in this setting, the definition of material strength corresponds to loading a homogeneous body only through boundary tractions.

\begin{remark}
It should be noted that a homogeneous state of stress need not be constitutively admissible. For compressible isotropic solids, one may therefore ask which stress states, in the absence of body forces, can be maintained for an arbitrary material in this class. This is precisely the question posed and answered by \citet{Carroll1973}. In analogy with Ericksen's universal deformations,\footnote{Universal deformations are deformations that can be maintained, in the absence of body forces, for every material within a given class \citep{Ericksen1954,Ericksen1955}. Generalizations of this notion to anisotropic solids, and in particular to fiber-reinforced solids, can be found in \citep{YavariGoriely2021,YavariGoriely2023Universal,Yavari2025Universal}.} we call such stress states \textit{universal stresses}. \citet{Carroll1973} showed that universal stresses must be homogeneous (he called them controllable stresses). The converse, however, does not hold: not every homogeneous stress state is constitutively admissible, as the corresponding strain field may be incompatible.
\end{remark}

\begin{remark}
From this point of view, the notion of strength is not tied to the existence of an explicit stress--strain map, but rather to the geometry of admissible stress states. The strength surface is the subset of constitutively admissible homogeneous stress states at which fracture first occurs. This viewpoint separates two distinct questions: the constitutive question of admissibility, determined by the elastic response, and the fracture question of which admissible stress states are critical. 
This distinction becomes particularly important for implicit constitutive theories, for materials with internal constraints, and, more generally, whenever the stress-strain relation is given implicitly.
\end{remark}

Let us consider a hyperelastic brittle solid with energy function $W=W(X,\mathbf{F},\mathbf{G},\mathbf{g})$. In defining the strength function, we restrict attention to uniform stress states produced solely by boundary tractions. This may have a connection with universal deformations. Stress and strain are related through the constitutive equation $\mathbf{P}= \mathbf{g}^\sharp\, \frac{\partial W}{\partial\mathbf{F}}$.
For a general Cauchy elastic solid we have the constitutive equation $\mathbf{P}=\hat{\mathbf{P}}(X,\mathbf{F},\mathbf{G},\mathbf{g})$.
One may then ask which uniform stress states can be maintained by applying only boundary tractions. A related question was posed by \citet{Ericksen1955}: for a homogeneous body made of an arbitrary compressible isotropic hyperelastic material, which deformations can be maintained by boundary tractions alone in the absence of body forces? Such deformations are called \textit{universal deformations}. Ericksen showed that, for this class of materials, the universal deformations are precisely the homogeneous deformations. However, the present problem is different. Here, we fix a single hyperelastic material and consider all constitutively admissible homogeneous stress states, and then identify those at which the brittle solid fractures. Thus, unlike the problem of universal deformations, the present problem is not to characterize all stress states or deformations sustainable for an entire class of materials, but to determine, for a fixed brittle hyperelastic material, the constitutively admissible homogeneous stress states that are critical for fracture. These critical states define the strength hypersurface.

\subsection{Strength function, strength hypersurface, and covariance} \label{Strength-Hypersurface}

Strength has traditionally been formulated as a stress-based criterion, where a particular stress measure is assumed to be uniform at fracture. However, once such a definition is adopted, expressing the same strength criterion in terms of a different stress measure generally requires knowledge of the deformation, and hence the  strain. Therefore, strain dependence cannot be avoided.
Purely stress-based criteria correspond to the special case where the strength function is independent of strain.

Let us assume that the strength hypersurface is written in terms of the first Piola stress as
\begin{equation}
	\mathsf{F}(\mathbf{P},\mathbf{g}\circ\varphi,\mathbf{G})=0\,.
\end{equation}
Note that the first Piola--Kirchhoff stress $\mathbf{P}$ is a two-point tensor with components $P^{aA}$. More precisely, it maps material covectors to spatial vectors, $\mathbf{P}:T_X^*\mathcal{B}\to T_{\varphi(X)}\mathcal{C}$. Since $\mathbf{P}$ connects objects in two different configurations, a scalar-valued function of $\mathbf{P}$ cannot be formed without the use of the material and spatial metrics to lower and contract indices. For this reason, the strength function must depend explicitly on both the material metric $\mathbf{G}$ and the spatial metric $\mathbf{g}\circ\varphi$ in addition to $\mathbf{P}$.
As a generalization of material-frame-indifference, we assume that the strength function is spatially covariant, i.e., it is left invariant under the action of the group $\operatorname{Diff}(\mathcal{S})$: $L_{\xi}\mathsf{F}=\mathsf{F}$. More specifically, this implies that
\begin{equation}
	\mathsf{F}(\xi_*\mathbf{P},\xi_*\mathbf{g},\mathbf{G})
	=\mathsf{F}\left((T\xi)\cdot\mathbf{P},(T\xi)^{-\star}\cdot\mathbf{g}\cdot(T\xi)^{-1},\mathbf{G}\right)
	=\mathsf{F}(\mathbf{P},\mathbf{g}\circ\varphi,\mathbf{G})
	\,.
\end{equation}
Similar to what was done for the energy function in \S\ref{Sec:Elasticity-Covariance}, in the above spatial covariance relation we may choose $\xi$ to be any diffeomorphic extension of $\varphi^{-1}$ to $\mathcal{S}$. Thus,\footnote{\citet{KumarLopezPamies2020} explicitly state that the strength surface must be a function of an objective stress measure and accordingly use the eigenvalues of the Biot stress.}
\begin{equation}
	\mathsf{F}(\mathbf{P},\mathbf{g}\circ\varphi,\mathbf{G})
	= \mathsf{F}\big(\mathbf{F}^{-1}\mathbf{P},\mathbf{F}^\star \mathbf{g}\,\mathbf{F},\mathbf{G}\big)
	= \hat{\mathsf{F}}(\mathbf{S},\mathbf{C}^\flat,\mathbf{G})
	\,.
\end{equation}
We observe that when a stress-based strength function is formulated in terms of $\mathbf{P}$, its representation in terms of $\mathbf{S}$ necessarily depends on $\mathbf{C}$ as well.

More generally, when using the first Piola--Kirchhoff stress $\mathbf{P}$, the strength function can depend on both $\mathbf{P}$ and deformation gradient: 
\begin{equation} \label{Material-Strength-PF}
	\mathsf{F}(\mathbf{P},\mathbf{F},\mathbf{g},\mathbf{G})=0\,.
\end{equation}
In this case, spatial covariance implies that
\begin{equation}
	\mathsf{F}(\mathbf{P},\mathbf{F},\mathbf{g},\mathbf{G}) 
	=\mathsf{F}(\varphi^*\mathbf{P},\varphi^*\mathbf{F},\varphi^*\mathbf{g},\mathbf{G})
	=\mathsf{F}(\mathbf{S},\mathrm{Id}_{T_X\mathcal{B}},\mathbf{C}^\flat,\mathbf{G})\,.
\end{equation}
Now the strength hypersurface in terms of second Piola--Kirchhoff stress is defined as\footnote{Note that $\mathbf{F}^{-1}\mathbf{F}=\operatorname{id}_{T_X\mathcal{B}}$ and $\mathbf{F}\mathbf{F}^{-1}=\operatorname{id}_{T_{\varphi(X)}\mathcal{C}}$.}
\begin{equation}
	\hat{\mathsf{F}}(\mathbf{S},\mathbf{C}^\flat,\mathbf{G})
	\coloneqq\mathsf{F}(\mathbf{S},\mathrm{Id}_{T_X\mathcal{B}},\mathbf{C}^\flat,\mathbf{G})=0\,.
\end{equation}
In what follows we assume that the strength hypersurface is written in terms of the second Piola--Kirchhoff stress as
\begin{equation}
	\hat{\mathsf{F}}(\mathbf{S},\mathbf{C}^\flat,\mathbf{G})=0\,.
\end{equation}

\begin{remark}
Provided that the derivative of $\hat{\mathsf{F}}$ with respect to $(\mathbf{S},\mathbf{C}^\flat)$ does not vanish on its zero set, the equation $\hat{\mathsf{F}}(\mathbf{S},\mathbf{C}^\flat,\mathbf{G})=0$ defines a hypersurface in the $12$-dimensional space of pairs $(\mathbf{S},\mathbf{C}^\flat)$, since both $\mathbf{S}$ and $\mathbf{C}^\flat$ are symmetric second-order tensors and hence each has six independent components. However, $\mathbf{S}$ and $\mathbf{C}^\flat$ are not independent, as they are related by constitutive equations. In implicit elasticity, one has constitutive equations of the form $\boldsymbol{\mathsf{f}}(\mathbf{S},\mathbf{C}^\flat,\mathbf{G})=\mathbf{0}$, which represent six scalar constraints. Under the regularity condition that the derivative of $\boldsymbol{\mathsf{f}}$ with respect to $(\mathbf{S},\mathbf{C}^\flat)$ has rank six, the regular level-set theorem implies that the constitutively admissible states form a $6$-dimensional submanifold $\mathcal{K}$ of the original $12$-dimensional space. The strength hypersurface should therefore be understood as the zero set of the restriction $\hat{\mathsf{F}}|_{\mathcal{K}}$. Provided that the derivative of this restriction does not vanish on its zero set, the strength hypersurface is a $5$-dimensional submanifold of $\mathcal{K}$. Thus, even in implicit elasticity, the notion of strength is attached not to the full space of stress--strain pairs, but to the subset of pairs that are constitutively admissible.
\end{remark}

Recall that
\begin{equation}
	\varphi_*\mathbf{S}=J\boldsymbol{\sigma},\qquad
	\varphi_*\mathbf{C}^\flat=\mathbf{g},\qquad
	\varphi_*\mathbf{G}=\mathbf{c}^\flat\,,
\end{equation}
where $\boldsymbol{\sigma}$ is the Cauchy stress, $\mathbf{g}$ is the spatial metric, and $\mathbf{c}^\flat$ is the spatial metric induced on the body. Let us push forward the strength function to the current configuration (spatial covariance). Covariance of the strength function implies that
\begin{equation}
	\varphi_*\hat{\mathsf{F}}(\mathbf{S},\mathbf{C}^\flat,\mathbf{G})
	= \hat{\mathsf{F}}(\varphi_*\mathbf{S},\varphi_*\mathbf{C}^\flat,\varphi_*\mathbf{G})
	= \hat{\mathsf{F}}(J\boldsymbol{\sigma},\mathbf{g},\mathbf{c}^\flat)\,.
\end{equation}
Thus, the spatial representation of the strength function is given by
\begin{equation} \label{Spatial-Strength-Function}
	\mathsf{f}(\boldsymbol{\sigma},\mathbf{c}^\flat,\mathbf{g})
	\coloneqq \hat{\mathsf{F}}(J\boldsymbol{\sigma},\mathbf{g},\mathbf{c}^\flat)\,.
\end{equation}

In much of the literature, material strength is assumed to be described by a stress-based criterion. However, this assumption is tied to the particular stress measure being used. For example, suppose that the strength hypersurface is described in terms of the Cauchy stress by $\mathsf{f}(\boldsymbol{\sigma},\mathbf{g})=0$. Since the various stress measures are related through the deformation, or equivalently through the strain, any change of stress measure is necessarily accompanied by the corresponding strain measure. More precisely, assuming that $\mathsf{f}$ is independent of $\mathbf{c}^\flat$, from
$\mathsf{f}(\boldsymbol{\sigma},\mathbf{c}^\flat,\mathbf{g})=
\hat{\mathsf{F}}(J\boldsymbol{\sigma},\mathbf{g},\mathbf{c}^\flat)$,
one concludes that $\hat{\mathsf{F}}=\hat{\mathsf{F}}(J\boldsymbol{\sigma},\mathbf{g})$. Pulling this back to the reference configuration and using covariance, one obtains
$\varphi^*\hat{\mathsf{F}}(J\boldsymbol{\sigma},\mathbf{g})=
\hat{\mathsf{F}}(\mathbf{S},\mathbf{C}^\flat)$.
Similarly, if one assumes that $\hat{\mathsf{F}}=\hat{\mathsf{F}}(\mathbf{S},\mathbf{G})$, then one concludes that $\mathsf{f}=\mathsf{f}(\boldsymbol{\sigma},\mathbf{c}^\flat)$.
Thus, a criterion that is stress-based when written in terms of the Cauchy stress will, in general, depend on both stress and strain when written in terms of the second Piola--Kirchhoff stress. 
In this sense, once a strength function is fixed in terms of a given stress measure, rewriting it in terms of another stress measure generally introduces dependence on the corresponding strain.

\begin{remark}
Experimental evidence for viscoelastic elastomers shows that strength depends on both stress and strain, or equivalently, on stress and deformation \citep{Smith1958,Smith1963,Smith1964a,Smith1964b,Knauss1967}. In particular, the experimental results show that failure cannot, in general, be characterized by a unique critical stress, a unique critical strain, or a unique critical strain-energy density. Rather, the critical stress-strain pair along the loading path is what determines fracture. Consistent with this experimental evidence, \citet{KamareiBreedloveLopezPamies2025} used a strength function depending on both stress and strain in modeling fracture of viscoelastic elastomers. 
Here, we have shown that, even for elastic solids, a strength criterion expressed solely in terms of one stress measure generally acquires strain dependence when rewritten using another stress measure.
\end{remark}

\begin{example}
The Drucker--Prager strength function reads \citep{KumarBourdinFrancfortLopezPamies2020}
\begin{equation} \label{Drucker-Prager-Strength-Function}
	\mathcal{F}(\mathbf{S})
	=\sqrt{\mathcal{J}_2}
	+\frac{s_{\mathrm{c}s}-s_{\mathrm{t}s}}{\sqrt{3}\left(s_{\mathrm{c}s}+s_{\mathrm{t}s}\right)}
	\,\mathcal{I}_1
	-\frac{2s_{\mathrm{c}s}s_{\mathrm{t}s}}{\sqrt{3}\left(s_{\mathrm{c}s}+s_{\mathrm{t}s}\right)}
	=0\,,
\end{equation}
where $s_{\mathrm{t}s}$ and $s_{\mathrm{c}s}$ denote the uniaxial tensile and compressive strengths, respectively. The invariants $\mathcal{I}_1$ and $\mathcal{J}_2$ are defined in terms of the principal values $\beta_1,\beta_2,\beta_3$ of the Biot stress by
\begin{equation}
	\mathcal{I}_1 = \beta_1+\beta_2+\beta_3\,,\qquad
	\mathcal{J}_2 = \frac{1}{3}\left[ (\beta_1+\beta_2+\beta_3)^2-\beta_1^2-\beta_2^2-\beta_3^2 \right]
	=\frac{1}{6}\Big[	(\beta_1-\beta_2)^2	+(\beta_2-\beta_3)^2	+(\beta_3-\beta_1)^2	\Big]\,.
\end{equation}
For isotropic solids, recall that the eigenvalues of the Biot stress are related to those of the Cauchy and second Piola--Kirchhoff stresses as given in \eqref{Eigenvalues-sigma-B-S-iso}. Thus, the invariants may be written in terms of the principal values of the Cauchy stress as
\begin{equation}
\begin{aligned}
	\mathcal{I}_1
	&= \lambda_2\lambda_3\,\sigma_1
	+ \lambda_1\lambda_3\,\sigma_2
	+ \lambda_1\lambda_2\,\sigma_3\,,\\
	\mathcal{J}_2
	&= \frac{1}{3}\Big[
	\big(
	\lambda_2\lambda_3\,\sigma_1
	+ \lambda_1\lambda_3\,\sigma_2
	+ \lambda_1\lambda_2\,\sigma_3
	\big)^2 
	- (\lambda_2\lambda_3\,\sigma_1)^2
	- (\lambda_1\lambda_3\,\sigma_2)^2
	- (\lambda_1\lambda_2\,\sigma_3)^2
	\Big]\,.
\end{aligned}
\end{equation}
Similarly, in terms of the principal values of the second Piola--Kirchhoff stress one has
\begin{equation}
\begin{aligned}
	\mathcal{I}_1
	&= \lambda_1 S_1 + \lambda_2 S_2 + \lambda_3 S_3\,,\\
	\mathcal{J}_2
	&= \frac{1}{3}\Big[
	(\lambda_1 S_1 + \lambda_2 S_2 + \lambda_3 S_3)^2
	- (\lambda_1 S_1)^2
	- (\lambda_2 S_2)^2
	- (\lambda_3 S_3)^2
	\Big]\,.
\end{aligned}
\end{equation}
This example shows that a strength function expressed in terms of the Biot stress can be written in terms of the Cauchy stress or the second Piola--Kirchhoff stress only at the expense of introducing explicit dependence on the principal stretches. Thus, when the strength surface is taken to be a function of the Biot stress, it is independent of deformation only in that representation, while its representations in terms of $\boldsymbol{\sigma}$ or $\mathbf{S}$ are deformation dependent.
\end{example}

\begin{remark}
The preceding covariance argument should not be understood as excluding strength criteria that depend only on stress. One may postulate a strength function in terms of a particular stress measure and suppress any explicit strain dependence, as is commonly done in applications. However, when the same physical criterion is rewritten in terms of another stress measure, the transformation between the two stress measures generally introduces the corresponding strain. Thus, stress-only criteria are admissible special representations, while the stress--strain pair provides the natural general domain for a formulation that is independent of the particular stress measure chosen. The covariance argument establishes this general transformation property; it does not assert that every strength function must depend nontrivially on both stress and strain in every representation.
\end{remark}

\begin{remark}
Spatial covariance allows one to pass from a material representation of constitutive relations in terms of $(\mathbf{S},\mathbf{C}^\flat)$ to a spatial representation in terms of $(\boldsymbol{\sigma},\mathbf{c}^\flat)$ via push-forward, where $\mathbf{c}^\flat=\varphi_*\mathbf{G}$ is the push-forward of the material metric. One may equivalently use $(\boldsymbol{\sigma},\mathbf{b}^\sharp)$, since $\mathbf{c}=\mathbf{b}^{-1}$.
However, neither $\mathbf{c}$ nor $\mathbf{b}$ is work-conjugate to the Cauchy stress $\boldsymbol{\sigma}$.\footnote{As a matter of fact, $\boldsymbol{\tau}=J \boldsymbol{\sigma}$ is work conjugate to Hencky’s logarithmic strain \citep{Gurtin1983,Hoger1986,Xiao1997,Neff2020Axiomatic}.} This is in contrast with the pairs $(\mathbf{P},\mathbf{F})$ and $(\mathbf{S},\mathbf{C}^\flat)$, which are work-conjugate by construction. There is no inconsistency here. Covariance is a geometric requirement: it prescribes how tensor fields transform under changes of configuration via push-forward and pull-back operations. Work-conjugacy is an energetic requirement: it is dictated by the stress power identity and singles out pairs that preserve the bilinear pairing of power. In general, push-forward and pull-back operations do not preserve this bilinear pairing, and hence they do not preserve work-conjugacy. Consequently, while $(\boldsymbol{\sigma},\mathbf{c}^\flat)$ is natural from the standpoint of covariance, it is not a work-conjugate pair.
\end{remark}

\begin{remark}
There are strain-energy-based strength functions in the literature. \citet{Beltrami1885} proposed a total strain-energy criterion, in which failure occurs when the stored elastic energy reaches a critical value. A similar strength function has been adopted for elastomers, where nucleation in a flaw-free material is defined through a critical work of fracture \citep{chen2017}. However, such strength functions face two difficulties: (i) they are inconsistent with experimental observations, as failure under different multiaxial loading conditions does not occur at a unique value of the stored strain energy. In particular, for hard brittle solids, where a broad range of multiaxial stress states can be probed experimentally, the observed strength surface is generally not centro-symmetric \citep{KumarBourdinFrancfortLopezPamies2020}, and (ii) they cannot capture cavitation-like nucleation under dominant hydrostatic tensile stresses, as mentioned in the Introduction and discussed explicitly below in Example~\ref{Ex:Hydro-Static-Tension}.
The first limitation has been widely recognized in the literature. \citet{Huber1904} (see the English translation \citep{Huber2004}) formulated the distortion-energy criterion, proposing that failure is governed by the distortional (deviatoric) part of the strain energy rather than the total strain energy. The von Mises--Hencky theory \citep{Mises1913,Hencky1924} is a commonly utilized distortion-energy criterion. 
There are many other similar criteria in the literature, all based on the idea that only a part of the strain energy governs failure.
These criteria have been widely utilized in the computational modeling of fracture with continuum damage models \citep{mazars1986, mazars1989, comi2001, badel2007} and phase-field models \citep{amor2009, miehe2010}.
Strain-energy-based criteria typically do not agree with the experimental observations on strength in the entire stress space \citep{LopezPamies2025}. Nevertheless, all such strength functions are a special case of the general framework discussed here. Let us denote the relevant energy measure by $\mathbb{E}=\mathbb{E}(\mathbf{S},\mathbf{C}^\flat,\mathbf{G})$. Then the corresponding strength function has the form $\mathfrak{F}=\mathfrak{F}(\mathbb{E})$, and hence
\begin{equation}
	\hat{\mathsf{F}}(\mathbf{S},\mathbf{C}^\flat,\mathbf{G})
	=\mathfrak{F}\big(\mathbb{E}(\mathbf{S},\mathbf{C}^\flat,\mathbf{G})\big)
	\,.
\end{equation}
Thus, energy-based strength criteria are a special case of stress--strain-based strength criteria.
\end{remark}

\begin{remark}
The strength hypersurface represents a physical condition for fracture and therefore must be independent of the particular configuration used to describe the body. 
A deformation merely changes the representation of stress, strain, and metric tensors through push-forward and pull-back operations, but it does not alter the underlying physical state of the material.
Consequently, if a given stress state causes fracture when expressed in the reference configuration, the same physical state expressed in the current configuration must also correspond to fracture. This requirement implies that the strength function must be spatially covariant under the change of configuration induced by the deformation map. In other words, the functional relation defining the strength hypersurface must be preserved when the arguments are transformed by the natural push-forward and pull-back associated with the motion. Spatial covariance therefore expresses the representation-independence of the fracture criterion: it guarantees that the same set of physical failure states is described consistently in both the reference and the spatial configurations.
\end{remark}

\subsubsection{The manifold of constitutively admissible homogeneous stresses}

Let us assume that strength is defined in terms of uniform Cauchy stress $\boldsymbol{\sigma}$. It should be emphasized that not all homogeneous stress fields are constitutively admissible. First, note that the set of all symmetric $3\times 3$ matrices, $\mathrm{Sym}(3)$, is a $6$-dimensional real vector space and hence a smooth $6$-dimensional manifold. In the present explicit isotropic compressible setting, the set of constitutively admissible homogeneous stress fields is a subset of $\mathrm{Sym}(3)$, i.e., the set of all homogeneous stresses $\boldsymbol{\sigma}$ such that there exists a deformation $\varphi$ with associated $\mathbf{b}^\sharp$ satisfying the compatibility conditions and the constitutive relation $\boldsymbol{\sigma}=\hat{\boldsymbol{\sigma}}(\mathbf{b}^\sharp,\mathbf{g})$. Let us denote this set by $\mathcal{A}$. Note that the set $\mathcal{A}$ is non-empty. To show this, consider a homogeneous dilatational deformation $\varphi(X)=\alpha\,X$, $\alpha>0$. Then the deformation gradient is uniform with components $F^a{}_A=\alpha\,\delta^a_A$, and hence the left Cauchy--Green tensor is written as $\mathbf{b}^\sharp=\mathbf{F}\mathbf{G}^\sharp\mathbf{F}^\star=\alpha^{2}\,\mathbf{g}^{\sharp}$. The constitutive relation $\boldsymbol{\sigma}=\hat{\boldsymbol{\sigma}}(\mathbf{b}^\sharp,\mathbf{g})$ then implies that the corresponding Cauchy stress is uniform. Therefore, $\boldsymbol{\sigma}\in\mathcal{A}$, and hence $\mathcal{A}\neq\varnothing$.

\begin{prop}
Let $\hat{\boldsymbol{\sigma}}:\mathrm{Sym}^{+}(3)\to\mathrm{Sym}(3)$ be the constitutive map of an isotropic compressible elastic solid, and let
\begin{equation}
	\mathcal{A}=\hat{\boldsymbol{\sigma}}\big(\mathrm{Sym}^{+}(3)\big)\subset \mathrm{Sym}(3)\,,
\end{equation}
denote the set of constitutively admissible homogeneous Cauchy stresses. Suppose that, for every $\mathbf{b}^\sharp\in \mathrm{Sym}^{+}(3)$ under consideration, the derivative map
\begin{equation}
	D\hat{\boldsymbol{\sigma}}(\mathbf{b}^\sharp,\mathbf{g}):T_{\mathbf{b}^\sharp}\mathrm{Sym}^{+}(3)
	\to T_{\hat{\boldsymbol{\sigma}}(\mathbf{b}^\sharp,\mathbf{g})}\mathrm{Sym}(3)\,,
\end{equation}
has full rank $6$. Then $\mathcal{A}$ is a smooth $6$-dimensional manifold.
\end{prop}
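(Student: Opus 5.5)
The approach is to read the full-rank hypothesis through the inverse function theorem. Both $\mathrm{Sym}^{+}(3)$ and $\mathrm{Sym}(3)$ are $6$-dimensional. $\mathrm{Sym}^{+}(3)$ is an open subset of $\mathrm{Sym}(3)$, since positive definiteness is preserved under small perturbations (eigenvalues depend continuously on the matrix). Hence $\hat{\boldsymbol{\sigma}}(\cdot,\mathbf{g})$ is a smooth map between $6$-manifolds, and a derivative of rank $6$ is a linear isomorphism $T_{\mathbf{b}^\sharp}\mathrm{Sym}^{+}(3)\cong\mathrm{Sym}(3)\to\mathrm{Sym}(3)$. By the inverse function theorem, every $\mathbf{b}^\sharp$ has an open neighborhood $U$ such that $\hat{\boldsymbol{\sigma}}|_U$ is a diffeomorphism onto an open set $\hat{\boldsymbol{\sigma}}(U)\subset\mathrm{Sym}(3)$. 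Thus $\hat{\boldsymbol{\sigma}}$ is a local diffeomorphism, and in particular an open map.

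The key step is then immediate. The image $\mathcal{A}=\hat{\boldsymbol{\sigma}}(\mathrm{Sym}^{+}(3))=\bigcup_{\mathbf{b}^\sharp}\hat{\boldsymbol{\sigma}}(U_{\mathbf{b}^\sharp})$ is a union of open subsets of $\mathrm{Sym}(3)$, so it is itself open. An open subset of a smooth $6$-dimensional manifold (here a vector space) is a smooth $6$-dimensional embedded submanifold. Its smooth structure is simply the restriction of the linear chart of $\mathrm{Sym}(3)$. No patching of charts from different preimages is required, so this avoids the main potential obstacle.

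That obstacle is non-injectivity. As the compressible neo-Hookean example shows, $\hat{\boldsymbol{\sigma}}$ need not be globally injective, so a single $\boldsymbol{\sigma}\in\mathcal{A}$ can have several preimages $\mathbf{b}^\sharp_1\neq\mathbf{b}^\sharp_2$. One might therefore worry that the image is a "self-intersecting" immersed object rather than a manifold. Because the dimensions agree, this issue disappears: overlapping open images still have an open union, and the manifold structure is inherited from the ambient space rather than pushed forward from the domain. I would state explicitly that the inverse maps $(\hat{\boldsymbol{\sigma}}|_U)^{-1}$ give local strain representations but are not needed for the manifold structure of $\mathcal{A}$.

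Two remarks complete the plan. First, smoothness of $\hat{\boldsymbol{\sigma}}$ is tacitly assumed, as it is in the statement where $D\hat{\boldsymbol{\sigma}}$ appears. Second, isotropy and compatibility are not used beyond the identification of $\mathcal{A}$ with the image. Every $\mathbf{b}^\sharp\in\mathrm{Sym}^{+}(3)$ is realized by a homogeneous deformation, for example $\mathbf{F}=\sqrt{\mathbf{b}}$, which is trivially compatible. This justifies taking $\mathcal{A}$ to be the full image, consistent with the nonemptiness argument given just before the statement.
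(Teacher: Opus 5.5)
Your proposal is correct and matches the paper's proof: full rank makes the derivative a linear isomorphism, the inverse function theorem makes $\hat{\boldsymbol{\sigma}}$ a local diffeomorphism, and so $\mathcal{A}$ is open in $\mathrm{Sym}(3)$ and inherits its smooth $6$-dimensional structure. Writing $\mathcal{A}$ explicitly as a union of open sets, which needs no chart-patching even when $\hat{\boldsymbol{\sigma}}$ is not injective, is a slightly cleaner way to state the paper's ``locally open'' step.
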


\begin{proof}
Recall that $\mathrm{Sym}(3)$ is a $6$-dimensional smooth manifold, and that $\mathrm{Sym}^{+}(3)$, being an open subset of $\mathrm{Sym}(3)$, is also a $6$-dimensional smooth manifold. Since the derivative map
\begin{equation}
	D\hat{\boldsymbol{\sigma}}(\mathbf{b}^\sharp,\mathbf{g}):T_{\mathbf{b}^\sharp}\mathrm{Sym}^{+}(3)
	\to T_{\hat{\boldsymbol{\sigma}}(\mathbf{b}^\sharp,\mathbf{g})}\mathrm{Sym}(3)\,,
\end{equation}
has full rank $6$ for every $\mathbf{b}^\sharp\in\mathrm{Sym}^{+}(3)$ under consideration, it is an isomorphism. Equivalently, $\det D\hat{\boldsymbol{\sigma}}(\mathbf{b}^\sharp,\mathbf{g})\neq 0$.\footnote{A closely related non-degeneracy condition appears in \citep{Neff2025}, where the authors use the condition $\det D_{\mathbf B}\boldsymbol{\sigma}(\mathbf B)\neq 0$ and relate it to the invertibility of the associated tangent operator; see in particular their equations (1.10) and (2.69).} Therefore, by the inverse function theorem \citep{Lee2013}, for every $\mathbf{b}^\sharp\in\mathrm{Sym}^{+}(3)$ there exist open neighborhoods $\mathcal{U}_{\mathbf{b}^\sharp}\subset\mathrm{Sym}^{+}(3)$ and $\mathcal{V}_{\hat{\boldsymbol{\sigma}}(\mathbf{b}^\sharp,\mathbf{g})}\subset\mathrm{Sym}(3)$ such that the restriction
\begin{equation}
	\hat{\boldsymbol{\sigma}}\big|_{\mathcal{U}_{\mathbf{b}^\sharp}}:\mathcal{U}_{\mathbf{b}^\sharp}
	\to \mathcal{V}_{\hat{\boldsymbol{\sigma}}(\mathbf{b}^\sharp,\mathbf{g})}\,,
\end{equation}
is a diffeomorphism. In particular, for every $\mathbf{b}^\sharp\in\mathrm{Sym}^{+}(3)$, the image of a neighborhood of $\mathbf{b}^\sharp$ is an open subset of $\mathrm{Sym}(3)$. Hence, the image
\begin{equation}
	\mathcal{A}=\hat{\boldsymbol{\sigma}}\big(\mathrm{Sym}^{+}(3)\big)\subset\mathrm{Sym}(3)\,,
\end{equation}
is locally open in $\mathrm{Sym}(3)$. Equivalently, every point of $\mathcal{A}$ has a neighborhood, open in $\mathcal{A}$, that is also open in $\mathrm{Sym}(3)$. Since $\mathrm{Sym}(3)$ is a smooth $6$-dimensional manifold, it follows that $\mathcal{A}$ inherits the structure of a smooth $6$-dimensional manifold.
\end{proof}

This proposition shows that, under the non-degeneracy assumption $\det D\hat{\boldsymbol{\sigma}}(\mathbf{b}^\sharp,\mathbf{g})\neq 0$, the set of constitutively admissible homogeneous Cauchy stresses is not an arbitrary subset of $\mathrm{Sym}(3)$, but a smooth $6$-dimensional manifold. Thus, the strength hypersurface should be understood as a hypersurface in this constitutively admissible stress manifold. The non-degeneracy assumption is equivalent to local invertibility of the constitutive map $\mathbf{b}^\sharp\mapsto \hat{\boldsymbol{\sigma}}(\mathbf{b}^\sharp,\mathbf{g})$, and has a clear physical meaning: the elastic response is locally non-degenerate, in the sense that small changes in strain produce unique changes in stress, and there are no non-trivial nearby strain states that leave the stress unchanged. This may be viewed as a natural constitutive regularity condition for a reasonable elastic solid.

Assuming that the tangent map $D_{\mathbf{b}^\sharp}\boldsymbol{\sigma}$ is invertible is a strong local non-degeneracy condition, as it implies that no nontrivial perturbation of $\mathbf{b}^\sharp$ produces a vanishing increment of stress. In particular, it guarantees local uniqueness of the constitutive response.
Let $\Lin(T_X\mathcal{B},T_x\mathcal{C})$ denote the space of linear maps from $T_X\mathcal{B}$ to $T_x\mathcal{C}$, and $\Lin^+(T_X\mathcal{B},T_x\mathcal{C})$ denote the subset of orientation-preserving linear isomorphisms. An elastic energy $W(\mathbf{F},\mathbf{G},\mathbf{g})$ is said to be rank-one convex if, for every $\mathbf{F}\in \Lin^+(T_X\mathcal{B},T_x\mathcal{C})$ and every rank-one tensor $\mathbf{a}\otimes\mathbf{N}^\flat$, the function $t\mapsto W(\mathbf{F}+t\,\mathbf{a}\otimes\mathbf{N}^\flat,\mathbf{G},\mathbf{g})$ is convex for all $t$ for which $\mathbf{F}+t\,\mathbf{a}\otimes\mathbf{N}^\flat\in \Lin^+(T_X\mathcal{B},T_x\mathcal{C})$ \citep{Ball1976}. If $W$ is of class $C^2$, this is equivalent to the Legendre--Hadamard condition, or strong ellipticity, namely
\begin{equation}
	D^2 W(\mathbf{F},\mathbf{G},\mathbf{g}) 
	\left[\mathbf{a}\otimes\mathbf{N}^\flat,\mathbf{a}\otimes\mathbf{N}^\flat \right] 
	\geq 0\,,
	\qquad\forall\,\mathbf{a}\in T_x\mathcal{C}\setminus\{\mathbf{0}\}\,,\ 
	\forall\,\mathbf{N}^\flat\in T_X^*\mathcal{B}\setminus\{\mathbf{0}\}\,,
\end{equation}
and in components
\begin{equation}
	\frac{\partial^2 W}{\partial F^a{}_A \partial F^b{}_B}\,a^a \,N_A\, a^b \,N_B \geq 0\,,
	\qquad\forall\,\mathbf{a}\in T_x\mathcal{C}\setminus\{\mathbf{0}\}\,,\ 
	\forall\,\mathbf{N}^\flat\in T_X^*\mathcal{B}\setminus\{\mathbf{0}\}\,.
\end{equation}
Thus, strong ellipticity requires positivity only along rank-one directions.

The restriction to rank-one directions is not arbitrary. Rank-one perturbations correspond to deformation gradients that are compatible across surfaces and hence describe the simplest admissible discontinuities through the Hadamard jump condition. Loss of strong ellipticity along such directions signals loss of ellipticity, the onset of material instability, and the possible formation of microstructure. Thus, strong ellipticity and local invertibility of $D_{\mathbf{b}^\sharp}\boldsymbol{\sigma}$ are distinct constitutive conditions. Strong ellipticity concerns the strict positivity of the elasticity quadratic form on nonzero rank-one tensors, whereas invertibility of $D_{\mathbf{b}^\sharp}\boldsymbol{\sigma}$ concerns the local injectivity of the stress--strain relation. In general, neither condition implies the other. For a $C^2$ strain-energy function, rank-one convexity is equivalent to the nonnegativity of the elasticity quadratic form on rank-one tensors, while strong ellipticity requires strict positivity on every nonzero rank-one tensor. Strong ellipticity is a fundamental local stability requirement in elasticity.\footnote{The local invertibility of constitutive stress--strain relations and related monotonicity properties have been investigated extensively by \citet{Neff2015I,Neff2015II}. In particular, the true-stress--true-strain monotonicity condition considered therein implies invertibility of the corresponding stress--strain relation.}

\subsubsection{Constitutively admissible homogeneous stresses in isotropic solids}

For isotropic solids, all one needs for describing the strength properties are the principal stresses. For a hyperelastic solid the Cauchy principal stresses $(\sigma_1,\sigma_2,\sigma_3)$ are related to principal stretches $(\lambda_1,\lambda_2,\lambda_3)$ as \citep{Ogden1984}
\begin{equation}
	\sigma_i=\frac{\lambda_i}{\lambda_1\lambda_2\lambda_3}\,\frac{\partial W}{\partial \lambda_i}\,,\qquad i=1,2,3\,.
\end{equation}
Thus, one has the principal stress map
\begin{equation}
\begin{aligned}
	\hat{\boldsymbol{\sigma}}:\mathbb{R}_{+}^{3} &\to \mathbb{R}^{3} \\
	(\lambda_1,\lambda_2,\lambda_3) &\mapsto (\sigma_1,\sigma_2,\sigma_3)\,.
\end{aligned}
\end{equation}
Suppose that, for every $(\lambda_1,\lambda_2,\lambda_3)\in\mathbb{R}_{+}^{3}$ under consideration, the derivative map
\begin{equation}
	D\hat{\boldsymbol{\sigma}}(\lambda_1,\lambda_2,\lambda_3)
	:T_{(\lambda_1,\lambda_2,\lambda_3)}\mathbb{R}_{+}^{3}
	\to T_{\hat{\boldsymbol{\sigma}}(\lambda_1,\lambda_2,\lambda_3)}\mathbb{R}^{3}\,,
\end{equation}
has full rank $3$. Equivalently, the Jacobian matrix
\begin{equation}
	\left[\frac{\partial \sigma_i}{\partial \lambda_j}\right]\,,
\end{equation}
is invertible. This is precisely the condition that the map $(\lambda_1,\lambda_2,\lambda_3)\mapsto(\sigma_1,\sigma_2,\sigma_3)$ is locally invertible. By the inverse function theorem \citep{Lee2013}, its image is locally open in $\mathbb{R}^{3}$. Therefore, the set of constitutively admissible triples of principal stresses is a smooth $3$-dimensional manifold. Consequently, for isotropic solids, the strength hypersurface may be viewed equivalently as a hypersurface in this $3$-dimensional manifold of admissible principal stresses.
\begin{figure}[t!]
\centering
\includegraphics[width=0.5\textwidth]{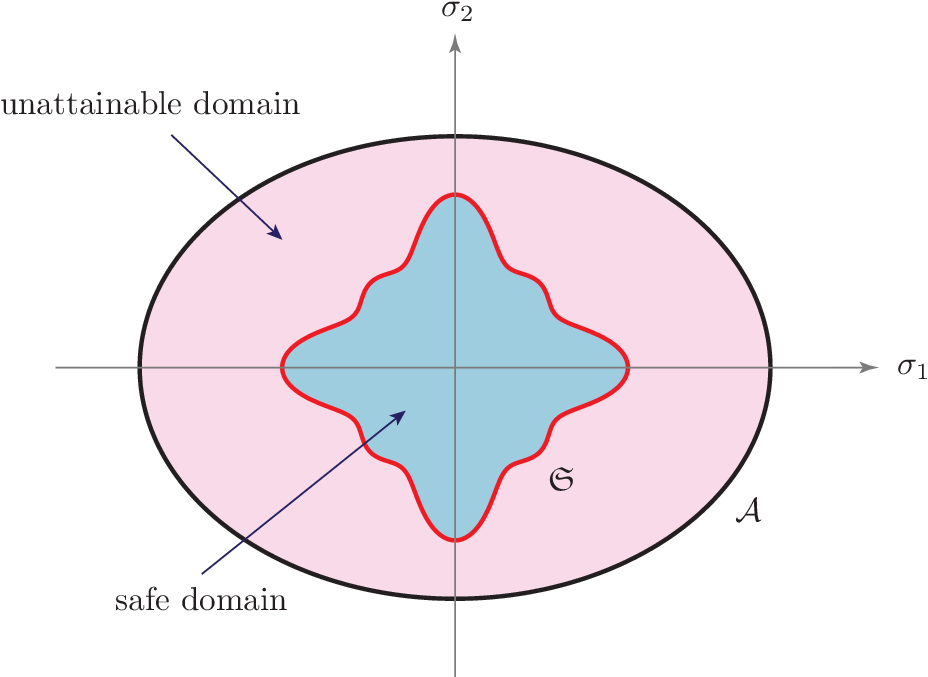}
\vspace*{0.10in}
\caption{Schematic illustration of the constitutively admissible stress manifold $\mathcal{A}$ (bounded by the black ellipse) and the strength hypersurface $\mathfrak{S}$ (red curve) in principal stress space. The strength hypersurface is a hypersurface of the constitutively admissible stress manifold and separates safe and unsafe stress states.}
\label{Strength-Schematics}
\end{figure}

\subsubsection{The strength hypersurface}

In the following we restrict attention to a subclass of strength functions that depend only on stress, since all existing examples in the literature belong to this class.
Let us consider a homogeneous solid in a state of spatially uniform Cauchy stress $\boldsymbol{\sigma}$. We use the Cauchy stress without loss of generality; the same conclusions hold if one starts with any other stress measure. Suppose the set of all uniform stresses under which the body fractures is described by the equation $\mathsf{f}(\boldsymbol{\sigma},\mathbf{g})=0$.\footnote{Note that a scalar function of a spatial second-order tensor necessarily involves the spatial metric $\mathbf{g}$ through index contractions.} Assume that $\mathsf{f}:\mathbb{R}^6\to\mathbb{R}$ is continuous and define the corresponding strength hypersurface:
\begin{equation} \label{Strength-Function}
	\mathfrak{S} =\left\{\boldsymbol{\sigma}\in \mathcal{A}
	:\mathsf{f}(\boldsymbol{\sigma},\mathbf{g})=0 \right\}\,.
\end{equation}

Any physical state of stress in the body is assumed to satisfy the constraint $\mathsf{f}(\boldsymbol{\sigma},\mathbf{g})\le 0$, and the set $\{\mathsf{f}<0\}$ is the set of \textit{safe stresses}. The domain $\{\mathsf{f}>0\}$ is assumed to be physically unattainable, see Fig.~\ref{Strength-Schematics}. Clearly, the domains $\{\mathsf{f}<0\}$ and $\{\mathsf{f}>0\}$ are disjoint. Consequently, the strength hypersurface $\mathfrak{S}$ partitions the stress space into two disjoint sets corresponding to safe and unattainable stresses. Note that every stress state on the strength hypersurface must be a threshold state separating safe stresses from failing stresses. In physical terms this means the following: Let $\boldsymbol{\sigma}_0\in\mathfrak{S}$ be a critical stress state satisfying $\mathsf{f}(\boldsymbol{\sigma}_0,\mathbf{g})=0$. Then arbitrarily small perturbations of the stress must produce both safe and failing states. More precisely, for every $\varepsilon>0$ there exist stresses $\boldsymbol{\sigma}_-\in\mathbb{R}^6$ and $\boldsymbol{\sigma}_+\in\mathbb{R}^6$ such that\footnote{We denote by $\|\cdot\|$ any norm on $\mathrm{Sym}(3)\cong\mathbb{R}^6$. Since $\mathrm{Sym}(3)$ is a finite-dimensional vector space, all norms are equivalent. Therefore, the particular choice of norm is immaterial, as it induces the same topology and hence the same notion of neighborhoods.}
\begin{equation}
	\|\boldsymbol{\sigma}_--\boldsymbol{\sigma}_0\|<\varepsilon,\qquad
	\|\boldsymbol{\sigma}_+-\boldsymbol{\sigma}_0\|<\varepsilon\,,
\end{equation}
with
\begin{equation}
	\mathsf{f}(\boldsymbol{\sigma}_-,\mathbf{g})<0\,,\qquad
	\mathsf{f}(\boldsymbol{\sigma}_+,\mathbf{g})>0\,.
\end{equation}
In other words, every neighborhood of $\boldsymbol{\sigma}_0$ contains stresses that do not fracture the material and stresses that do fracture the material. This expresses the physical idea that $\mathfrak{S}$ is a true threshold between safe and failing stresses.

\begin{prop}\label{prop:strength-surface-boundary}
Let $\mathsf{f}:\mathbb{R}^6\to\mathbb{R}$ be continuous and let $\mathfrak{S}$ be defined by $\mathsf{f}(\boldsymbol{\sigma},\mathbf{g})=0$. Then the sets $\{\mathsf{f}<0\}$ and $\{\mathsf{f}>0\}$ are open and disjoint, and
\begin{equation}
	\partial\{\mathsf{f}<0\}\subset\mathfrak{S}\,,\qquad \partial\{\mathsf{f}>0\}\subset\mathfrak{S}\,.
\end{equation}
If, in addition, the threshold property stated above holds for every $\boldsymbol{\sigma}_0\in\mathfrak{S}$, then
\begin{equation}
	\partial\{\mathsf{f}<0\}=\partial\{\mathsf{f}>0\}=\mathfrak{S}\,.
\end{equation}
\end{prop}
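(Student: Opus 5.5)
The proof is elementary point-set topology in $\mathbb{R}^6\cong\mathrm{Sym}(3)$. We regard $\mathsf{f}(\cdot,\mathbf{g})$ as a continuous function on $\mathbb{R}^6$ and define $\mathfrak{S}$ as its full zero set $\{\mathsf{f}=0\}$; this is how the threshold property is phrased, with $\boldsymbol{\sigma}_\pm\in\mathbb{R}^6$. If one insists on $\mathfrak{S}\subset\mathcal{A}$, the same argument runs inside the open set $\mathcal{A}$ with relative topology.

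\emph{Openness and disjointness.} The sets $\{\mathsf{f}<0\}=\mathsf{f}^{-1}((-\infty,0))$ and $\{\mathsf{f}>0\}=\mathsf{f}^{-1}((0,\infty))$ are preimages of open intervals under a continuous map, so they are open. They are disjoint because no real number is both negative and positive.

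\emph{Boundary inclusions.} Since $\{\mathsf{f}<0\}$ is open, its boundary equals $\overline{\{\mathsf{f}<0\}}\setminus\{\mathsf{f}<0\}$. The closed set $\{\mathsf{f}\le 0\}=\mathsf{f}^{-1}((-\infty,0])$ contains $\{\mathsf{f}<0\}$, so it contains its closure. Hence every boundary point $\boldsymbol{\sigma}$ satisfies $\mathsf{f}(\boldsymbol{\sigma})\le0$ but not $\mathsf{f}(\boldsymbol{\sigma})<0$, so $\mathsf{f}(\boldsymbol{\sigma})=0$ and $\boldsymbol{\sigma}\in\mathfrak{S}$. The argument for $\{\mathsf{f}>0\}$ is symmetric, using the closed set $\{\mathsf{f}\ge0\}$.

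\emph{Reverse inclusions under the threshold property.} Let $\boldsymbol{\sigma}_0\in\mathfrak{S}$. For every $\varepsilon>0$, the threshold property supplies $\boldsymbol{\sigma}_-$ in the $\varepsilon$-ball around $\boldsymbol{\sigma}_0$ with $\mathsf{f}(\boldsymbol{\sigma}_-)<0$. Since every neighborhood contains such a ball, $\boldsymbol{\sigma}_0\in\overline{\{\mathsf{f}<0\}}$. Moreover $\mathsf{f}(\boldsymbol{\sigma}_0)=0$, so $\boldsymbol{\sigma}_0\notin\{\mathsf{f}<0\}$, and therefore $\boldsymbol{\sigma}_0\in\partial\{\mathsf{f}<0\}$. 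Using $\boldsymbol{\sigma}_+$ in the same way gives $\boldsymbol{\sigma}_0\in\partial\{\mathsf{f}>0\}$. Combined with the first inclusions, this yields equality of all three sets.

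There is no substantial obstacle. The only point requiring care is consistency of the ambient topology. If $\mathfrak{S}$ is taken inside $\mathcal{A}$ while $\boldsymbol{\sigma}_\pm$ range over $\mathbb{R}^6$, the cleanest fix is to note that $\mathcal{A}$ is open (by the preceding proposition). Then, for $\boldsymbol{\sigma}_0\in\mathcal{A}$ and $\varepsilon$ small, the points $\boldsymbol{\sigma}_\pm$ lie in $\mathcal{A}$, so relative and ambient boundaries agree near $\mathfrak{S}$.
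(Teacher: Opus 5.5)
Your proof is correct and follows essentially the same route as the paper: openness via preimages of open intervals, the boundary inclusions via continuity, and the reverse inclusions directly from the threshold property. The one difference in presentation is that you replace the paper's two $\varepsilon$-ball contradictions with the observations $\overline{\{\mathsf{f}<0\}}\subset\{\mathsf{f}\le 0\}$ and $\partial U=\overline{U}\setminus U$ for open $U$; your remark about working relative to the open set $\mathcal{A}$ is a sensible clarification of the mismatch between $\mathfrak{S}\subset\mathcal{A}$ in the definition and the $\mathbb{R}^6$ setting of the statement, which the paper's proof passes over.
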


\begin{proof}
Since $\mathsf{f}$ is continuous, the preimages $\{\mathsf{f}<0\}=\mathsf{f}^{-1}((-\infty,0))$ and $\{\mathsf{f}>0\}=\mathsf{f}^{-1}((0,\infty))$ are open. They are disjoint because $\mathsf{f}$ cannot be simultaneously negative and positive.

To prove $\partial\{\mathsf{f}<0\}\subset\mathfrak{S}$, let $\boldsymbol{\sigma}_0\in\partial\{\mathsf{f}<0\}$. Suppose, for contradiction, that $\mathsf{f}(\boldsymbol{\sigma}_0)>0$. By continuity of $\mathsf{f}$ at $\boldsymbol{\sigma}_0$, there exists $\varepsilon>0$ such that $\mathsf{f}(\boldsymbol{\sigma},\mathbf{g})>0$ for all $\boldsymbol{\sigma}$ with $\|\boldsymbol{\sigma}-\boldsymbol{\sigma}_0\|<\varepsilon$. Hence, $B_\varepsilon(\boldsymbol{\sigma}_0)\cap\{\mathsf{f}<0\}=\emptyset$, where $B_\varepsilon(\boldsymbol{\sigma}_0)=\{\boldsymbol{\sigma}\in\mathbb{R}^n:\|\boldsymbol{\sigma}-\boldsymbol{\sigma}_0\|<\varepsilon\}$ denotes the open ball of radius $\varepsilon$ centered at $\boldsymbol{\sigma}_0$. This implies that $\boldsymbol{\sigma}_0\notin\partial\{\mathsf{f}<0\}$, a contradiction. Therefore $\mathsf{f}(\boldsymbol{\sigma}_0)\le0$.
Next suppose, again for contradiction, that $\mathsf{f}(\boldsymbol{\sigma}_0)<0$. By continuity, there exists $\varepsilon>0$ such that $\mathsf{f}(\boldsymbol{\sigma},\mathbf{g})<0$ whenever $\|\boldsymbol{\sigma}-\boldsymbol{\sigma}_0\|<\varepsilon$. Hence $B_\varepsilon(\boldsymbol{\sigma}_0)\subset\{\mathsf{f}<0\}$, so $\boldsymbol{\sigma}_0$ is an interior point of $\{\mathsf{f}<0\}$ and cannot lie in its boundary, again a contradiction. Thus $\mathsf{f}(\boldsymbol{\sigma}_0)=0$, i.e. $\boldsymbol{\sigma}_0\in\mathfrak{S}$, and therefore $\partial\{\mathsf{f}<0\}\subset\mathfrak{S}$. The inclusion $\partial\{\mathsf{f}>0\}\subset\mathfrak{S}$ is proved identically, with the inequalities reversed.

Finally, assume the threshold property. Let $\boldsymbol{\sigma}_0\in\mathfrak{S}$. Then every neighborhood of $\boldsymbol{\sigma}_0$ contains a point of $\{\mathsf{f}<0\}$ and a point of $\{\mathsf{f}>0\}$. Hence every neighborhood of $\boldsymbol{\sigma}_0$ intersects both $\{\mathsf{f}<0\}$ and its complement, so $\boldsymbol{\sigma}_0\in\partial\{\mathsf{f}<0\}$. Similarly, $\boldsymbol{\sigma}_0\in\partial\{\mathsf{f}>0\}$. Therefore $\mathfrak{S}\subset\partial\{\mathsf{f}<0\}$ and $\mathfrak{S}\subset\partial\{\mathsf{f}>0\}$. Combined with the previously proved inclusions $\partial\{\mathsf{f}<0\}\subset\mathfrak{S}$ and $\partial\{\mathsf{f}>0\}\subset\mathfrak{S}$, we conclude that $\partial\{\mathsf{f}<0\}=\partial\{\mathsf{f}>0\}=\mathfrak{S}$.
\end{proof}

\begin{example}
This example illustrates that, even for spatially uniform stress fields, constitutive admissibility imposes nontrivial constraints. In particular, we show that not every uniform stress field can be realized by a deformation.
The implicit constitutive equations for an isotropic solid are rewritten as \citep{RivlinEricksen1955,Morgan1966,YavariGoriely2024}
\begin{equation} \label{Implicit-Constitutive-Equation-Isotropic}
\begin{aligned}
	\boldsymbol{\mathsf{f}}(\boldsymbol{\sigma},\mathbf{b})
	& = \alpha_0\,\mathbf{g}^\sharp
	+\alpha_1 \boldsymbol{\sigma}+\alpha_2 \boldsymbol{\sigma}^2
	+\alpha_3\,\mathbf{b}^\sharp+\alpha_4\,\mathbf{c}^{\sharp} 
	+\alpha_5 \left(\boldsymbol{\sigma}\mathbf{b}^\sharp
	+\mathbf{b}^\sharp\boldsymbol{\sigma}\right)
	+\alpha_6 \left(\boldsymbol{\sigma}^2\mathbf{b}^\sharp
	+\mathbf{b}^\sharp\boldsymbol{\sigma}^2\right)\\
	& \quad
	+\alpha_7 \left(\boldsymbol{\sigma}\mathbf{c}^{\sharp}
	+\mathbf{c}^{\sharp}\boldsymbol{\sigma}\right)
	+\alpha_8 \left(\boldsymbol{\sigma}^2\mathbf{c}^{\sharp}
	+\mathbf{c}^{\sharp}\boldsymbol{\sigma}^2\right)
	=\mathbf{0}\,,
\end{aligned}
\end{equation}
where $\alpha_i$, $i=0,\hdots,8$, are functions of the following ten invariants 
\begin{equation} \label{I-Invariants-b}
\begin{aligned}
	& I_1=\operatorname{tr}\boldsymbol{\sigma}\,,\quad
	I_2=\operatorname{tr}\boldsymbol{\sigma}^2\,,\quad
	I_3=\operatorname{tr}\boldsymbol{\sigma}^3\,,\quad
	I_4=\operatorname{tr}\mathbf{b}^{\sharp} \,,\quad
	I_5=\frac{1}{2}\left[ I_4^2- \operatorname{tr}\mathbf{b}^{2\sharp}\right]\,,\quad
	I_6=\det\mathbf{b}^{\sharp} \,,\quad \\
	& I_7=\operatorname{tr}\left(\boldsymbol{\sigma}\mathbf{b}^{\sharp}\right)\,,\quad
	I_8=\operatorname{tr}\left(\boldsymbol{\sigma}\mathbf{b}^{2\sharp}\right)\,,\quad
	I_9=\operatorname{tr}\left(\boldsymbol{\sigma}^2\mathbf{b}^{\sharp}\right)\,,\quad
	I_{10}=\operatorname{tr}\left(\boldsymbol{\sigma}^2\mathbf{b}^{2\sharp}\right)
	\,.
\end{aligned}
\end{equation}
Suppose that the Cauchy stress field is uniform, and hence, its principal values $\sigma_i$, $i=1,2,3$, are constants throughout the body. Assume $\alpha_5=\alpha_6=\alpha_7=\alpha_8=0$, i.e.,
\begin{equation}
	\alpha_2 \boldsymbol{\sigma}^2+\alpha_1 \boldsymbol{\sigma}
	=-\alpha_0\,\mathbf{g}^\sharp-\alpha_3\,\mathbf{b}^\sharp-\alpha_4\,\mathbf{c}^{\sharp} \,.
\end{equation}
We ask whether there exists a deformation corresponding to this uniform stress field. 
Let $\lambda_i=\lambda_i(x)$, $i=1,2,3$, denote the principal stretches of the deformation at a point $x\in\mathcal{C}$.
We denote the eigenvalues of $\mathbf{b}^\sharp(x)$ by $\lambda_i^2(x)$.
For this class of materials, $\boldsymbol{\sigma}$ and $\mathbf{b}^\sharp$ have the same eigenvectors and can be diagonalized simultaneously.\footnote{It should be emphasized that for the more general class of elastic solids with constitutive equations \eqref{Implicit-Constitutive-Equation-Isotropic}, $\boldsymbol{\sigma}$ and $\mathbf{b}^\sharp$ do not have the same eigenvectors, in general.}
Thus, at each point $x\in\mathcal{C}$, with respect to the common eigenbasis one has
\begin{equation}
	\alpha_3\,\lambda_i^2(x)+\alpha_4\,\lambda_i^{-2}(x)
	= -\alpha_0 - \alpha_1 \sigma_i - \alpha_2 \sigma_i^2
	\,,\qquad i=1,2,3 \,.
\end{equation}
Multiplying by $\lambda_i^2(x)$, we obtain
\begin{equation}
	\alpha_3\,\lambda_i^4(x)
	+\left(\alpha_0 + \alpha_1 \sigma_i + \alpha_2 \sigma_i^2\right)\lambda_i^2(x)
	+\alpha_4 = 0
	\,,\qquad i=1,2,3 \,.
\end{equation}
Note that the coefficients $\alpha_j=\alpha_j(I_1,\hdots,I_{10})$ are not constants in general, because the invariants $I_4,\hdots,I_{10}$ depend on $\mathbf{b}^\sharp(x)$ and hence on the position-dependent stretches $\lambda_i(x)$. Therefore, the above equations are not quadratic equations with constant coefficients for $\lambda_i^2(x)$; rather, they form a coupled nonlinear algebraic system that must be satisfied pointwise. More precisely, for each point $x\in\mathcal{C}$ one must solve
\begin{equation}
	\Psi_i \big(\lambda_1^2(x),\lambda_2^2(x),\lambda_3^2(x);\sigma_1,\sigma_2,\sigma_3 \big)=0
	\,,\qquad i=1,2,3 \,,
\end{equation}
where
\begin{equation}
	\Psi_i
	:=\alpha_3\,\lambda_i^4
	+\left(\alpha_0 + \alpha_1 \sigma_i + \alpha_2 \sigma_i^2\right)\lambda_i^2
	+\alpha_4 \,,
\end{equation}
and each $\alpha_j$ is evaluated at the invariants \eqref{I-Invariants-b}. Thus, constitutive admissibility of a uniform stress field requires that, at every point $x\in\mathcal{C}$, this nonlinear algebraic system admit a real positive solution $(\lambda_1^2(x),\lambda_2^2(x),\lambda_3^2(x))$.
If, for a given uniform stress field $\boldsymbol{\sigma}$, there exists a point $x\in\mathcal{C}$ such that the above system has no real positive solution, then no deformation can realize this stress field. Hence, not every uniform stress field is constitutively admissible.
A concrete subclass is obtained by assuming that $\alpha_i$, $i=0,\hdots,4$, are constants and that $\alpha_1=\alpha_2=0$, $\alpha_3=\alpha_4=1$, and $\alpha_0=-c$, where $c$ is a constant. In this case, the constitutive equations reduce to
\begin{equation}
	\lambda_i^2(x)+\lambda_i^{-2}(x)=c	\,,\qquad i=1,2,3 \,.
\end{equation}
Multiplying by $\lambda_i^2(x)$, one obtains
\begin{equation}
	\lambda_i^4(x)-c\,\lambda_i^2(x)+1=0	\,,\qquad i=1,2,3 \,.
\end{equation}
This quadratic equation admits real positive solutions for $\lambda_i^2(x)$ if and only if $c^2-4\geq 0$, i.e., if and only if $c\geq 2$ or $c\leq -2$. Since $\lambda_i^2(x)+\lambda_i^{-2}(x)\geq 2$ for every positive $\lambda_i^2(x)$, only the case $c\geq 2$ is physically admissible. Therefore, if $0<c<2$, there is no real positive solution for $\lambda_i^2(x)$ at any point $x\in\mathcal{C}$. Thus, for this subclass, no deformation can realize such a uniform stress field, and hence not every uniform stress field is constitutively admissible.
\end{example}

\subsubsection{Star-shapedness of the safe domain}

Next, we show that, under a natural assumption, the safe domain is star-shaped with respect to the zero-stress state (the origin in stress space).

\begin{defi}[Star-shaped domain]
Let $\Omega\subset\mathbb{R}^n$ with $n\in\{2,3\}$. We say that $\Omega$ is \textit{star-shaped with respect to the origin} if $\mathbf{0}\in\Omega$ and if for every $\boldsymbol{x}\in\Omega$ and every $t\in[0,1]$ one has $t\boldsymbol{x}\in\Omega$ \citep{Rockafellar1970}. Equivalently, for every $\boldsymbol{x}\in\Omega$ the line segment $\{t\boldsymbol{x}:t\in[0,1]\}$ is contained in $\Omega$.
In $n=2$ we write $\boldsymbol{x}=(\sigma_1,\sigma_2)$, while in $n=3$ we write $\boldsymbol{x}=(\sigma_1,\sigma_2,\sigma_3)$.
\end{defi}

\begin{defi}[Proportional reduction safety]
Let $\mathsf{f}:\mathbb{R}^n\to\mathbb{R}$ and let $\Omega=\{\boldsymbol{\sigma}\in\mathbb{R}^n:\mathsf{f}(\boldsymbol{\sigma},\mathbf{g})<0\}$ denote the safe domain. We say that $\mathsf{f}$ satisfies the \textit{proportional reduction safety property} if for every safe stress $\boldsymbol{\sigma}\in\Omega$ and every $t\in[0,1)$, the proportionally reduced stress $t\boldsymbol{\sigma}$ is also safe, i.e.,
\begin{equation}\label{eq:prop-red-safe}
	\mathsf{f}(\boldsymbol{\sigma},\mathbf{g})<0
	~\Rightarrow~
	\mathsf{f}(t\boldsymbol{\sigma},\mathbf{g})<0\,,
	\qquad \forall\,\boldsymbol{\sigma}\in\mathbb{R}^n\,,\ \forall\,t\in[0,1)\,.
\end{equation}
\end{defi}

\begin{prop}\label{prop:star-shaped-safe}
Let $\mathsf{f}:\mathbb{R}^n\to\mathbb{R}$ with $n\in\{2,3\}$ be a (continuous) strength function, and let $\Omega$ be its safe domain. If $\mathsf{f}$ satisfies the proportional reduction safety property, then $\Omega$ is star-shaped with respect to the origin.
\end{prop}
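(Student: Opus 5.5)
The plan is to verify directly the two requirements in the definition of a star-shaped domain: that $\mathbf{0}\in\Omega$, and that $t\boldsymbol{x}\in\Omega$ for every $\boldsymbol{x}\in\Omega$ and $t\in[0,1]$. Both should follow from the proportional reduction safety property \eqref{eq:prop-red-safe} with essentially no further input. Continuity of $\mathsf{f}$ is not really needed. Its only role is to guarantee, as in Proposition~\ref{prop:strength-surface-boundary}, that $\Omega=\{\mathsf{f}<0\}$ is open, so that the star-shaped safe domain is an open region bounded by $\mathfrak{S}$.

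First, the segment condition. I fix $\boldsymbol{x}\in\Omega$, so that $\mathsf{f}(\boldsymbol{x},\mathbf{g})<0$, and split by the value of $t$:
\begin{itemize}
\item For $t\in[0,1)$, \eqref{eq:prop-red-safe} applied with $\boldsymbol{\sigma}=\boldsymbol{x}$ gives $\mathsf{f}(t\boldsymbol{x},\mathbf{g})<0$, i.e., $t\boldsymbol{x}\in\Omega$.
\item For $t=1$, the point $t\boldsymbol{x}=\boldsymbol{x}$ lies in $\Omega$ by hypothesis.
\end{itemize}
Hence the whole segment $\{t\boldsymbol{x}:t\in[0,1]\}$ is contained in $\Omega$.

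Second, membership of the origin. I would take the case $t=0$ in the previous step: for any $\boldsymbol{x}\in\Omega$ this yields $\mathsf{f}(\mathbf{0},\mathbf{g})<0$, so $\mathbf{0}\in\Omega$.

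The only genuine obstacle is that this last step needs $\Omega\neq\varnothing$. If $\Omega$ were empty, the reduction property would hold vacuously, yet $\mathbf{0}\notin\Omega$, so the conclusion would fail. I would therefore state explicitly that the safe domain is assumed nonempty, or, equivalently and more physically, that the stress-free state is safe, $\mathsf{f}(\mathbf{0},\mathbf{g})<0$. This is implicit in the setting: the undeformed, stress-free reference configuration does not fracture. With this understood, the proof is complete.
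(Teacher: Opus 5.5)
Your proposal is correct and follows the same route as the paper: apply the proportional reduction safety property for $t\in[0,1)$ and use $\boldsymbol{x}\in\Omega$ itself for $t=1$. Your extra check that $\mathbf{0}\in\Omega$ follows from the case $t=0$ only when $\Omega\neq\varnothing$ is also right, and it covers a point the paper's proof leaves implicit, since the paper elsewhere simply asserts that the stress-free state is safe.
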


\begin{proof}
Let $\boldsymbol{\sigma}\in\Omega$. Since $\boldsymbol{\sigma}\in\Omega$, one has $\mathsf{f}(\boldsymbol{\sigma},\mathbf{g})<0$. Hence, by the proportional reduction safety property, $t\boldsymbol{\sigma}\in\Omega$ for every $t\in[0,1)$. Since also $\boldsymbol{\sigma}\in\Omega$, it follows that $t\boldsymbol{\sigma}\in\Omega$ for every $t\in[0,1]$. Thus, $\Omega$ is star-shaped with respect to the origin.
\end{proof}

One should note that this conclusion cannot be deduced from continuity of $\mathsf{f}$ alone. Star-shapedness encodes a constitutive requirement on strength: if a stress state is safe, then every proportionally reduced stress state along the segment joining that state to the origin must also be safe. The proportional reduction safety property is a mathematical expression of this requirement.

\begin{remark}
Every convex domain is star-shaped with respect to each of its points. The converse is false: a star-shaped domain need not be convex. It should be emphasized that star-shapedness is a much weaker condition than convexity. Convexity is therefore not a consequence of the proportional reduction safety property. Indeed, experimental evidence indicates that non-convex safe domains may occur \citep{Ashkenazi1965}. However, such domains are still star-shaped with respect to the origin. See also Fig.~\ref{fig:strength-surfaces}.
\end{remark}

\subsubsection{Topology of the strength hypersurface}

Let $\mathsf{f}:\mathcal{A}\to\mathbb{R}$ be a smooth strength function.
From a physical point of view, the strength hypersurface \eqref{Strength-Function} represents the set of critical stress states separating safe and failing states. It is therefore natural to expect it to be a smooth hypersurface and that it be bounded, since sufficiently large admissible stresses should not remain safe. We now make these requirements precise.

\begin{prop}
Let $\mathsf{f}:\mathcal{A}\to\mathbb{R}$ be a smooth function, where $\mathcal{A}\subset\mathbb{R}^6$ is the manifold of constitutively admissible homogeneous stresses. Suppose that $0$ is a regular value of $\mathsf{f}$ and that either
\begin{equation} \label{Large-Stress-Condition-1}
	\mathsf{f}(\boldsymbol{\sigma})>0\,,\qquad
	\forall\,\boldsymbol{\sigma}\in\mathcal{A}\ \text{with}\ \|\boldsymbol{\sigma}\|>R\,,
\end{equation}
for some $R>0$, or, more strongly,\footnote{This assumption is the standard coercivity condition. It is stronger than the first, since $\mathsf{f}(\boldsymbol{\sigma})\to +\infty$ as $\|\boldsymbol{\sigma}\|\to\infty$ implies that there exists $R>0$ such that $\mathsf{f}(\boldsymbol{\sigma})>0$ whenever $\|\boldsymbol{\sigma}\|>R$.}
\begin{equation} \label{Large-Stress-Condition-2}
	\mathsf{f}(\boldsymbol{\sigma})\to +\infty
	\quad\text{as}\quad
	\|\boldsymbol{\sigma}\|\to\infty\,,\qquad
	\boldsymbol{\sigma}\in\mathcal{A}\,.
\end{equation}
Then the strength hypersurface $\mathfrak{S}=\mathsf{f}^{-1}(0)$ is a smooth compact embedded hypersurface in $\mathcal{A}$, and hence also in $\mathbb{R}^6\,.$
\end{prop}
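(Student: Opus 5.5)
The plan is to separate the two claims, smoothness and compactness, and treat them with different tools. For smoothness I will use the regular level-set theorem (regular value theorem). By the earlier proposition on constitutively admissible homogeneous stresses, $\mathcal{A}$ is locally open in $\mathrm{Sym}(3)\cong\mathbb{R}^6$, hence an open subset and a smooth $6$-manifold. Since $0$ is a regular value of the smooth map $\mathsf{f}:\mathcal{A}\to\mathbb{R}$, the set $\mathfrak{S}=\mathsf{f}^{-1}(0)$ is either empty or a properly embedded smooth submanifold of $\mathcal{A}$ of codimension one, i.e.\ of dimension $5$. It is closed in $\mathcal{A}$ by continuity of $\mathsf{f}$. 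Since $\mathcal{A}$ is open in $\mathbb{R}^6$, the inclusion $\mathcal{A}\hookrightarrow\mathbb{R}^6$ is an open embedding, so $\mathfrak{S}$ is also an embedded hypersurface of $\mathbb{R}^6$.

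For compactness I will first show boundedness. If $\boldsymbol{\sigma}\in\mathfrak{S}$, then $\mathsf{f}(\boldsymbol{\sigma})=0$, which is not $>0$. Condition \eqref{Large-Stress-Condition-1} therefore forces $\|\boldsymbol{\sigma}\|\le R$, so $\mathfrak{S}\subset\bar B_R(\mathbf{0})$. Under \eqref{Large-Stress-Condition-2}, I will reduce to \eqref{Large-Stress-Condition-1} exactly as in the footnote. By Heine--Borel, it then remains to show that $\mathfrak{S}$ is closed in $\mathbb{R}^6$. Once that holds, $\mathfrak{S}$ is compact in $\mathbb{R}^6$; since it lies in $\mathcal{A}$ and compactness is intrinsic, it is also compact as a subset of $\mathcal{A}$.

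The main obstacle is closedness in $\mathbb{R}^6$, not just in $\mathcal{A}$. A sequence $\boldsymbol{\sigma}_k\in\mathfrak{S}$ is bounded, so it has a convergent subsequence with limit $\boldsymbol{\sigma}_\ast\in\bar{\mathcal{A}}$. The danger is $\boldsymbol{\sigma}_\ast\in\partial\mathcal{A}$, where $\mathsf{f}$ is not defined. To exclude this I will use the setting of \eqref{Strength-Function}: $\mathsf{f}$ is the restriction to $\mathcal{A}$ of a continuous function on $\mathbb{R}^6$. Continuity then gives $\mathsf{f}(\boldsymbol{\sigma}_\ast)=0$. The remaining step is to show that the zero set does not meet $\partial\mathcal{A}\cap\bar B_R$. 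I would first try to derive this from the physical reading of the large-stress condition, namely that inadmissible boundary stresses are not critical, i.e.\ $\mathsf{f}\neq0$ on $\partial\mathcal{A}$. In the simplest case $\mathcal{A}=\mathrm{Sym}(3)$ this is automatic. If this cannot be derived, I will state it explicitly as the implicit hypothesis under which the proposition holds. With it, $\boldsymbol{\sigma}_\ast\in\mathcal{A}$, so $\mathfrak{S}$ is closed and bounded, hence compact.

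Finally, combining these steps, $\mathfrak{S}$ is a compact, properly embedded smooth $5$-dimensional hypersurface of $\mathcal{A}$ and hence of $\mathbb{R}^6$. I will also note that, by Proposition~\ref{prop:strength-surface-boundary}, $\mathfrak{S}$ coincides with $\partial\{\mathsf{f}<0\}$ when the threshold property holds. Since $\mathsf{f}$ is negative near the stress-free state, this shows $\mathfrak{S}$ is nonempty, so the conclusion is not vacuous.
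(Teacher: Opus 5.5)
Your argument follows the paper's route: the preimage theorem gives the codimension-one embedded structure, the large-stress condition gives $\mathfrak{S}\subset\overline{B_R(\mathbf{0})}$, and Heine--Borel finishes. You depart from it exactly at the step you flagged, and there you are right and the paper is not. The paper asserts that $\mathfrak{S}$ is ``closed in $\mathcal{A}$, and therefore closed in $\mathbb{R}^6$.'' But $\mathcal{A}$ is open in $\mathbb{R}^6$, which is precisely what the preceding proposition establishes. A nonempty open subset of the connected space $\mathbb{R}^6$ is closed only if it is all of $\mathbb{R}^6$, so that inference is valid only when $\mathcal{A}=\mathrm{Sym}(3)$.

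Consequently your extra hypothesis is not an artifact of your method. You also should not expect to extract it from \eqref{Large-Stress-Condition-1} or \eqref{Large-Stress-Condition-2}: these constrain $\mathsf{f}$ only for $\|\boldsymbol{\sigma}\|>R$ and say nothing at bounded stresses near $\partial\mathcal{A}$. Without some such hypothesis the proposition is false.

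Here is a counterexample. In Cartesian components, take the isotropic law $\hat{\boldsymbol{\sigma}}(\mathbf{b})=h(\operatorname{tr}\log\mathbf{b})\,\mathbf{I}+\operatorname{dev}\log\mathbf{b}$ with $h(t)=c\,(1-e^{-t})$ and $c>0$. It is a diffeomorphism of $\mathrm{Sym}^{+}(3)$ onto $\mathcal{A}=\{\operatorname{tr}\boldsymbol{\sigma}<3c\}$, i.e.\ a cap on the mean stress. Take $\mathsf{f}(\boldsymbol{\sigma})=\operatorname{tr}(\boldsymbol{\sigma}^2)-r^2$ with $r>\sqrt{3}\,c$. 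Then $\mathsf{f}(\mathbf{0})<0$, $0$ is a regular value, and \eqref{Large-Stress-Condition-2} holds. Yet $\mathfrak{S}=\{\operatorname{tr}(\boldsymbol{\sigma}^2)=r^2,\ \operatorname{tr}\boldsymbol{\sigma}<3c\}$ is a $5$-sphere with a closed cap removed, so it is neither closed nor compact.

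So state the hypothesis explicitly. Your version, a continuous extension of $\mathsf{f}$ to $\mathbb{R}^6$ that does not vanish on $\partial\mathcal{A}\cap\overline{B_R(\mathbf{0})}$, works together with your sequence argument. A more intrinsic alternative needs no values of $\mathsf{f}$ outside its domain: require $\mathsf{f}>0$ on $\mathcal{A}\setminus K$ for some compact $K\subset\mathcal{A}$. Physically, stresses that leave every compact subset of $\mathcal{A}$, whether by growing or by approaching the edge of admissibility, are unsafe. Then $\mathfrak{S}$ is a relatively closed subset of $\mathcal{A}$ lying in $K$, hence compact.

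Finally, drop or repair your nonemptiness remark. It is not part of the claim, and as argued it does not work. Proposition~\ref{prop:strength-surface-boundary} concerns the zero set of $\mathsf{f}$ on all of $\mathbb{R}^6$, which may lie outside $\mathcal{A}$. Moreover, for bounded $\mathcal{A}$ the large-stress conditions are vacuous and $\mathfrak{S}$ can be empty. Nonemptiness follows instead from the intermediate value theorem along a path in $\mathcal{A}$ from $\mathbf{0}$ to a point where $\mathsf{f}>0$, when such a path exists.
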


\begin{proof}
A value $c\in\mathbb{R}$ is a regular value of $\mathsf{f}$ if, for every $\boldsymbol{\sigma}\in \mathsf{f}^{-1}(c)$, the derivative map
\begin{equation}
	d\mathsf{f}_{\boldsymbol{\sigma}}:T_{\boldsymbol{\sigma}}\mathcal{A}\to 
	T_c\mathbb{R}\simeq \mathbb{R}\,,
\end{equation}
is surjective, which is equivalent to $d\mathsf{f}_{\boldsymbol{\sigma}}\neq \mathbf{0}$. By the preimage theorem \citep{Lee2013}, if $0$ is a regular value of $\mathsf{f}$, then $\mathfrak{S}=\mathsf{f}^{-1}(0)$ is a smooth embedded submanifold of $\mathcal{A}$ of codimension one. Hence, $\mathfrak{S}$ is a smooth embedded hypersurface in $\mathcal{A}$, and since $\mathcal{A}$ is an embedded submanifold of $\mathbb{R}^6$, it follows that $\mathfrak{S}$ is also an embedded hypersurface in $\mathbb{R}^6\,.$

Under either assumption \eqref{Large-Stress-Condition-1} or \eqref{Large-Stress-Condition-2}, there exists $R>0$ such that $\mathsf{f}(\boldsymbol{\sigma})>0$ for all $\boldsymbol{\sigma}\in\mathcal{A}$ with $\|\boldsymbol{\sigma}\|>R$. Therefore,
\begin{equation}
	\mathfrak{S}\subset \overline{B_R(0)}\,,
\end{equation}
and hence $\mathfrak{S}$ is bounded. Since $\mathsf{f}$ is continuous and $\{0\}$ is closed, the set $\mathfrak{S}=\mathsf{f}^{-1}(0)$ is closed in $\mathcal{A}$, and therefore closed in $\mathbb{R}^6$. Thus, $\mathfrak{S}$ is closed and bounded in $\mathbb{R}^6$, and hence compact by the Heine--Borel theorem \citep{Rudin1976}\,.
\end{proof}

In summary, if $\mathsf{f}$ is smooth, if $0$ is a regular value, and if sufficiently large admissible stresses are unsafe, then the strength hypersurface $\mathfrak{S}$ is a smooth compact embedded hypersurface that separates safe and failing stresses.

\begin{remark}[Strength criteria in linear elasticity]
Linearization is performed about an initial motion $\mathring{\varphi}:\mathcal{B}\rightarrow\mathcal{S}$, with initial deformation gradient $\mathring{\mathbf{F}}=T\mathring{\varphi}$. For a one-parameter family of motions $\varphi_\epsilon$ satisfying $\varphi_0=\mathring{\varphi}$, the deformation gradient has the expansion $\mathbf{F}_\epsilon=\mathring{\mathbf{F}}+\epsilon\,\delta\mathbf{F}+o(\epsilon)$. The initial motion need not coincide with the inclusion of the reference configuration, and the initial state may be finitely deformed and prestressed. The resulting linear theory is an incremental theory about the state associated with $\mathring{\varphi}$. The incremental strain and stress measures depend on the initial deformation gradient and, when the initial state is prestressed, on the initial stress. In particular, the different incremental stress measures do not generally coincide. They agree to leading order in the special case of linearization about an undeformed, stress-free configuration with $\mathring{\mathbf{F}}=\mathbf{I}$. Thus, even within a linearized theory, the representation of strength may retain information about the initial deformation and stress. A strength hypersurface specifies the critical states forming the boundary of the safe domain and constitutes material information beyond the incremental elastic moduli. The incremental constitutive equations obtained by linearization characterize the elastic response to infinitesimal perturbations of the initial motion; they do not by themselves determine the strength hypersurface. Classical criteria such as Tresca, von Mises, Mohr--Coulomb, and Drucker--Prager are therefore prescribed directly as strength hypersurfaces using material symmetry, physical considerations, and experimental observations. A finite-elasticity strength function may also be expanded about the initial state associated with $\mathring{\varphi}$. Such an expansion provides a local approximation of the strength function near that state. Its coefficients depend on the initial deformation, the initial stress, and the derivatives of the strength function evaluated at the initial state. This local expansion is distinct from classical strength criteria, which are intended to describe the entire boundary of the safe domain rather than the behavior of the strength function only in a neighborhood of a single state. When the initial motion is the undeformed, stress-free configuration, the classical stress measures agree to first order, and a strength criterion may therefore be expressed unambiguously in terms of the linearized stress tensor. This explains the conventional use of stress-based strength criteria in classical linear elasticity. For a general finitely deformed or prestressed initial state, however, the choice of stress measure and its relation to the initial deformation must remain explicit.
\end{remark}

\subsection{Material strength in the presence of internal constraints}

In the presence of an internal constraint, the stress has a reactive part and a constitutive part. 
As an example, let us consider incompressible solids, for which $J=1$. In this case, the second Piola--Kirchhoff stress admits the following standard additive decomposition\footnote{In terms of the first Piola--Kirchhoff and Cauchy stresses this is rewritten as
\begin{equation}
	\mathbf{P} = -p_0\,\mathbf{g}^{\sharp}\,\mathbf{F}^{-\star} + \bar{\mathbf{P}}\,, \qquad
	\boldsymbol{\sigma} = -p\,\mathbf{g}^{\sharp} + \bar{\boldsymbol{\sigma}}\,,
\end{equation}
where $p=J^{-1} p_0$.}

\begin{equation}
	\mathbf{S}  = -\,p_0\,\mathbf{C}^{-\sharp} + \bar{\mathbf{S}}\,,
\end{equation}
where $p_0$ is an undetermined scalar Lagrange multiplier field and $\bar{\mathbf{S}}$ is the constitutive stress. The scalar field $p_0$ is not determined by the constitutive equations; it is fixed only through equilibrium and boundary conditions together with the incompressibility constraint $\det \mathbf{C}^\flat=\det \mathbf{G}$. Consequently, in any constitutive equation, whether implicit or explicit, it is the constitutive part of the stress $\bar{\mathbf{S}}$ that enters directly, rather than the total stress $\mathbf{S}$.

If the strength function is assumed to depend only on stress, i.e., $\hat{\mathsf{F}}(\mathbf{S},\mathbf{G})=0$, then the internal constraint does not appear explicitly in the equation defining the strength hypersurface. However, the admissible strength states must satisfy
\begin{equation} \label{Strength-System}
\begin{dcases}
	\hat{\mathsf{F}}(\mathbf{S},\mathbf{G})=0\,,\\
	\mathbf{S}	= -\,p_0\,\mathbf{C}^{-\sharp} +	\bar{\mathbf{S}}\,,\\
	\boldsymbol{\mathfrak{f}}(\bar{\mathbf{S}},\mathbf{C}^\flat,\mathbf{G})=\mathbf{0}\,,\\
	\det \mathbf{C}^\flat=\det \mathbf{G}\,,
\end{dcases}
\end{equation}
where \eqref{Strength-System}$_3$ is the implicit constitutive equation.
Thus, the internal constraint restricts the set of admissible stresses without modifying the form of the strength function.
On the other hand, if the strength function depends on both stress and strain, i.e., $\hat{\mathsf{F}}(\mathbf{S},\mathbf{C}^\flat,\mathbf{G})=0$, then the admissible strength states are determined by
\begin{equation}
\begin{dcases}
	\hat{\mathsf{F}}(\mathbf{S},\mathbf{C}^\flat,\mathbf{G})=0\,,\\
	\mathbf{S}= -\,p_0\,\mathbf{C}^{-\sharp} +	\bar{\mathbf{S}}\,,\\
	\boldsymbol{\mathfrak{f}}(\bar{\mathbf{S}},\mathbf{C}^\flat,\mathbf{G})=\mathbf{0}\,,\\
	\det \mathbf{C}^\flat=\det \mathbf{G}\,.
\end{dcases}
\end{equation}
Thus, for constrained materials, a stress-based strength criterion is restricted only indirectly through constitutive admissibility, whereas a strength function depending on both stress and strain is restricted directly through the admissible stress-strain pairs.

\begin{example}[Uniform hydrostatic tensile stress in an incompressible solid]\label{Ex:Hydro-Static-Tension}
Consider a homogeneous incompressible solid for which the total first Piola--Kirchhoff stress is uniform. In particular, there are no body forces, and the loading is applied only through boundary tractions. Let the reference configuration be a ball of radius $R_0$. We use spherical coordinates $(R,\Theta,\Phi)$ and $(r,\theta,\phi)$ in the reference and current configurations, respectively. The material and spatial metrics have the classical representations $\mathbf{G}=\mathrm{diag}(1,\,R^{2},\,R^{2}\sin^{2}\Theta)$ and $\mathbf{g}=\mathrm{diag}(1,\,r^{2},\,r^{2}\sin^{2}\theta)$. We consider radial deformations:
\begin{equation} \label{Deformation-Family4}
	r(R,\Theta,\Phi)=r(R)\,,\qquad \theta(R,\Theta,\Phi)=\Theta\,,\qquad \phi(R,\Theta,\Phi)=\Phi
	\,.
\end{equation}
The deformation gradient has the representation $\mathbf{F}=\mathrm{diag}(r'(R),\,1,\,1)$. Incompressibility implies that
\begin{equation} \label{Jacobian}
	J=\sqrt{\frac{\det\mathbf{g}}{\det\mathbf{G}}}\,\det\mathbf{F}
	=\dfrac{r^2(R)\,r'(R)}{R^2}=1 \,.
\end{equation}
Assuming that $r(0)=0$, one concludes that $r(R)=R$. 
Thus, under radial deformations the spherical ball does not deform, i.e., the deformation is the identity map, and the body behaves effectively as a rigid solid under this loading.
The principal invariants are calculated as
\begin{equation}
	I_1 = \frac{R^6 + 2 r^6(R)}{R^2 r^4(R)}=3\,,\qquad
	I_2 = \frac{2 R^6 + r^6(R)}{R^4 r^2(R)}=3\,.
\end{equation}
The non-zero Cauchy stress components are
\begin{equation}
\begin{aligned}
	\sigma^{rr}(R) &
	= -p + \frac{2 R^{4} W_1}{r^{4}(R)} - \frac{2 W_2\, r^{4}(R)}{R^{4}} = -p + 2 (W_1 - W_2)\,,\\[4pt]
	\sigma^{\theta\theta}(R) &
	= -\frac{p}{r^{2}(R)} + \frac{2 W_1}{R^{2}} - \frac{2 R^{2} W_2}{r^{4}(R)}
	= -\frac{p}{R^2} + \frac{2 (W_1-W_2)}{R^{2}} \,,\\[4pt]
	\sigma^{\phi\phi}(R) &
	= \left[-\frac{p}{r^{2}(R)} + \frac{2 W_1}{R^{2}} - \frac{2 R^{2} W_2}{r^{4}(R)}\right] \csc^{2}\Theta
	=\left[ -\frac{p}{R^2} + \frac{2 (W_1-W_2)}{R^{2}} \right] \csc^{2}\Theta
	\,.
\end{aligned}
\end{equation}
The only non-trivial equilibrium equation is
\begin{equation} \label{Radial-Equilibrium}
	\frac{\partial \sigma^{rr}}{\partial r}
	+\frac{2}{r}\sigma^{rr}-r\,\sigma^{\theta\theta}-(r\sin^2\theta)\,\sigma^{\phi\phi}=0 \,,
\end{equation}
which simplifies to $\frac{\partial \sigma^{rr}(R)}{\partial R}=0$. Therefore, if $\sigma^{rr}(R_0)=\sigma_0$, then $\sigma^{rr}(R)=\sigma_0$ throughout the body. It follows that $p(R)=-\sigma_0+2(W_1-W_2)$, and hence the corresponding physical components of the other two normal stresses are also equal to $\sigma_0$. Thus, the deformation map is the identity and the stress field is a uniform hydrostatic tension, i.e., $\boldsymbol{\sigma}=\sigma_0\,\mathbf{g}^\sharp$.
This example demonstrates that strength functions formulated solely in terms of strain, strain energy, or its distortional component are not, in general, adequate.
It was shown in Proposition~\ref{prop:star-shaped-safe} that the safe domain is star-shaped with respect to the origin. Thus, every ray emanating from the origin intersects the strength hypersurface exactly once. In particular, along the hydrostatic tensile ray $\sigma_1=\sigma_2=\sigma_3=\sigma_0$ with $\sigma_0>0$, there exists a unique intersection with the strength surface.
\end{example}

\section{Material Strength of Isotropic Solids} \label{Material-Strength-Isotropic-Solids}

In this section, we specialize the general framework to isotropic solids and examine several classical strength criteria as examples, with particular attention to the geometry of the associated strength surfaces and safe domains.

For an isotropic material, three principal values $(\sigma_1,\sigma_2,\sigma_3)$ of the Cauchy stress control fracture. In this case the strength hypersurface $\mathsf{f}(\boldsymbol{\sigma},\mathbf{g})=\hat{\mathsf{f}}(\sigma_1,\sigma_2,\sigma_3)=0$ is a surface in a $3$-manifold. Isotropy implies that we must have the following permutation symmetries
\begin{equation}
	\hat{\mathsf{f}}(\sigma_1,\sigma_2,\sigma_3)
	=\hat{\mathsf{f}}(\sigma_1,\sigma_3,\sigma_2)
	=\hat{\mathsf{f}}(\sigma_2,\sigma_1,\sigma_3)
	=\hat{\mathsf{f}}(\sigma_2,\sigma_3,\sigma_1)
	=\hat{\mathsf{f}}(\sigma_3,\sigma_1,\sigma_2)
	=\hat{\mathsf{f}}(\sigma_3,\sigma_2,\sigma_1)
	\,.
\end{equation}
Obviously, $\hat{\mathsf{f}}(0,0,0)<0$.
Each transposition of two principal stresses corresponds to a reflection across one of the planes $\sigma_1=\sigma_2$, $\sigma_1=\sigma_3$, or $\sigma_2=\sigma_3$. Therefore, the strength hypersurface is mirror-symmetric with respect to each of these three planes. In particular, if $(\sigma_1,\sigma_2,\sigma_3)$ lies on the strength hypersurface, then its mirror image with respect to any of these planes also lies on the strength hypersurface. 
Consequently, the surface is completely determined by its restriction to any one of the six regions corresponding to the six possible orderings of the principal stresses, namely
$\sigma_1\ge\sigma_2\ge\sigma_3$, $\sigma_1\ge\sigma_3\ge\sigma_2$, $\sigma_2\ge\sigma_1\ge\sigma_3$, $\sigma_2\ge\sigma_3\ge\sigma_1$, $\sigma_3\ge\sigma_1\ge\sigma_2$, and $\sigma_3\ge\sigma_2\ge\sigma_1$. For example, one may restrict attention to the region $\sigma_1\ge\sigma_2\ge\sigma_3$, and the remaining parts of the surface then follow by symmetry.

In the case of states of plane stress, the strength hypersurface reduces to a curve defined by $\mathsf{f}(\boldsymbol{\sigma},\mathbf{g})=\hat{\mathsf{f}}(\sigma_1,\sigma_2)=0$ with the symmetry $\hat{\mathsf{f}}(\sigma_1,\sigma_2)=\hat{\mathsf{f}}(\sigma_2,\sigma_1)$. Let us define the reflection map $S(\sigma_1,\sigma_2)=(\sigma_2,\sigma_1)$. The symmetry condition can be written as $\mathsf{f}=\mathsf{f}\circ S$. Therefore, if $(\sigma_1,\sigma_2)$ belongs to the strength curve, i.e., if $\hat{\mathsf{f}}(\sigma_1,\sigma_2)=0$, then $\mathsf{f}(S(\sigma_1,\sigma_2))=\hat{\mathsf{f}}(\sigma_2,\sigma_1)=0$ as well. Hence, the set $\{(\sigma_1,\sigma_2):\hat{\mathsf{f}}(\sigma_1,\sigma_2)=0\}$ is invariant under the reflection $(\sigma_1,\sigma_2)\mapsto(\sigma_2,\sigma_1)$, and the strength curve is mirror-symmetric with respect to the line $\sigma_1=\sigma_2$. Consequently, the strength curve is completely determined by its restriction to either of the two regions $\sigma_1\ge\sigma_2$ or $\sigma_2\ge\sigma_1$. For example, one may restrict attention to the region $\sigma_1\ge\sigma_2$, and the remaining part of the curve then follows by symmetry.

\subsection{Examples of Isotropic Strength Surfaces}

In this section, several classical isotropic strength criteria \citep{Chockalingam2026} are written explicitly in terms of the principal Cauchy stresses and analyzed within the framework introduced earlier. For each criterion we first rewrite the strength surface in a form compatible with our stress convention and parameter notation. We then examine whether the corresponding safe domain satisfies the proportional reduction safety property, and hence whether the admissible stress region is star-shaped with respect to the origin.

\subsubsection{Mohr--Coulomb Strength Surface}

The Mohr--Coulomb strength surface depends only on the maximum and minimum principal stresses. Let us define
\begin{equation}
	\sigma_{\max} = \max\{\sigma_1,\sigma_2,\sigma_3\}\,,\qquad
	\sigma_{\min} = \min\{\sigma_1,\sigma_2,\sigma_3\}\,.
\end{equation}
Then the Mohr--Coulomb strength surface can be written as
\begin{equation}
	\mathsf{f}_{\mathrm{MC}}(\sigma_1,\sigma_2,\sigma_3)
	=\beta_1\,\sigma_{\max}+\beta_2\,\sigma_{\min}-1=0\,.
\end{equation}
Here $\beta_1$ and $\beta_2$ are material constants. They are determined by requiring that the strength surface pass through the uniaxial tensile and uniaxial compressive strength states.
Under uniaxial tension with tensile strength $\sigma_{\mathrm{ts}}^{\mathrm{MC}}$, the principal stresses are $(\sigma_{\mathrm{ts}}^{\mathrm{MC}},0,0)$, and hence $\sigma_{\max}=\sigma_{\mathrm{ts}}^{\mathrm{MC}}$ and $\sigma_{\min}=0$. Substituting into the Mohr--Coulomb strength surface gives $\beta_1\,\sigma_{\mathrm{ts}}^{\mathrm{MC}}-1=0$, and therefore
\begin{equation}
	\beta_1=\frac{1}{\sigma_{\mathrm{ts}}^{\mathrm{MC}}}\,.
\end{equation}
Under uniaxial compression with compressive strength $\sigma_{\mathrm{cs}}^{\mathrm{MC}}$, the principal stresses are $(-\sigma_{\mathrm{cs}}^{\mathrm{MC}},0,0)$, and hence $\sigma_{\max}=0$ and $\sigma_{\min}=-\sigma_{\mathrm{cs}}^{\mathrm{MC}}$. Substituting into the Mohr--Coulomb strength surface gives $-\beta_2\,\sigma_{\mathrm{cs}}^{\mathrm{MC}}-1=0$, and therefore
\begin{equation}
	\beta_2=-\frac{1}{\sigma_{\mathrm{cs}}^{\mathrm{MC}}}\,.
\end{equation}
Thus the Mohr--Coulomb strength surface may be written as
\begin{equation}
	\mathsf{f}_{\mathrm{MC}}(\sigma_1,\sigma_2,\sigma_3)
	=\frac{\sigma_{\max}}{\sigma_{\mathrm{ts}}^{\mathrm{MC}}}
	-\frac{\sigma_{\min}}{\sigma_{\mathrm{cs}}^{\mathrm{MC}}}
	-1=0\,.
\end{equation}
The surface is plotted in Fig.~\ref{fig:strength-surfaces}(a). This criterion does not depend on the intermediate principal stress.

We next show that the safe domain of the Mohr--Coulomb criterion is star-shaped with respect to the origin. To this end, let $(\sigma_1,\sigma_2,\sigma_3)$ be a safe stress state, so that $\mathsf{f}_{\mathrm{MC}}(\sigma_1,\sigma_2,\sigma_3)<0$, and let $t\in[0,1)$. Since $t\ge 0$, one has $(t\boldsymbol{\sigma})_{\max}=t\,\sigma_{\max}$ and $(t\boldsymbol{\sigma})_{\min}=t\,\sigma_{\min}$, and therefore
\begin{equation}
	\mathsf{f}_{\mathrm{MC}}(t\sigma_1,t\sigma_2,t\sigma_3)
	=t\left(
	\frac{\sigma_{\max}}{\sigma_{\mathrm{ts}}^{\mathrm{MC}}}
	-\frac{\sigma_{\min}}{\sigma_{\mathrm{cs}}^{\mathrm{MC}}}
	\right)-1\,.
\end{equation}
Since $\mathsf{f}_{\mathrm{MC}}(\sigma_1,\sigma_2,\sigma_3)<0$, one has $\frac{\sigma_{\max}}{\sigma_{\mathrm{ts}}^{\mathrm{MC}}}-\frac{\sigma_{\min}}{\sigma_{\mathrm{cs}}^{\mathrm{MC}}}<1$. Multiplying by $t\in[0,1)$ gives $t\left(\frac{\sigma_{\max}}{\sigma_{\mathrm{ts}}^{\mathrm{MC}}}-\frac{\sigma_{\min}}{\sigma_{\mathrm{cs}}^{\mathrm{MC}}}\right)<t<1$, and hence $\mathsf{f}_{\mathrm{MC}}(t\sigma_1,t\sigma_2,t\sigma_3)<0$. Thus the Mohr--Coulomb criterion satisfies the proportional reduction safety property, and consequently its safe domain is star-shaped with respect to the origin, see Fig.~\ref{fig:strength-surfaces}(a).

\subsubsection{Hoek--Brown Strength Surface}

The Hoek--Brown strength surface is usually written in the rock-mechanics convention in which compressive stress is taken to be positive. Thus, if $(\sigma_1,\sigma_2,\sigma_3)$ denote the principal Cauchy stresses in our convention, then one sets $s_i=-\sigma_i$. It follows that $s_{\max}=-\sigma_{\min}$ and $s_{\min}=-\sigma_{\max}$. In the notation of the paper, the generalized Hoek--Brown criterion is written as
\begin{equation}
	s_{\max}=s_{\min}+\sigma_c\left(m_b\frac{s_{\min}}{\sigma_c}+s_{\mathrm{HB}}\right)^a\,.
\end{equation}
Here $\sigma_c$, $m_b$, $s_{\mathrm{HB}}$, and $a$ are material parameters. In the usual Hoek--Brown model one has $0<a\le 1$, and therefore the expression is well defined only when $m_b\frac{s_{\min}}{\sigma_c}+s_{\mathrm{HB}}\ge 0$.
For intact rock one has $m_b=m_i$, $s_{\mathrm{HB}}=1$, and $a=0.5$. Rewriting the criterion in terms of $(\sigma_1,\sigma_2,\sigma_3)$ gives
\begin{equation}
	-\sigma_{\min}
	=-\sigma_{\max}
	+\sigma_c\left(-m_b\frac{\sigma_{\max}}{\sigma_c}+s_{\mathrm{HB}}\right)^a\,.
\end{equation}
Hence
\begin{equation}
	\sigma_{\max}-\sigma_{\min}
	=\sigma_c\left(s_{\mathrm{HB}}-m_b\frac{\sigma_{\max}}{\sigma_c}\right)^a\,.
\end{equation}
Therefore, in our notation, the Hoek--Brown strength surface may be written as
\begin{equation}
	\mathsf{f}_{\mathrm{HB}}(\sigma_1,\sigma_2,\sigma_3)
	=\sigma_{\max}-\sigma_{\min}
	-\sigma_c\left(s_{\mathrm{HB}}-m_b\frac{\sigma_{\max}}{\sigma_c}\right)^a=0\,.
\end{equation}
The surface is plotted in Fig.~\ref{fig:strength-surfaces}(b). Unlike the Mohr--Coulomb criterion, this relation is nonlinear in the principal stresses. It is calibrated through the parameters $\sigma_c$, $m_b$, $s_{\mathrm{HB}}$, and $a$, and the corresponding uniaxial tensile and compressive strengths are obtained by substituting the uniaxial states into the criterion, which yields nonlinear relations among these parameters. Nevertheless, like the Mohr--Coulomb criterion, the Hoek--Brown surface depends only on $\sigma_{\max}$ and $\sigma_{\min}$ and therefore ignores the intermediate principal stress.

The safe domain associated with the Hoek--Brown strength surface is also star-shaped with respect to the origin, under the usual assumption $0<a\le 1$ and provided the fractional power remains well-defined along the proportional reduction path. To see this, let $(\sigma_1,\sigma_2,\sigma_3)$ be a safe stress state, so that $\mathsf{f}_{\mathrm{HB}}(\sigma_1,\sigma_2,\sigma_3)<0$, and let $t\in[0,1)$. Since $t\ge 0$, one has $(t\boldsymbol{\sigma})_{\max}=t\,\sigma_{\max}$ and $(t\boldsymbol{\sigma})_{\min}=t\,\sigma_{\min}$. Therefore
\begin{equation}
	\mathsf{f}_{\mathrm{HB}}(t\sigma_1,t\sigma_2,t\sigma_3)
	=t(\sigma_{\max}-\sigma_{\min})
	-\sigma_c\left(s_{\mathrm{HB}}-m_b\frac{t\,\sigma_{\max}}{\sigma_c}\right)^a\,.
\end{equation}
Next observe that
$s_{\mathrm{HB}}-m_b\frac{t\,\sigma_{\max}}{\sigma_c}=(1-t)s_{\mathrm{HB}}+t\left(s_{\mathrm{HB}}-m_b\frac{\sigma_{\max}}{\sigma_c}\right)$.
Since $0<a\le 1$, the function $f(x)=x^a$ is concave on $[0,\infty)$, and therefore for any $x_0,x_1\ge 0$ and $t\in[0,1]$ one has
\begin{equation}
	f((1-t)x_0+t x_1)\ge (1-t)f(x_0)+t f(x_1).
\end{equation}
Applying this with $x_0=s_{\mathrm{HB}}$ and $x_1=s_{\mathrm{HB}}-m_b\frac{\sigma_{\max}}{\sigma_c}$ yields
\begin{equation}
\begin{aligned}
	\left(s_{\mathrm{HB}}-m_b\frac{t\,\sigma_{\max}}{\sigma_c}\right)^a
	&\ge
	(1-t)s_{\mathrm{HB}}^a
	+
	t\left(s_{\mathrm{HB}}-m_b\frac{\sigma_{\max}}{\sigma_c}\right)^a.
\end{aligned}
\end{equation}
Since $(1-t)s_{\mathrm{HB}}^a\ge 0$, it follows that
\begin{equation}
	\left(s_{\mathrm{HB}}-m_b\frac{t\,\sigma_{\max}}{\sigma_c}\right)^a
	\ge
	t\left(s_{\mathrm{HB}}-m_b\frac{\sigma_{\max}}{\sigma_c}\right)^a.
\end{equation}
Substituting this estimate into $\mathsf{f}_{\mathrm{HB}}(t\sigma_1,t\sigma_2,t\sigma_3)$ gives
\begin{equation}
\begin{aligned}
	\mathsf{f}_{\mathrm{HB}}(t\sigma_1,t\sigma_2,t\sigma_3)
	&=t(\sigma_{\max}-\sigma_{\min})
	-\sigma_c\left(s_{\mathrm{HB}}-m_b\frac{t\,\sigma_{\max}}{\sigma_c}\right)^a\\
	&\le
	t(\sigma_{\max}-\sigma_{\min})
	-t\,\sigma_c\left(s_{\mathrm{HB}}-m_b\frac{\sigma_{\max}}{\sigma_c}\right)^a\\
	&=
	t\left[\sigma_{\max}-\sigma_{\min}
	-\sigma_c\left(s_{\mathrm{HB}}-m_b\frac{\sigma_{\max}}{\sigma_c}\right)^a\right]
	=
	t\,\mathsf{f}_{\mathrm{HB}}(\sigma_1,\sigma_2,\sigma_3)<0\,.
\end{aligned}
\end{equation}
Thus, the Hoek--Brown criterion satisfies the proportional reduction safety property, and consequently its safe domain is star-shaped with respect to the origin.

\subsubsection{Generalized Three-Dimensional Hoek--Brown Strength Surface}

The generalized three-dimensional Hoek--Brown surface is written in terms of $\tau_{\mathrm{oct}}$ and $s_{m,2}$, where
\begin{equation}
	\tau_{\mathrm{oct}}
	=\sqrt{\frac{2}{3}\,J_2}\,,\qquad
	J_2
	=\frac{1}{6}\Big[(\sigma_1-\sigma_2)^2+(\sigma_2-\sigma_3)^2+(\sigma_3-\sigma_1)^2\Big]\,,
\end{equation}
and
\begin{equation}
	s_{m,2}=\frac{1}{2}(s_{\max}+s_{\min})
	=-\frac{1}{2}(\sigma_{\max}+\sigma_{\min})\,.
\end{equation}
Thus
\begin{equation}
	\tau_{\mathrm{oct}}
	=\frac{1}{3}\sqrt{(\sigma_1-\sigma_2)^2+(\sigma_2-\sigma_3)^2+(\sigma_3-\sigma_1)^2}\,.
\end{equation}
The generalized three-dimensional Hoek--Brown surface is described as 
\begin{equation}
	\mathsf{f}_{\mathrm{GHB}}(\sigma_1,\sigma_2,\sigma_3)
	=\beta_1\big(3\sqrt{2}\,\tau_{\mathrm{oct}}\big)^{\frac{1}{a}}
	+\beta_2\left(\frac{3}{2}\sqrt{2}\,\tau_{\mathrm{oct}}-s_{m,2}\right)-1=0\,.
\end{equation}
The surface is plotted in Fig.~\ref{fig:strength-surfaces}(c).
Using $s_{m,2}=-(\sigma_{\max}+\sigma_{\min})/2$, this becomes
\begin{equation}
	\mathsf{f}_{\mathrm{GHB}}(\sigma_1,\sigma_2,\sigma_3)
	=\beta_1\big(3\sqrt{2}\,\tau_{\mathrm{oct}}\big)^{\frac{1}{a}}
	+\beta_2\left(\frac{3}{2}\sqrt{2}\,\tau_{\mathrm{oct}}
	+\frac{1}{2}(\sigma_{\max}+\sigma_{\min})\right)-1=0\,.
\end{equation}
The material constants are written as
\begin{equation}
	\beta_1=\frac{1}{\big(\sigma_{\mathrm{cs}}^{\mathrm{HB}}\big)^{\frac{1}{a}}}\,,\qquad
	\beta_2
	=\frac{1}{\sigma_{\mathrm{ts}}^{\mathrm{HB}}}
	\left[
	1-\left(\frac{\sigma_{\mathrm{ts}}^{\mathrm{HB}}}{\sigma_{\mathrm{cs}}^{\mathrm{HB}}}\right)^{\frac{1}{a}}
	\right]\,.
\end{equation}
Unlike the preceding two criteria, this one depends on all three principal stresses through $J_2$, and therefore includes the effect of the intermediate principal stress.

The safe domain associated with the generalized three-dimensional Hoek--Brown surface is also star-shaped with respect to the origin, provided $\beta_1\ge 0$, $\beta_2\ge 0$, and $0<a\le 1$. To see this, let $(\sigma_1,\sigma_2,\sigma_3)$ be a safe stress state, so that $\mathsf{f}_{\mathrm{GHB}}(\sigma_1,\sigma_2,\sigma_3)<0$, and let $t\in[0,1)$. Since $\tau_{\mathrm{oct}}$ is homogeneous of degree one in the principal stresses, one has $\tau_{\mathrm{oct}}(t\sigma_1,t\sigma_2,t\sigma_3)=t\,\tau_{\mathrm{oct}}$, while $(t\boldsymbol{\sigma})_{\max}=t\,\sigma_{\max}$ and $(t\boldsymbol{\sigma})_{\min}=t\,\sigma_{\min}$. Therefore
\begin{equation}
	\mathsf{f}_{\mathrm{GHB}}(t\sigma_1,t\sigma_2,t\sigma_3)
	=t^{\frac{1}{a}}\beta_1\left(3\sqrt{2}\,\tau_{\mathrm{oct}}\right)^{\frac{1}{a}}
	+t\beta_2\left[\frac{3}{2}\sqrt{2}\,\tau_{\mathrm{oct}}
	+\frac{1}{2}(\sigma_{\max}+\sigma_{\min})\right]-1\,.
\end{equation}
Since $0<a\le 1$, one has $\frac{1}{a}\ge 1$, and hence $t^{\frac{1}{a}}\le t\le 1$ for every $t\in[0,1)$. Using also $\beta_1\ge 0$ and $\beta_2\ge 0$, it follows that
\begin{equation}
\begin{aligned}
	\mathsf{f}_{\mathrm{GHB}}(t\sigma_1,t\sigma_2,t\sigma_3)
	&\le
	t\beta_1\left(3\sqrt{2}\,\tau_{\mathrm{oct}}\right)^{\frac{1}{a}}
	+t\beta_2\left[\frac{3}{2}\sqrt{2}\,\tau_{\mathrm{oct}}
	+\frac{1}{2}(\sigma_{\max}+\sigma_{\min})\right]-1\\
	&=
	t\left\{
	\beta_1\left(3\sqrt{2}\,\tau_{\mathrm{oct}}\right)^{\frac{1}{a}}
	+\beta_2\left[\frac{3}{2}\sqrt{2}\,\tau_{\mathrm{oct}}
	+\frac{1}{2}(\sigma_{\max}+\sigma_{\min})\right]
	\right\}-1\\
	&<
	\beta_1\left(3\sqrt{2}\,\tau_{\mathrm{oct}}\right)^{\frac{1}{a}}
	+\beta_2\left[\frac{3}{2}\sqrt{2}\,\tau_{\mathrm{oct}}
	+\frac{1}{2}(\sigma_{\max}+\sigma_{\min})\right]-1
	=\mathsf{f}_{\mathrm{GHB}}(\sigma_1,\sigma_2,\sigma_3)<0\,.
\end{aligned}
\end{equation}
Thus, the generalized three-dimensional Hoek--Brown criterion satisfies the proportional reduction safety property, and consequently its safe domain is star-shaped with respect to the origin.

\subsubsection{Mogi--Coulomb Strength Surface}

The Mogi--Coulomb surface is written in the paper as
\begin{equation}
	\mathsf{f}_{\mathrm{MgC}}(\sigma_1,\sigma_2,\sigma_3)
	=\beta_1\,s_{m,2}+\beta_2\,\tau_{\mathrm{oct}}-1=0\,.
\end{equation}
Using $s_{m,2}=-\frac{1}{2}(\sigma_{\max}+\sigma_{\min})$, this becomes
\begin{equation}
	\mathsf{f}_{\mathrm{MgC}}(\sigma_1,\sigma_2,\sigma_3)
	=-\frac{\beta_1}{2}(\sigma_{\max}+\sigma_{\min})
	+\beta_2\,\tau_{\mathrm{oct}}-1=0\,.
\end{equation}
The parameters are given by
\begin{equation}
	\beta_1
	=\frac{1}{\sigma_{\mathrm{cs}}^{\mathrm{MC}}}-\frac{1}{\sigma_{\mathrm{ts}}^{\mathrm{MC}}}\,,\qquad
	\beta_2
	=\frac{3\sqrt{2}}{4}
	\left(
	\frac{1}{\sigma_{\mathrm{ts}}^{\mathrm{MC}}}+\frac{1}{\sigma_{\mathrm{cs}}^{\mathrm{MC}}}
	\right)\,.
\end{equation}
Therefore
\begin{equation}
	\mathsf{f}_{\mathrm{MgC}}(\sigma_1,\sigma_2,\sigma_3)
	=-\frac{1}{2}
	\left(
	\frac{1}{\sigma_{\mathrm{cs}}^{\mathrm{MC}}}-\frac{1}{\sigma_{\mathrm{ts}}^{\mathrm{MC}}}
	\right)
	(\sigma_{\max}+\sigma_{\min})
	+\frac{3\sqrt{2}}{4}
	\left(
	\frac{1}{\sigma_{\mathrm{ts}}^{\mathrm{MC}}}+\frac{1}{\sigma_{\mathrm{cs}}^{\mathrm{MC}}}
	\right)\tau_{\mathrm{oct}}
	-1=0\,.
\end{equation}
The surface is plotted in Fig.~\ref{fig:strength-surfaces}(d). This criterion depends on all three principal stresses through $\tau_{\mathrm{oct}}$ and hence accounts for the intermediate principal stress.

The safe domain associated with the Mogi--Coulomb strength surface is star-shaped with respect to the origin. To see this, let $(\sigma_1,\sigma_2,\sigma_3)$ be a safe stress state, so that $\mathsf{f}_{\mathrm{MgC}}(\sigma_1,\sigma_2,\sigma_3)<0$, and let $t\in[0,1)$. Since $\tau_{\mathrm{oct}}$ is homogeneous of degree one in the principal stresses, one has $\tau_{\mathrm{oct}}(t\sigma_1,t\sigma_2,t\sigma_3)=t\,\tau_{\mathrm{oct}}$, while $(t\boldsymbol{\sigma})_{\max}=t\,\sigma_{\max}$ and $(t\boldsymbol{\sigma})_{\min}=t\,\sigma_{\min}$. Therefore
\begin{equation}
	\mathsf{f}_{\mathrm{MgC}}(t\sigma_1,t\sigma_2,t\sigma_3)
	=t\left[
	-\frac{\beta_1}{2}(\sigma_{\max}+\sigma_{\min})
	+\beta_2\,\tau_{\mathrm{oct}}
	\right]-1\,.
\end{equation}
Since $\mathsf{f}_{\mathrm{MgC}}(\sigma_1,\sigma_2,\sigma_3)<0$, one has $-\frac{\beta_1}{2}(\sigma_{\max}+\sigma_{\min})+\beta_2\,\tau_{\mathrm{oct}}<1$. Multiplying by $t\in[0,1)$ gives $t\big[-\frac{\beta_1}{2}(\sigma_{\max}+\sigma_{\min})+\beta_2\,\tau_{\mathrm{oct}}\big]<t<1$, and hence, $\mathsf{f}_{\mathrm{MgC}}(t\sigma_1,t\sigma_2,t\sigma_3)<0$. Thus, the Mogi--Coulomb criterion satisfies the proportional reduction safety property, and consequently its safe domain is star-shaped with respect to the origin.

Fig.~\ref{fig:strength-surfaces} compares the four isotropic strength surfaces discussed in this section in principal stress space under plane stress conditions. Although all four criteria are isotropic, their geometric structures are markedly different, reflecting different assumptions about the dependence of strength on the principal stresses. All four surfaces are star-shaped with respect to the origin. In addition, the surfaces in panels (a) and (b) are convex, whereas those in panels (c) and (d) are not.
\begin{figure}[t!]
\centering
\includegraphics[width=0.75\textwidth]{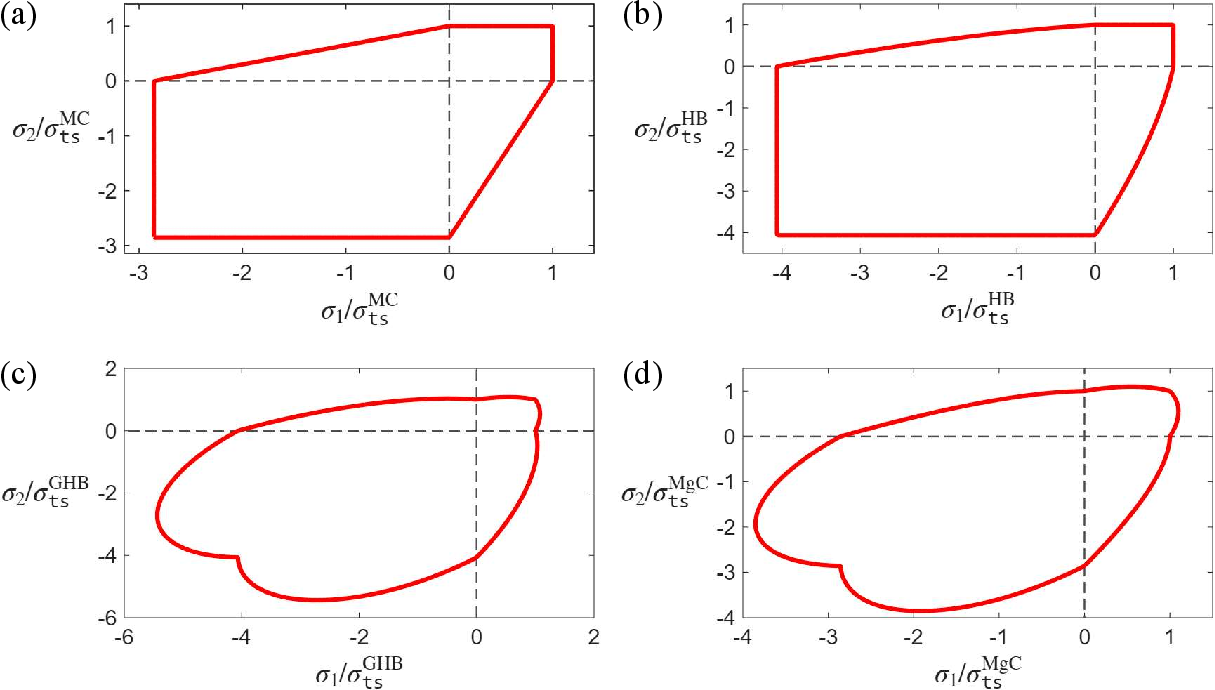}
\vspace*{0.10in}
\caption{Plot of various isotropic strength surfaces in principal stress space under plane stress conditions. (a) Mohr-Coulomb surface, (b) Hoek-Brown surface, (c) 3D Hoek-Brown or GZZ surface, and (d) Mogi-Coulomb surface. }
\label{fig:strength-surfaces}
\end{figure}

\section{Material Strength in the Presence of Residual Stresses} \label{Material-Strength-Residual-Stresses}

We next consider solids with residual stresses and anelastic distortions, and examine how the definition and representation of material strength are modified in anelasticity. 
Anelasticity, in the sense of \citet{Eckart1948}, is characterized by the presence of internal distortions that define a local relaxed configuration, so that the elastic response is measured relative to this evolving stress-free state rather than a global Euclidean reference configuration.
One may ask whether material strength can be affected by residual stresses. It should be emphasized that, in the present framework, material strength is a local material property associated with a homogeneous body under boundary tractions and in a spatially uniform state of stress. In a traction-free body with no body forces, a spatially uniform residual stress must be trivial, i.e., it must vanish \citep{Hoger1985}. In this sense, residual stress does not enter the definition of material strength at a given point. Rather, material strength is defined with respect to a stress-free state, namely for a homogeneous body in its relaxed state before a homogeneous state of stress is imposed that induces fracture.

\subsection{Material metric}

Suppose a body $\mathcal{B}$ has a distribution of eigenstrains.\footnote{\citet{Reissner1931} formulated a theory of \emph{Eigenspannungen} (eigenstresses) generated by incompatible initial distortions acting as internal stress sources. In modern terminology, these initial distortions may be interpreted as eigenstrains. The term \emph{eigenstrain} was later popularized by Mura \citep{Kinoshita1971,Mura1982}. Closely related notions appear in the literature under a variety of names, including \emph{initial strain} \citep{Kondo1949}, \emph{nuclei of strain} \citep{Mindlin1950}, \emph{transformation strain} \citep{Eshelby1957}, \emph{inherent strain} \citep{Ueda1975}, and \emph{residual strain} \citep{ambrosi2019growth}; see also \citep{Jun2010,Zhou2013}.}
Let us consider a material point $X\in\mathcal{B}$ that is mapped to $x=\varphi(X)$ in the current configuration $\mathcal{C}$. An infinitesimal line element $\mathbf{U}\in T_X\mathcal{B}$ in the reference configuration is mapped to $\mathbf{F}\mathbf{U}$ after deformation. Now imagine that independently of the rest of the body the deformed line element is allowed to elastically relax and denote it by $\Fe^{-1}\mathbf{F}\mathbf{U}$, where $\Fe^{-1}:T_x\mathcal{C}\to T_X\mathcal{B}$ is a local elastic unloading map. The resulting line element is different from that in the reference configuration. The two are related as $\Fe^{-1}\mathbf{F}\mathbf{U}=\Fa\mathbf{U}$, where $\Fa(X):T_X\mathcal{B}\to T_X \mathcal{B}$ is the anelastic distortion. As $\mathbf{U}$ is an arbitrary vector, one obtains $\mathbf{F}=\Fe\Fa$---the Bilby-Kr\"oner-Lee multiplicative decomposition of the deformation gradient \citep{bilby1957continuous,kroner1959allgemeine,lee1967finite,lee1969elastic}, see also \citep{Sadik2017,YavariSozio2023}.
Let us denote the induced flat metric of the reference configuration in the absence of eigenstrains by $\mathring{\mathbf{G}}$. The square relaxed line element length is calculated as follows
\begin{equation}
	\llangle \Fa\mathbf{U},\Fa\mathbf{U} \rrangle_{\mathring{\mathbf{G}}}
	=\llangle \mathbf{U},\mathbf{U} \rrangle_{\Fa^*\mathring{\mathbf{G}}}
	=\llangle \mathbf{U},\mathbf{U} \rrangle_{\mathbf{G}}
	\,,
\end{equation}
where $\mathbf{G}=\Fa^*\mathring{\mathbf{G}}=\Fa^\star\mathring{\mathbf{G}}\Fa$ is the material metric \citep{YavariGoriely2013,Yavari2021Eshelby}.

\subsection{Elastic energy function in anelasticity}

Let us consider an isotropic hyperelastic solid with energy function $W$, which explicitly depends on the elastic distortion: $W=W(X,\Fe,\mathring{\mathbf{G}},\mathbf{g})$. For an isotropic solid the energy function is materially covariant \citep{MarsdenHughes1983}, and hence,
\begin{equation} \label{Elastic-Energy-Anelasticity}
	W(X,\Fe,\mathring{\mathbf{G}},\mathbf{g})
	=W(X,\Fa^*\Fe,\Fa^*\mathring{\mathbf{G}},\mathbf{g})
	=W(X,\Fe\Fa,\mathbf{G},\mathbf{g})
	=W(X,\mathbf{F},\mathbf{G},\mathbf{g})
	\,.
\end{equation}
Therefore, in the presence of eigenstrains, one can work with the total strain as long as the material metric is used instead of the induced flat Euclidean metric. This can be generalized to anisotropic solids as well \citep{YavariSozio2023}.

\subsection{Material strength function of an anelastic brittle solid}

In defining material strength, one considers a thought experiment in which a small material neighborhood is isolated and allowed to relax to a stress-free configuration. The specimen is then loaded starting from this relaxed state until fracture occurs. Thus, the notion of strength is tied to the relaxed (or intermediate) configuration rather than to the current configuration viewed through the total deformation. In particular, the relevant kinematic quantity is the elastic distortion $\Fe$, and the corresponding energetically conjugate stress is
\begin{equation}
	\mathring{\mathbf{P}}= \mathbf{g}^\sharp\,\frac{\partial W}{\partial \Fe}\,.
\end{equation}
Consequently, the stress entering the strength function is not the first Piola--Kirchhoff stress $\mathbf{P}$ associated with the total deformation $\mathbf{F}$, but rather $\mathring{\mathbf{P}}$, which is defined with respect to the intermediate configuration. Therefore, the strength function is written in the following form 
\begin{equation}
	\mathsf{F}(X,\mathring{\mathbf{P}},\Fe,\mathbf{g},\mathring{\mathbf{G}})=0	\,.
\end{equation}
The explicit dependence on $X$ emphasizes that, in general, the material may be inhomogeneous.\footnote{The safe domain associated with material strength is conceptually analogous to the elastic domain of elastoplasticity. In both cases one identifies a domain in stress space together with a boundary separating admissible states from critical states. The two notions are fundamentally different, however: the boundary of the safe domain is a fracture criterion, whereas the boundary of the elastic domain is a yield criterion.}

For an isotropic material the strength function is materially covariant, i.e.,
\begin{equation} 
	\mathsf{F}(X,\mathring{\mathbf{P}},\Fe,\mathbf{g},\mathring{\mathbf{G}})
	=\mathsf{F}(X,\Fa^*\mathring{\mathbf{P}},\Fa^*\Fe,\mathbf{g},\Fa^*\mathring{\mathbf{G}})
	=\mathsf{F}(X,\Fa^*\mathring{\mathbf{P}},\mathbf{F},\mathbf{g},\mathbf{G})\,.
\end{equation}
From \eqref{Elastic-Energy-Anelasticity}, one can write
\begin{equation}
	\frac{\partial W}{\partial \Fe}
	=\frac{\partial W}{\partial \mathbf{F}}\frac{\partial \mathbf{F}}{\partial \Fe} 
	=\frac{\partial W}{\partial \mathbf{F}}\,\Fa^\star\,.
\end{equation}
We know that
\begin{equation}
	\mathring{\mathbf{P}}=\mathbf{g}^\sharp \,\frac{\partial W}{\partial \Fe}\,,\qquad
	\mathbf{P}=\mathbf{g}^\sharp \,\frac{\partial W}{\partial \mathbf{F}}\,.
\end{equation}
Hence, $\mathring{\mathbf{P}}=\mathbf{P}\,\Fa^\star$. Note that $\mathring{\mathbf{P}}(X)\in T_{\varphi(X)}\mathcal{S}\otimes T_X\mathcal{B}$, and therefore, under the pull-back by $\Fa:T_X\mathcal{B}\to T_X\mathcal{B}$, the material slot transforms with the inverse map. In coordinate-independent form, one writes $\Fa^*\mathring{\mathbf{P}}	=(\mathrm{id}_{T_{\varphi(X)}\mathcal{S}}\otimes \Fa^{-1})\,\mathring{\mathbf{P}} =\mathring{\mathbf{P}}\Fa^{-\star}$.
In components, this reads $(\Fa^*\mathring{\mathbf{P}})^{aA}=\mathring{P}^{aB}\,\cFa^{-A}{}_B$.
Thus, we have
\begin{equation}
	\Fa^*\mathring{\mathbf{P}}= \mathbf{P}\,.
\end{equation}
Therefore,
\begin{equation} 
	\mathsf{F}(X,\mathring{\mathbf{P}},\Fe,\mathbf{g},\mathring{\mathbf{G}})
	=\mathsf{F}(X,\mathbf{P},\mathbf{F},\mathbf{g},\mathbf{G})\,.
\end{equation}
Thus, similar to the elastic energy function, in the presence of eigenstrains the strength function may be expressed in terms of the total deformation and the corresponding first Piola--Kirchhoff stress, provided that the material metric $\mathbf{G}$ is used in place of the Euclidean reference metric $\mathring{\mathbf{G}}$.

\begin{remark}[Intrinsic strength, residual stress, and evolution of the strength hypersurface]
For a brittle solid with distributed eigenstrains, the strength hypersurface is represented in terms of the total deformation and the material metric as $\mathsf{F}(X,\mathbf{P},\mathbf{F},\mathbf{g},\mathbf{G})=0$.
This representation should be distinguished from the intrinsic definition of material strength. The intrinsic strength criterion is determined by isolating a material neighborhood, allowing it to relax elastically, and loading it from its local stress-free state until fracture occurs; it is therefore defined in the relaxed configuration by $\mathsf{F}(X,\mathring{\mathbf{P}},\Fe,\mathbf{g},\mathring{\mathbf{G}})=0$. The equation written in terms of $(\mathbf{P},\mathbf{F},\mathbf{G})$ is the materially covariant representation of this same criterion in the reference configuration. In the absence of eigenstrains, $\Fa=\mathrm{id}_{T\mathcal{B}}$ and $\mathbf{G}=\mathring{\mathbf{G}}$, and the strength hypersurface reduces to $\mathsf{F}(X,\mathbf{P},\mathbf{F},\mathbf{g},\mathring{\mathbf{G}})=0$. In a boundary-value problem with a prescribed distribution of eigenstrains, the stress $\mathbf{P}$ is not independent of the anelastic distortion. It is determined by the deformation, the constitutive response, the eigenstrain distribution, the equilibrium equations, and the boundary conditions. Consequently, residual stresses affect the current value of the strength function and may bring a material point closer to or farther from its strength hypersurface without altering the intrinsic criterion defined in the relaxed configuration. If the anelastic distortion is time-dependent, $\Fa=\Fa(X,t)$, and evolves according to a kinetic equation, then the material metric $\mathbf{G}(X,t)=\Fa^*(X,t)\,\mathring{\mathbf{G}}(X)=\Fa^\star(X,t)\,\mathring{\mathbf{G}}(X)\,\Fa(X,t)$ evolves as well. Hence, the representation $\mathsf{F}(X,\mathbf{P},\mathbf{F},\mathbf{g},\mathbf{G}(t))=0$ defines a time-dependent strength hypersurface in the constitutively admissible stress space. Thus, evolution of the eigenstrain field has two distinct effects: it changes the actual stress and deformation fields obtained from the boundary-value problem, and it changes the strength hypersurface through the evolving material metric. Both effects must be distinguished from an independent evolution of the intrinsic material strength caused by changes in material constitution, such as degradation, aging, or other microstructural alterations. Such intrinsic strengthening or weakening would require evolution of the material parameters entering $\mathsf{F}$, or an explicit dependence of $\mathsf{F}$ on additional internal variables.
\end{remark}

\begin{example}[Drucker--Prager strength function for a residually stressed solid]
The Drucker--Prager strength function has the form given in \eqref{Drucker-Prager-Strength-Function}. More specifically, we have $\mathcal{F}(X,\SOne,\mathbf{G})=0$, where $\mathbf{G}$ is the material metric.
This strength function is written in terms of the first invariant of the Biot stress and the second invariant of its deviatoric part. More precisely,
$\mathcal{I}_1=\operatorname{tr}_{\mathbf{G}}\SOne$ is the trace of the Biot stress, while $\mathcal{J}_2$ is the second invariant of the deviatoric Biot stress. Thus,
\begin{equation}
	\mathcal{I}_1=\operatorname{tr}_{\mathbf{G}}\SOne	=\SOneC^{AB}G_{AB}\,.
\end{equation}
The deviatoric Biot stress is defined by
\begin{equation}
	\operatorname{dev}_{\mathbf{G}}\SOne = \SOne-\frac{1}{3}\mathcal{I}_1\,\mathbf{G}^{\sharp}\,,
	\qquad
	(\operatorname{dev}_{\mathbf{G}}\SOne)^{AB}
	=\SOneC^{AB}-\frac{1}{3} \mathcal{I}_1 G^{AB}\,.
\end{equation}
Therefore,
\begin{equation}
	\mathcal{J}_2
	=\frac{1}{2}\,\operatorname{dev}_{\mathbf{G}}\SOne\!:\!\operatorname{dev}_{\mathbf{G}}\SOne
	=\frac{1}{2}\,(\operatorname{dev}_{\mathbf{G}}\SOne)^{AB}
	(\operatorname{dev}_{\mathbf{G}}\SOne)^{MN}G_{AM}G_{BN}\,.
\end{equation}
It is seen that the two invariants explicitly depend on the material metric, and hence, eigenstrains. As an explicit example of a residually stressed solid, consider a spherically symmetric body with a radially symmetric distribution of pure dilatational eigenstrain for which in spherical coordinates the material metric has the following representation \citep{Yavari2021Eshelby}
\begin{equation}\label{Material-Metric-Radial}
	\mathbf{G}=\mathbf{G}(R)=e^{2\omega(R)}
\begin{bmatrix}
  1 & 0  & 0  \\
  0 & R^2  & 0  \\
  0 & 0  &  R^2\sin^2\Theta 
\end{bmatrix}\,,
\end{equation}
where $\omega(R)$ is some function that quantifies the eigenstrain. 
The Ricci curvature of the material metric has the matrix representation\footnote{With respect to a coordinate chart $\{X^A\}$, the Levi--Civita connection coefficients are given as $\Gamma^{A}{}_{BC}=\frac{1}{2}G^{AD}(G_{CD,B}+G_{BD,C}-G_{BC,D})$. The components of the Riemann curvature tensor read $\mathcal{R}^{A}{}_{BCD}=\Gamma^{A}{}_{BD,C}-\Gamma^{A}{}_{BC,D}+\Gamma^{A}{}_{CE}\Gamma^{E}{}_{BD}-\Gamma^{A}{}_{DE}\Gamma^{E}{}_{BC}$. Lowering the first index gives $\mathcal{R}_{ABCD}=G_{AE}\mathcal{R}^{E}{}_{BCD}$, and the Ricci tensor is defined as $R_{AB}=\mathcal{R}^{C}{}_{ACB}=G^{CD}\mathcal{R}_{CADB}$. Equivalently, $R_{AB}=\Gamma^{C}{}_{AB,C}-\Gamma^{C}{}_{AC,B}+\Gamma^{C}{}_{CD}\Gamma^{D}{}_{AB}-\Gamma^{C}{}_{BD}\Gamma^{D}{}_{AC}$.}
\begin{equation}
	[\operatorname{Ric}_{\mathbf{G}}]
	=
	\begin{bmatrix}
	-2\left(\omega''+\dfrac{\omega'}{R}\right) & 0 & 0 \\[2mm]
	0 & -R^2\omega''-3R\omega'-R^2(\omega')^2 & 0 \\[2mm]
	0 & 0 & -\sin^2\Theta\left[R^2\omega''+3R\omega'+R^2(\omega')^2\right]
	\end{bmatrix}.
\end{equation}
Since a three-dimensional Riemannian manifold is locally flat if and only if its Ricci curvature vanishes, flatness of the material metric requires that
\begin{equation}
	\omega''+\frac{\omega'}{R}=0\,,\qquad
	\omega''+\frac{3\omega'}{R}+(\omega')^2=0\,.
\end{equation}
The first equation gives $\omega(R)=a\ln R+b$, and substitution into the second yields $a(a+2)=0$. Therefore,
\begin{equation}
	\omega(R)=b\,,\qquad\text{or}\qquad
	\omega(R)=b-2\ln R\,.
\end{equation}
For a solid ball containing $R=0$, regularity excludes the second solution. Hence, the material metric is flat only when $\omega$ is constant, corresponding to a uniform dilatational eigenstrain. Every nonconstant regular function $\omega(R)$ then induces a non-flat material metric and therefore an incompatible eigenstrain distribution.
For a thick spherical shell with $R\in[R_i,R_o]$ and $R_i>0$, both solutions are regular. Any function $\omega(R)$ other than these two forms produces a non-flat material metric and hence an incompatible eigenstrain distribution. Such a metric cannot be realized as a stress-free configuration in Euclidean space. Consequently, under the standard assumption that stress vanishes only when the spatial metric induced by the deformation agrees with the material metric, a traction-free equilibrium configuration must contain residual stresses.
We observe that both $\mathcal{I}_1$ and $\mathcal{J}_2$ depend explicitly on the function $\omega(R)$ through the components of the material metric and its inverse. Therefore, the Drucker--Prager strength function of a residually stressed solid depends explicitly on the underlying eigenstrain distribution.
If $\omega=\omega(R,t)$ evolves through the kinetics of the eigenstrain, then $\mathbf{G}$, $\mathcal{I}_1$, and $\mathcal{J}_2$, and hence the representation of the Drucker--Prager strength hypersurface in the reference stress space, evolve with time.
\end{example}

\section{Material Strength of Anisotropic Solids} \label{Material-Strength-Anisotropic-Solids}

In this section, we develop the invariant-theoretic representation of strength functions for anisotropic solids and analyze the role of material symmetry in determining their admissible forms.

\subsection{A brief history of strength surfaces for anisotropic solids} \label{Sec:history}

The modern foundations of failure and similarly motivated yield theories were laid in the classical works of \citet{Mises1913,Mises1928}, in which invariant-based criteria for isotropic materials were introduced.
\citet{Hill1948} proposed a macroscopic yield criterion for anisotropic metals with three mutually orthogonal planes of symmetry, i.e., orthotropic solids, by introducing a homogeneous quadratic plastic potential with six anisotropy parameters, together with associated stress--strain-increment relations. He applied the theory to problems such as rolled sheets, torsion of thin-walled cylinders, and deep drawing, and showed that the model captured experimentally observed directional effects. This work is historically relevant as an early formulation of anisotropic yield surfaces using quadratic forms, although it pertained to plastic yielding rather than brittle material strength.

Building on these earlier works on anisotropic yielding, a number of phenomenological strength and failure criteria were proposed for brittle solids and composites. \citet{Marin1957} proposed an early phenomenological extension of strength theories to anisotropic materials, allowing for different tensile and compressive strengths and formulating a criterion for combined stress states. His work already pointed to the empirical character of anisotropic strength criteria and to the need for additional information beyond simple uniaxial tests. \citet{GoldenblatKopnov1965} proposed a general tensorial strength criterion for anisotropic materials, especially glass-reinforced plastics, in which the failure function is written in terms of tensorial strength coefficients and stress components. The formulation is constructed so that the strength function is a scalar and therefore respects material frame indifference under changes of coordinates. They also showed that several earlier criteria arise as special cases of their general theory and compared its predictions with experiments under combined loading. \citet{AzziTsai1965} adapted Hill's quadratic anisotropic yield criterion to composites and showed how it could be used to predict failure when the loading is not aligned with the principal material directions, while also allowing for different tensile and compressive strengths in different quadrants. \citet{Hoffman1967} proposed a phenomenological failure criterion for orthotropic brittle materials, motivated by analogy with anisotropic yield criteria, especially Hill’s quadratic form, while emphasizing that brittle fracture is fundamentally different from yielding. His criterion introduced distinct tensile, compressive, and shear strength parameters in the material symmetry directions and, for plane stress in a unidirectional composite, led to a fracture surface that he compared with available experimental data.

\citet{TsaiWu1971} proposed a general polynomial failure criterion for anisotropic materials in which the strength surface is expressed as a quadratic function of the stress components, combining linear and quadratic terms to capture different failure modes. They imposed restrictions on the coefficients of this quadratic form so that the resulting strength surface is bounded and physically admissible. Their formulation unifies and generalizes several existing failure criteria and provides a systematic framework for incorporating material anisotropy through experimentally determined strength parameters. \citet{Cowin1986} incorporated anisotropy into a strength criterion by expressing the strength coefficients in terms of a second-order fabric tensor that characterizes the underlying material microstructure.\footnote{Cowin uses the term \emph{fabric tensor} for a symmetric second-order tensor characterizing the geometric arrangement of the microstructural components of a porous or multiphase material. Its precise definition depends on the material and on the method used to quantify its microstructure. In Cowin's formulation, anisotropy is assumed to arise solely from the geometry represented by the fabric tensor, and the strength function is taken to be an isotropic function of the stress, the fabric tensor, and the solid volume fraction. In this constitutive sense, the fabric tensor plays the role of a structural tensor, although the terms are not synonymous: a fabric tensor is a microstructural descriptor, whereas a structural tensor is a more general object used to represent material symmetry in constitutive equations. Cowin's second-order fabric-tensor representation can describe orthotropic, transversely isotropic, and isotropic symmetry, but not monoclinic or triclinic symmetry: three distinct eigenvalues of the fabric tensor yield orthotropy, two equal eigenvalues yield transverse isotropy, and three equal eigenvalues yield isotropy. Since a symmetric second-order tensor possesses mutually orthogonal principal directions, this representation cannot encode monoclinic or triclinic symmetry.} For historical surveys of macroscopic failure criteria for anisotropic and composite materials, including maximum-stress, tensor-polynomial, and quadratic criteria, see \citet{Franklin1968}, \citet{Tsai1984}, and \citet{Fan1987}; for a broader review of strength theories for materials under complex stress states, see also \citet{Yu2002}.

\subsection{Invariant representation of strength function}

Constitutive equations of nonlinear elasticity and anelasticity have been developed systematically using invariant theory. Invariant theory studies scalar and tensorial quantities that remain unchanged under the action of a symmetry group. In continuum mechanics, it provides the mathematical foundation for writing constitutive equations in a form consistent with material symmetry. For isotropic solids, this explains why constitutive equations may be written in terms of scalar invariants of tensors such as $\mathbf{C}$ or $\mathbf{b}$ rather than their individual components. The algebraic foundations of the subject go back to \citet{Hilbert1890}, whose finiteness results showed that invariant algebras admit finite generating sets. \citet{Weyl1939} later reformulated classical invariant theory in representation-theoretic terms, while \citet{Spencer1972,Spencer1982} systematically adapted these ideas to continuum mechanics and nonlinear elasticity, where structural tensors and integrity bases became standard tools.

In elasticity, the practical consequence of invariant theory is the representation theorem viewpoint: once the material symmetry group is specified, invariant theory identifies the scalar invariants that may appear in the elastic energy function or in constitutive response functions, and it also determines the corresponding tensorial representations. This was briefly discussed in \S\ref{Sec:Anisotropic-Elasticity}.

The fracture properties of a solid are ultimately determined by the distribution of defects and microstructural inhomogeneities in the material. In many situations, particularly in brittle solids, these defects are distributed in a manner that reflects the same material symmetries that govern the elastic response. Consequently, the macroscopic strength of the material is expected to inherit the same symmetry group as the underlying elastic constitutive equations.\footnote{\citet{Hoffman1967}, in his discussion of orthotropic brittle materials, observed that it is plausible to assume that the planes of elastic symmetry and the planes of strength symmetry coincide, but he also emphasized that this assumption should be made with caution. He noted that elastic properties are associated with an averaging of the underlying microstructure, whereas fracture strength is generally more sensitive to microstructural inhomogeneities and defects. This suggests that the symmetry group of strength may, in general, be a proper subgroup of the symmetry group of elasticity. In the present work, however, we assume for simplicity that the two symmetry groups coincide.}
Similar to elastic properties and their symmetries, the symmetry of material strength is therefore inherently a material notion and must be defined on the material manifold. In particular, the action of the material symmetry group is naturally expressed in terms of material stress and strain measures. Hence, in order to characterize the effect of symmetry on strength, one must start with a material description in terms of $(\mathbf{S},\mathbf{C}^\flat)$ or $(\mathbf{P},\mathbf{F})$. In view of \eqref{W-Material-Symmetry}, it is then natural to require that the strength function be invariant under the same material symmetry group. Thus, if $\mathcal{G}_X$ is the material symmetry group at $X\in\mathcal{B}$, then one must have
\begin{equation}\label{F-Material-Symmetry}
	\hat{\mathsf{F}}(\mathbf{K}^*\mathbf{S},\mathbf{K}^*\mathbf{C}^\flat,\mathbf{G})
	=\hat{\mathsf{F}}(\mathbf{S},\mathbf{C}^\flat,\mathbf{G})\,,\qquad
	\forall\,\mathbf{K}\in\mathcal{G}_X\leqslant \mathrm{Orth}(\mathbf{G})\,.
\end{equation}
Equation \eqref{F-Material-Symmetry} is the statement that the strength function has the same material symmetry group as the elastic energy function. In other words, the fracture criterion inherits the same material symmetries as the elastic response.

The material symmetry group can be characterized by a finite collection of structural tensors $\boldsymbol{\zeta}_i$ of order $\mu_i$, $i=1,\dots,N$ \citep{liu1982,boehler1987,zheng1993,zheng1994theory,Lu2000,MazzucatoRachele2006}. More precisely,
\begin{equation} \label{invar}
	\mathbf{Q}\in \mathcal{G}_X \leqslant\mathrm{Orth}(\mathbf{G}) \iff 
	\langle\mathbf{Q}\rangle_{\mu_1}\boldsymbol{\zeta}_1=\boldsymbol{\zeta}_1\,,
	\dots\,,
	\langle\mathbf{Q}\rangle_{\mu_N}\boldsymbol{\zeta}_N=\boldsymbol{\zeta}_N\,.
\end{equation}
Thus, the material symmetry group $\mathcal{G}_X$ is precisely the invariance group of the structural tensors $\boldsymbol{\zeta}_i$, $i=1,\dots,N$. In this sense, the structural tensors encode the material symmetry relevant to strength.

For a $\mathbf{G}$-orthogonal transformation $\mathbf{Q}$ and a tensor $\boldsymbol{\zeta}$ of order $\mu$, the $\mu$-th Kronecker power $\langle\mathbf{Q}\rangle_{\mu}$ is defined by
\begin{equation}
	(\langle\mathbf{Q}\rangle_{\mu}\boldsymbol{\zeta})^{\bar{A}_1\dots\bar{A}_{\mu}}
	=Q^{\bar{A}_1}{}_{A_1}\dots Q^{\bar{A}_{\mu}}{}_{A_{\mu}}\,
	\zeta^{A_1\dots A_{\mu}}\,.
\end{equation}
In particular, for arbitrary vectors $\mathbf{V}_i\in T_X\mathcal{B}$, $i=1,\dots,m$, one has
\begin{equation}
	\langle\mathbf{Q}\rangle_{m}\left(\mathbf{V}_1\otimes\dots\otimes\mathbf{V}_m\right)
	=\mathbf{Q}\mathbf{V}_1\otimes\dots\otimes\mathbf{Q}\mathbf{V}_m\,.
\end{equation}
Accordingly, one may write the strength function in the form
\begin{equation}\label{F-Structural-Tensors}
	\mathsf{F}=\widehat{\mathsf{F}}(X,\mathbf{S},\mathbf{C}^\flat,\mathbf{G},
	\boldsymbol{\zeta}_1,\dots,\boldsymbol{\zeta}_N)\,.
\end{equation}
Denoting the set of structural tensors by $\boldsymbol{\Lambda}=\{\boldsymbol{\zeta}_1,\dots,\boldsymbol{\zeta}_N\}$, one simply writes $\mathsf{F}=\widehat{\mathsf{F}}(X,\mathbf{S},\mathbf{C}^\flat,\mathbf{G}, \boldsymbol{\Lambda})$.
When the structural tensors are included among its arguments, the strength function becomes an isotropic scalar-valued function of its arguments, in the same sense as in the classical principle of isotropy of space \citep{Boehler1979}.

Instead of working directly with the tensors $\{\mathbf{S},\mathbf{C}^\flat,\mathbf{G},\boldsymbol{\zeta}_1,\dots,\boldsymbol{\zeta}_N\}$, one may equivalently use a corresponding set of isotropic invariants. By Hilbert's theorem, any finite collection of tensors admits a finite integrity basis for the algebra of isotropic invariants \citep{Spencer1971}. Denoting such an integrity basis by $I_j$, $j=1,\dots,m$, one may therefore write
\begin{equation}
	\mathsf{F}=\mathsf{F}(X,I_1,\dots,I_m)\,.
\end{equation}

\begin{example}
For a transversely isotropic solid, the material symmetry group is characterized by a single material preferred direction at each material point. Let $\mathbf{N}(X)\in T_X\mathcal{B}$ be a unit vector normal to the plane of isotropy, i.e., $\llangle \mathbf{N},\mathbf{N} \rrangle_{\mathbf{G}}=1$. In this case, a single structural tensor is sufficient to characterize the symmetry. It should be emphasized, however, that the choice of structural tensor is not unique. One convenient choice is
\begin{equation}
	\boldsymbol{\zeta}=\mathbf{N}\otimes\mathbf{N}\,.
\end{equation}
Accordingly, the strength function may be written as $\mathsf{F}=\widehat{\mathsf{F}}(X,\mathbf{S},\mathbf{C}^\flat,\mathbf{G},\mathbf{N}\otimes\mathbf{N})$. In this case, the integrity basis consists of five invariants, namely the three principal invariants of $\mathbf{C}$ together with the additional invariants $I_4=\mathbf{C}^\flat(\mathbf{N},\mathbf{N})$ and $I_5=(\mathbf{C}^\flat\mathbf{G}^{\sharp}\mathbf{C}^\flat)(\mathbf{N},\mathbf{N})$.
\end{example}

\subsubsection{Isotropic solids}

First, let us consider isotropic solids. An isotropic function of two symmetric second-order tensors can be expressed in terms of the following ten invariants \citep{RivlinEricksen1955}:
\begin{equation}
	\operatorname{tr}\mathbf{S}\,,~~
	\operatorname{tr}\mathbf{S}^2\,,~~
	\operatorname{tr}\mathbf{S}^3\,,~~
	\operatorname{tr}\mathbf{C} \,,~~
	\operatorname{tr}\mathbf{C}^2 \,,~~
	\operatorname{tr}\mathbf{C}^3 \,,~~
	\operatorname{tr}\left(\mathbf{S}\mathbf{C}\right)\,,~~
	\operatorname{tr}\left(\mathbf{S}\mathbf{C}^2\right)\,,~~
	\operatorname{tr}\left(\mathbf{S}^2\mathbf{C}\right)\,,~~
	\operatorname{tr}\left(\mathbf{S}^2\mathbf{C}^2\right)\,.
\end{equation}
Equivalently, we can  use the following invariants\footnote{Clearly, one can use the pair $I_4$ and $I_5$ instead of $I_4$ and $\operatorname{tr}\mathbf{C}^2$. Using the Cayley-Hamilton theorem, one can  show that $\operatorname{tr}\mathbf{C}^3=I_4^3-3I_4I_5+3I_6$ \citep{Spencer1971}.}
\begin{equation} \label{I-Invariants}
\begin{aligned}
	& \mathcal{I}_1=\operatorname{tr}\mathbf{S}\,,\quad
	\mathcal{I}_2=\operatorname{tr}\mathbf{S}^2\,,\quad
	\mathcal{I}_3=\operatorname{tr}\mathbf{S}^3\,,\quad
	\mathcal{I}_4=\operatorname{tr}\mathbf{C} \,,\quad
	\mathcal{I}_5=\frac{1}{2}\left[ \mathcal{I}_4^2- \operatorname{tr}\mathbf{C}^2\right]\,,\quad
	\mathcal{I}_6=\det\mathbf{C} \,,\quad \\
	& \mathcal{I}_7=\operatorname{tr}\left(\mathbf{S}\mathbf{C}\right)\,,\quad
	\mathcal{I}_8=\operatorname{tr}\left(\mathbf{S}\mathbf{C}^2 \right)\,,\quad
	\mathcal{I}_9=\operatorname{tr}\left(\mathbf{S}^2\mathbf{C} \right)\,,\quad
	\mathcal{I}_{10}=\operatorname{tr}\left(\mathbf{S}^2\mathbf{C}^2 \right)
	\,.
\end{aligned}
\end{equation}
Therefore, the material strength function has the following functional form: $\hat{\mathsf{F}}(\mathbf{S},\mathbf{C}^\flat,\mathbf{G})=\bar{\mathsf{F}}(\mathcal{I}_1,\cdots,\mathcal{I}_{10})$.

Even for isotropic solids, once the strength function is allowed to depend on both stress and strain, the corresponding integrity basis becomes rather large. For this reason, and also in preparation for the anisotropic case where the number of invariants increases further, we restrict attention here to the special case in which the strength function depends only on stress. Thus, for isotropic solids the strength function admits the representation $\hat{\mathsf{F}}(\mathbf{S},\mathbf{G})=\bar{\mathsf{F}}(\mathcal{I}_1,\mathcal{I}_2,\mathcal{I}_3)$.

\subsubsection{Transversely isotropic solids}

Recall that a transversely isotropic solid has, at each material point, a single preferred direction orthogonal to the plane of isotropy. This preferred direction at $X\in\mathcal{B}$ is denoted as $\mathbf{N}(X)$. In addition to $\mathcal{I}_1$, $\mathcal{I}_2$, and $\mathcal{I}_3$, we have the following two extra invariants
\begin{equation} 
	\mathcal{I}_4=\mathbf{N}\cdot\mathbf{S}\cdot\mathbf{N}\,,\qquad 
	\mathcal{I}_5=\mathbf{N}\cdot\mathbf{S}^2\cdot\mathbf{N}\,.
\end{equation}
Therefore, for a transversely isotropic solid, the strength function has the following invariant representation 
\begin{equation}
	\hat{\mathsf{F}} = \bar{\mathsf{F}}(\mathcal{I}_1,\mathcal{I}_2,\mathcal{I}_3,
	\mathcal{I}_4,\mathcal{I}_5)\,.
\end{equation}

\subsubsection{Orthotropic solids}

An orthotropic solid is characterized by symmetry with respect to three mutually orthogonal reflection planes. Let $\mathbf{N}_1(X)$, $\mathbf{N}_2(X)$, and $\mathbf{N}_3(X)$ denote a $\mathbf{G}$-orthonormal triad that defines the corresponding principal material directions at $X$.
Beyond $\mathcal{I}_1$, $\mathcal{I}_2$, and $\mathcal{I}_3$, four additional invariants arise:
\begin{equation} 
	\mathcal{I}_4=\mathbf{N}_1\cdot\mathbf{S}\cdot\mathbf{N}_1\,,\qquad 
	\mathcal{I}_5=\mathbf{N}_1\cdot\mathbf{S}^2\cdot\mathbf{N}_1\,,\qquad
	\mathcal{I}_6=\mathbf{N}_2\cdot\mathbf{S}\cdot\mathbf{N}_2\,,\qquad
	\mathcal{I}_7=\mathbf{N}_2\cdot\mathbf{S}^2\cdot\mathbf{N}_2\,.
\end{equation}
Accordingly, for an orthotropic solid the strength function admits the following invariant representation:
\begin{equation}
	\hat{\mathsf{F}} = \bar{\mathsf{F}}(\mathcal{I}_1,\cdots,\mathcal{I}_7)\,.
\end{equation}

\subsubsection{Monoclinic solids}

A monoclinic solid is characterized by a set of three preferred material directions described by unit vectors $\{\mathbf{N}_1,\mathbf{N}_2,\mathbf{N}_3\}$, where $\mathbf{N}_1$ and $\mathbf{N}_2$ are not orthogonal and $\mathbf{N}_3$ is orthogonal to the plane they span.
In addition to the seven invariants associated with orthotropic solids, monoclinic solids admit two further invariants:
\begin{equation}
	\mathcal{I}_8=\mathfrak{g}\,\mathbf{N}_1\cdot\mathbf{S}\cdot\mathbf{N}_2,\qquad
	\mathcal{I}_9=\mathfrak{g}^2\,,
\end{equation}  
where $\mathfrak{g}=\mathbf{N}_1\cdot\mathbf{N}_2$. Thus, for a monoclinic solid the strength function can be represented as
\begin{equation}
	\hat{\mathsf{F}} = \bar{\mathsf{F}}(\mathcal{I}_1,\cdots,\mathcal{I}_9)
	\,.
\end{equation}

\subsubsection{Strength function for anisotropic anelastic solids}

Let us denote the set of structural tensors with respect to the intermediate configuration by $\mathring{\boldsymbol{\Lambda}}=\{\mathring{\boldsymbol{\zeta}}_1,\dots,\mathring{\boldsymbol{\zeta}}_N\}$. 
With respect to this local stress-free configuration the strength function has the representation 
\begin{equation}
	\mathsf{F}(X,\mathring{\mathbf{P}},\Fe,\mathbf{g},\mathring{\mathbf{G}},\mathring{\boldsymbol{\Lambda}})=0\,.
\end{equation}
Using covariance under the anelastic distortion $\Fa$, one obtains
\begin{equation} 
	\mathsf{F}(X,\mathring{\mathbf{P}},\Fe,\mathbf{g},\mathring{\mathbf{G}},\mathring{\boldsymbol{\Lambda}})
	=\mathsf{F}(X,\Fa^*\mathring{\mathbf{P}},\Fa^*\Fe,\mathbf{g},\Fa^*\mathring{\mathbf{G}},\Fa^*\mathring{\boldsymbol{\Lambda}})
	=\mathsf{F}(X,\mathbf{P},\mathbf{F},\mathbf{g},\mathbf{G},\boldsymbol{\Lambda})\,,
\end{equation}
where $\boldsymbol{\Lambda}=\Fa^*\mathring{\boldsymbol{\Lambda}}$ is the set of anelastic structural tensors. 
The treatment of anisotropy then proceeds as for anisotropic elastic solids without eigenstrains, except that the structural tensors are replaced by their anelastically transformed counterparts and all invariants are formed using the material metric $\mathbf{G}$ rather than the flat metric $\mathring{\mathbf{G}}$.

\section{Conclusions}  \label{Sec:Conclusions}

In this paper, we formulated material strength for brittle solids in the setting of finite elasticity. We showed that a strength criterion expressed solely in terms of a chosen stress measure generally acquires dependence on the corresponding strain when rewritten using another stress measure. The pair $(\text{stress},\text{strain})$ therefore provides the natural domain for a spatially covariant formulation that is independent of the stress measure used to represent the criterion. Purely stress-based strength criteria remain admissible as an important special class for which the strain dependence is absent in the chosen representation.

We showed that the representation of a strength function depends on the choice of stress measure. Although the various stress measures are mechanically equivalent in the sense that they represent the same physical traction vector when paired with the appropriate area element, the notion of strength depends, in general, on which stress measure is assumed to be homogeneous at the onset of fracture. This leads naturally to the problem of relating strength functions written in terms of the first and second Piola--Kirchhoff, and Cauchy stresses, and we showed that spatial covariance is the fundamental principle governing the relation among these different representations.

Restricting attention to the stress-based criteria that appear in the existing literature, we defined the corresponding strength hypersurface and discussed the associated safe domain. We distinguished constitutive admissibility from fracture and clarified that the strength hypersurface should be understood as a hypersurface in the constitutively admissible stress manifold. 
We showed that, for stress-based strength functions satisfying standard regularity conditions and the requirement that sufficiently large stresses are inadmissible, the strength surface is a smooth compact hypersurface of the constitutively admissible stress manifold, defined as the zero level set of the strength function separating admissible from inadmissible stress states.
We showed that, under a natural proportional-reduction hypothesis, the safe domain is star-shaped with respect to the origin. We verified this property for several classical isotropic strength criteria, including the Mohr--Coulomb, Hoek--Brown, generalized three-dimensional Hoek--Brown, and Mogi--Coulomb surfaces.

We also discussed the role of material symmetry in strength. Motivated by the fact that fracture properties are determined by the same underlying microstructure that governs elastic response, we argued that the strength function should inherit the same material symmetry group as the constitutive equations. This provides a natural framework for characterizing anisotropic strength hypersurfaces and their invariant-theoretic representations.

Finally, we showed that this framework extends naturally to materials with internal constraints and to solids with residual stress or distributed eigenstrains, for which the appropriate strength function must be referred to the local relaxed configuration. In this setting, residual stresses do not enter merely as external parameters, but are encoded through the material metric, which modifies the invariants entering the strength function. Consequently, even when the strength function is expressed in terms of stress invariants, the corresponding strength hypersurface in stress space depends on the internal anelastic state. In particular, eigenstrains alter both the admissible set of stresses and the geometry of the strength surface, so that strength can no longer be characterized by a universal stress-based criterion independent of the material state. This highlights the essential role of the relaxed configuration and shows that, in the presence of residual stress, strength must be understood as a covariant relation involving both stress and the underlying anelastic structure. Taken together, these results provide a systematic and covariant continuum-mechanical formulation of material strength for brittle solids in both isotropic and anisotropic settings.

The present work provides a rational formulation of material strength for brittle solids within the framework of finite elasticity. Several natural directions for future work arise. A first is the extension to dissipative and rate-dependent solids, especially viscoelastic materials, where the strength function is expected to depend on additional internal variables and on the loading history. A second important direction is the study of fracture in residually stressed solids, where the interaction of strength, incompatibility, and the material metric is expected to play a fundamental role. Another natural direction is the extension of the present framework to more general classes of anisotropic solids and to evolving material symmetries.

\section*{Acknowledgement}

AY benefited from discussions with Patrizio Neff. AK acknowledges support from the National Science Foundation, United States, through the grant CMMI-2404808.

\bibliographystyle{abbrvnat}
\bibliography{ref}

\begin{thebibliography}{156}
\providecommand{\natexlab}[1]{#1}
\providecommand{\url}[1]{\texttt{#1}}
\expandafter\ifx\csname urlstyle\endcsname\relax
  \providecommand{\doi}[1]{doi: #1}\else
  \providecommand{\doi}{doi: \begingroup \urlstyle{rm}\Url}\fi

\bibitem[Ambrosi et~al.(2019)Ambrosi, Ben~Amar, Cyron, DeSimone, Goriely,
  Humphrey, and Kuhl]{ambrosi2019growth}
D.~Ambrosi, M.~Ben~Amar, C.~J. Cyron, A.~DeSimone, A.~Goriely, J.~D. Humphrey,
  and E.~Kuhl.
\newblock Growth and remodelling of living tissues: perspectives, challenges
  and opportunities.
\newblock \emph{Journal of the Royal Society Interface}, 16\penalty0
  (157):\penalty0 20190233, 2019.

\bibitem[Amor et~al.(2009)Amor, Marigo, and Maurini]{amor2009}
H.~Amor, J.-J. Marigo, and C.~Maurini.
\newblock Regularized formulation of the variational brittle fracture with
  unilateral contact: Numerical experiments.
\newblock \emph{Journal of the Mechanics and Physics of Solids}, 57\penalty0
  (8):\penalty0 1209--1229, 2009.

\bibitem[Anderson(2005)]{anderson2005}
T.~L. Anderson.
\newblock \emph{Fracture Mechanics: Fundamentals and Applications}.
\newblock CRC press, 2005.

\bibitem[Ashkenazi(1965)]{Ashkenazi1965}
E.~K. Ashkenazi.
\newblock Problems of the anisotropy of strength.
\newblock \emph{Polymer Mechanics}, 1\penalty0 (2):\penalty0 60--70, 1965.

\bibitem[Azzi and Tsai(1965)]{AzziTsai1965}
V.~D. Azzi and S.~W. Tsai.
\newblock Anisotropic strength of composites: Investigation aimed at developing
  a theory applicable to laminated as well as unidirectional composites,
  employing simple material properties derived from unidirectional specimens
  alone.
\newblock \emph{Experimental Mechanics}, 5\penalty0 (9):\penalty0 283--288,
  1965.

\bibitem[Badel et~al.(2007)Badel, Godard, and Leblond]{badel2007}
P.~Badel, V.~Godard, and J.-B. Leblond.
\newblock Application of some anisotropic damage model to the prediction of the
  failure of some complex industrial concrete structure.
\newblock \emph{International Journal of Solids and Structures}, 44\penalty0
  (18-19):\penalty0 5848--5874, 2007.

\bibitem[Ball(1976)]{Ball1976}
J.~M. Ball.
\newblock Convexity conditions and existence theorems in nonlinear elasticity.
\newblock \emph{Archive for Rational Mechanics and Analysis}, 63\penalty0
  (4):\penalty0 337--403, 1976.

\bibitem[Ball and James(1987)]{BallJames1987}
J.~M. Ball and R.~D. James.
\newblock Fine phase mixtures as minimizers of energy.
\newblock \emph{Archive for Rational Mechanics and Analysis}, 100\penalty0
  (1):\penalty0 13--52, 1987.

\bibitem[Barenblatt(1959)]{barenblatt1959}
G.~I. Barenblatt.
\newblock The formation of equilibrium cracks during brittle fracture. general
  ideas and hypotheses. axially-symmetric cracks.
\newblock \emph{Journal of Applied Mathematics and Mechanics}, 23\penalty0
  (3):\penalty0 622--636, 1959.

\bibitem[Beltrami(1885)]{Beltrami1885}
E.~Beltrami.
\newblock Sull' interpretazione meccanica delle formule di {M}axwell.
\newblock \emph{Rendiconti del Reale Istituto Lombardo di Scienze e Lettere},
  18:\penalty0 141--148, 1885.

\bibitem[Bilby et~al.(1957)Bilby, Lardner, and Stroh]{bilby1957continuous}
B.~A. Bilby, L.~R.~T. Lardner, and A.~N. Stroh.
\newblock Continuous distributions of dislocations and the theory of
  plasticity.
\newblock In \emph{Actes du IXe congr\`es international de m\'ecanique
  appliqu\'ee, (Bruxelles, 1956)}, volume~8, pages 35--44, 1957.

\bibitem[Boehler(1979)]{Boehler1979}
J.-P. Boehler.
\newblock A simple derivation of representations for non-polynomial
  constitutive equations in some cases of anisotropy.
\newblock \emph{Zeitschrift f{\"u}r Angewandte Mathematik und Mechanik},
  59\penalty0 (4):\penalty0 157--167, 1979.

\bibitem[Boehler(1987)]{boehler1987}
J.-P. Boehler.
\newblock \emph{Applications of Tensor Functions in Solid Mechanics}, volume
  292.
\newblock Springer, 1987.

\bibitem[Bonacci et~al.(2026)Bonacci, Dolbow, and
  Guilleminot]{bonacci2026stochastic}
A.~Bonacci, J.~Dolbow, and J.~Guilleminot.
\newblock Stochastic modeling of anisotropic strength surfaces from atomistic
  simulations.
\newblock \emph{Theoretical and Applied Fracture Mechanics}, page 105594, 2026.

\bibitem[Bourdin et~al.(2000)Bourdin, Francfort, and Marigo]{Bourdin00}
B.~Bourdin, G.~A. Francfort, and J.-J. Marigo.
\newblock Numerical experiments in revisited brittle fracture.
\newblock \emph{Journal of the Mechanics and Physics of Solids}, 48\penalty0
  (4):\penalty0 797--826, 2000.

\bibitem[Bourdin et~al.(2008)Bourdin, Francfort, and Marigo]{Bourdin08}
B.~Bourdin, G.~A. Francfort, and J.-J. Marigo.
\newblock The variational approach to fracture.
\newblock \emph{Journal of elasticity}, 91:\penalty0 5--148, 2008.

\bibitem[Bourdin et~al.(2025)Bourdin, Marigo, Maurini, and
  Zolesi]{bourdin2025variational}
B.~Bourdin, J.-J. Marigo, C.~Maurini, and C.~Zolesi.
\newblock A variational approach to fracture incorporating any convex strength
  criterion.
\newblock \emph{arXiv preprint arXiv:2506.22558}, 2025.

\bibitem[Carroll(1973{\natexlab{a}})]{Carroll1973}
M.~M. Carroll.
\newblock Controllable states of stress for compressible elastic solids.
\newblock \emph{Journal of Elasticity}, 3:\penalty0 57--61, 1973{\natexlab{a}}.

\bibitem[Carroll(1973{\natexlab{b}})]{Carroll1973ControllableIncompressible}
M.~M. Carroll.
\newblock {Controllable states of stress for incompressible elastic solids}.
\newblock \emph{Journal of Elasticity}, 3\penalty0 (2):\penalty0 147--153,
  1973{\natexlab{b}}.

\bibitem[Cauchy(1828)]{Cauchy1828}
A.-L. Cauchy.
\newblock Sur les {\'e}quations qui expriment les conditions d'{\'e}quilibre ou
  les lois du mouvement int{\'e}rieur d'un corps solide, {\'e}lastique ou non
  {\'e}lastique.
\newblock \emph{Exercises de Math{\'e}matiques}, 3:\penalty0 160--187, 1828.

\bibitem[Chambolle et~al.(2009)Chambolle, Francfort, and Marigo]{chambolle2009}
A.~Chambolle, G.~A. Francfort, and J.-J. Marigo.
\newblock When and how do cracks propagate?
\newblock \emph{Journal of the Mechanics and Physics of Solids}, 57\penalty0
  (9):\penalty0 1614--1622, 2009.

\bibitem[Chen et~al.(2017)Chen, Wang, and Suo]{chen2017}
C.~Chen, Z.~Wang, and Z.~Suo.
\newblock Flaw sensitivity of highly stretchable materials.
\newblock \emph{Extreme Mechanics Letters}, 10:\penalty0 50--57, 2017.

\bibitem[Chockalingam et~al.(2026)Chockalingam, Tepole, and
  Kumar]{Chockalingam2026}
S.~Chockalingam, A.~B. Tepole, and A.~Kumar.
\newblock The phase-field model of fracture incorporating {M}ohr–{C}oulomb,
  {M}ogi–{C}oulomb, and {H}oek–{B}rown strength surfaces.
\newblock \emph{Engineering Fracture Mechanics}, 340:\penalty0 112108, 2026.

\bibitem[Ciarlet(1988)]{Ciarlet1988}
P.~G. Ciarlet.
\newblock \emph{Mathematical Elasticity, Volume I: Three-Dimensional
  Elasticity}.
\newblock North-Holland, Amsterdam, 1988.

\bibitem[Comi(2001)]{comi2001}
C.~Comi.
\newblock A non-local model with tension and compression damage mechanisms.
\newblock \emph{European Journal of Mechanics-A/Solids}, 20\penalty0
  (1):\penalty0 1--22, 2001.

\bibitem[Cowin(1986)]{Cowin1986}
S.~C. Cowin.
\newblock Fabric dependence of an anisotropic strength criterion.
\newblock \emph{Mechanics of Materials}, 5\penalty0 (3):\penalty0 251--260,
  1986.

\bibitem[Doyle and Ericksen(1956)]{Doyle1956}
T.~C. Doyle and J.~L. Ericksen.
\newblock Nonlinear elasticity.
\newblock \emph{Advances in Applied Mechanics}, 4:\penalty0 53--115, 1956.

\bibitem[Dugdale(1960)]{dugdale1960}
D.~S. Dugdale.
\newblock Yielding of steel sheets containing slits.
\newblock \emph{Journal of the Mechanics and Physics of Solids}, 8\penalty0
  (2):\penalty0 100--104, 1960.

\bibitem[Ebin and Marsden(1970)]{EbinMarsden1970}
D.~G. Ebin and J.~Marsden.
\newblock Groups of {D}iffeomorphisms and the motion of an incompressible
  fluid.
\newblock \emph{Annals of Mathematics}, 92\penalty0 (1):\penalty0 102--163,
  1970.

\bibitem[Eckart(1948)]{Eckart1948}
C.~Eckart.
\newblock The thermodynamics of irreversible processes. {IV}. {T}he theory of
  elasticity and anelasticity.
\newblock \emph{Physical Review}, 73\penalty0 (4):\penalty0 373, 1948.

\bibitem[Ericksen(1954)]{Ericksen1954}
J.~L. Ericksen.
\newblock Deformations possible in every isotropic, incompressible, perfectly
  elastic body.
\newblock \emph{Zeitschrift f{\"u}r Angewandte Mathematik und Physik},
  5\penalty0 (6):\penalty0 466--489, 1954.

\bibitem[Ericksen(1955)]{Ericksen1955}
J.~L. Ericksen.
\newblock Deformations possible in every compressible, isotropic, perfectly
  elastic material.
\newblock \emph{Journal of Mathematics and Physics}, 34\penalty0
  (1-4):\penalty0 126--128, 1955.

\bibitem[Eshelby(1957)]{Eshelby1957}
J.~D. Eshelby.
\newblock The determination of the elastic field of an ellipsoidal inclusion,
  and related problems.
\newblock \emph{Proceedings of the Royal Society of London A}, 241\penalty0
  (1226):\penalty0 376--396, 1957.

\bibitem[Fan(1987)]{Fan1987}
W.-x. Fan.
\newblock On phenomenological anisotropic failure criteria.
\newblock \emph{Composites Science and Technology}, 28:\penalty0 269--278,
  1987.

\bibitem[Francfort and Marigo(1998)]{FrancfortMarigo1998}
G.~A. Francfort and J.-J. Marigo.
\newblock Revisiting brittle fracture as an energy minimization problem.
\newblock \emph{Journal of the Mechanics and Physics of Solids}, 46\penalty0
  (8):\penalty0 1319--1342, 1998.

\bibitem[Franklin(1968)]{Franklin1968}
H.~G. Franklin.
\newblock Classic theories of failure of anisotropic materials.
\newblock \emph{Fibre Science and Technology}, 1\penalty0 (2):\penalty0
  137--150, 1968.

\bibitem[Giulini(2007)]{Giulini2007}
D.~Giulini.
\newblock Some remarks on the notions of general covariance and background
  independence.
\newblock \emph{Lecture Notes in Physics}, 721:\penalty0 105--120, 2007.

\bibitem[Gol'Denblat and Kopnov(1965)]{GoldenblatKopnov1965}
I.~I. Gol'Denblat and V.~A. Kopnov.
\newblock Strength of glass-reinforced plastics in the complex stress state.
\newblock \emph{Polymer Mechanics}, 1\penalty0 (2):\penalty0 54--59, 1965.

\bibitem[Green and Naghdi(1965)]{GreenNaghdi1965}
A.~E. Green and P.~M. Naghdi.
\newblock A general theory of an elastic-plastic continuum.
\newblock \emph{Archive for Rational Mechanics and Analysis}, 18:\penalty0
  251--281, 1965.

\bibitem[Green and Rivlin(1964)]{Green64}
A.~E. Green and R.~S. Rivlin.
\newblock On {C}auchy's equations of motion.
\newblock \emph{Zeitschrift f{\"u}r Angewandte Mathematik und Physik},
  15\penalty0 (3):\penalty0 290--292, 1964.

\bibitem[Green(1838)]{Green1838}
G.~Green.
\newblock On the laws of the reflexion and refraction of light at the common
  surface of two non-crystallized media.
\newblock \emph{Transactions of the Cambridge Philosophical Society},
  7:\penalty0 1, 1838.

\bibitem[Green(1839)]{Green1839}
G.~Green.
\newblock On the propagation of light in crystallized media.
\newblock \emph{Transactions of the Cambridge Philosophical Society},
  7:\penalty0 121, 1839.

\bibitem[Griffith(1921)]{Griffith1921}
A.~A. Griffith.
\newblock {VI}. {T}he phenomena of rupture and flow in solids.
\newblock \emph{Philosophical Transactions of the Royal Society of London.
  Series A}, 221\penalty0 (582-593):\penalty0 163--198, 1921.

\bibitem[Gurtin and Spear(1983)]{Gurtin1983}
M.~E. Gurtin and K.~Spear.
\newblock On the relationship between the logarithmic strain rate and the
  stretching tensor.
\newblock \emph{International Journal of Solids and Structures}, 19\penalty0
  (5):\penalty0 437--444, 1983.

\bibitem[Hamdi et~al.(2006)Hamdi, Abdelaziz, A{\"\i}t~Hocine, Heuillet, and
  Benseddiq]{hamdi2006}
A.~Hamdi, M.~N. Abdelaziz, N.~A{\"\i}t~Hocine, P.~Heuillet, and N.~Benseddiq.
\newblock A fracture criterion of rubber-like materials under plane stress
  conditions.
\newblock \emph{Polymer Testing}, 25\penalty0 (8):\penalty0 994--1005, 2006.

\bibitem[Hencky(1924)]{Hencky1924}
H.~Hencky.
\newblock Zur {T}heorie plastischer {D}eformationen und der hierdurch im
  {M}aterial hervorgerufenen {N}achspannungen.
\newblock \emph{Zeitschrift f{\"u}r Angewandte Mathematik und Mechanik},
  4:\penalty0 323--334, 1924.

\bibitem[Hilbert(1890)]{Hilbert1890}
D.~Hilbert.
\newblock {U}eber die theorie der algebraischen formen.
\newblock \emph{Mathematische Annalen}, 36:\penalty0 473--534, 1890.

\bibitem[Hill(1948)]{Hill1948}
R.~Hill.
\newblock A theory of the yielding and plastic flow of anisotropic metals.
\newblock \emph{Proceedings of the Royal Society of London. Series A.
  Mathematical and Physical Sciences}, 193\penalty0 (1033):\penalty0 281--297,
  1948.

\bibitem[Hill(1968)]{Hill1968}
R.~Hill.
\newblock On constitutive inequalities for simple materials--{I}.
\newblock \emph{Journal of the Mechanics and Physics of Solids}, 16\penalty0
  (4):\penalty0 229--242, 1968.

\bibitem[Hill(1970)]{Hill1970}
R.~Hill.
\newblock Constitutive inequalities for isotropic elastic solids under finite
  strain.
\newblock \emph{Proceedings of the Royal Society of London A}, 314\penalty0
  (1519):\penalty0 457--472, 1970.

\bibitem[Hill(1978)]{Hill1978}
R.~Hill.
\newblock Aspects of invariance in solid mechanics.
\newblock \emph{Advances in Applied Mechanics}, 18:\penalty0 1--75, 1978.

\bibitem[Hirsch(1976)]{Hirsch1976}
M.~W. Hirsch.
\newblock \emph{Differential Topology}, volume~33 of \emph{Graduate Texts in
  Mathematics}.
\newblock Springer-Verlag, New York, 1976.

\bibitem[Hoffman(1967)]{Hoffman1967}
O.~Hoffman.
\newblock The brittle strength of orthotropic materials.
\newblock \emph{Journal of Composite Materials}, 1\penalty0 (2):\penalty0
  200--206, 1967.

\bibitem[Hoger(1985)]{Hoger1985}
A.~Hoger.
\newblock On the residual stress possible in an elastic body with material
  symmetry.
\newblock \emph{Archive for Rational Mechanics and Analysis}, 88\penalty0
  (3):\penalty0 271--289, 1985.

\bibitem[Hoger(1986)]{Hoger1986}
A.~Hoger.
\newblock The material time derivative of logarithmic strain.
\newblock \emph{International Journal of Solids and Structures}, 22\penalty0
  (9):\penalty0 1019--1032, 1986.

\bibitem[Hoger(1987)]{Hoger1987}
A.~Hoger.
\newblock The stress conjugate to logarithmic strain.
\newblock \emph{International Journal of Solids and Structures}, 23\penalty0
  (12):\penalty0 1645--1656, 1987.

\bibitem[Huber(1904)]{Huber1904}
M.~T. Huber.
\newblock O podstawach wytrzyma{\l}o{\'s}ci materja{\l}{\'o}w.
\newblock \emph{Czasopismo Techniczne}, 1904.

\bibitem[Huber(2004)]{Huber2004}
M.~T. Huber.
\newblock Specific work of strain as a measure of material effort.
\newblock \emph{Archives of Mechanics}, 56\penalty0 (3):\penalty0 173--190,
  2004.
\newblock English translation of the 1904 Polish original.

\bibitem[Hughes and Marsden(1977)]{HuMa1977}
T.~J.~R. Hughes and J.~E. Marsden.
\newblock Some applications of geometry is continuum mechanics.
\newblock \emph{Reports on Mathematical Physics}, 12\penalty0 (1):\penalty0
  35--44, 1977.

\bibitem[Jun and Korsunsky(2010)]{Jun2010}
T.-S. Jun and A.~M. Korsunsky.
\newblock Evaluation of residual stresses and strains using the eigenstrain
  reconstruction method.
\newblock \emph{International Journal of Solids and Structures}, 47\penalty0
  (13):\penalty0 1678--1686, 2010.

\bibitem[Kamarei et~al.(2025)Kamarei, Breedlove, and
  Lopez-Pamies]{KamareiBreedloveLopezPamies2025}
F.~Kamarei, E.~Breedlove, and O.~Lopez-Pamies.
\newblock {N}ucleation and propagation of fracture in viscoelastic elastomers:
  {A} complete phase-field theory.
\newblock \emph{Computer Methods in Applied Mechanics and Engineering},
  446:\penalty0 118337, 2025.

\bibitem[Kamarei et~al.(2026)Kamarei, Zeng, Dolbow, and
  Lopez-Pamies]{Kamarei2026}
F.~Kamarei, B.~Zeng, J.~E. Dolbow, and O.~Lopez-Pamies.
\newblock Nine circles of elastic brittle fracture.
\newblock \emph{Computer Methods in Applied Mechanics and Engineering},
  448:\penalty0 118449, 2026.

\bibitem[Kinoshita and Mura(1971)]{Kinoshita1971}
N.~Kinoshita and T.~Mura.
\newblock Elastic fields of inclusions in anisotropic media.
\newblock \emph{Physica Status Solidi (a)}, 5\penalty0 (3):\penalty0 759--768,
  1971.

\bibitem[Knauss(1967)]{Knauss1967}
W.~G. Knauss.
\newblock {A}n upper bound of failure in viscoelastic materials subjected to
  multiaxial stress states.
\newblock \emph{International Journal of Fracture}, 3:\penalty0 267--277, 1967.

\bibitem[Kondo(1949)]{Kondo1949}
K.~Kondo.
\newblock A proposal of a new theory concerning the yielding of materials based
  on {R}iemannian geometry.
\newblock \emph{The Journal of the Japan Society of Aeronautical Engineering},
  2\penalty0 (8):\penalty0 29--31, 1949.

\bibitem[Kr{\"o}ner(1959)]{kroner1959allgemeine}
E.~Kr{\"o}ner.
\newblock Allgemeine kontinuumstheorie der versetzungen und eigenspannungen.
\newblock \emph{Archive for Rational Mechanics and Analysis}, 4\penalty0
  (1):\penalty0 273--334, 1959.

\bibitem[Kumar and Lopez-Pamies(2020)]{KumarLopezPamies2020}
A.~Kumar and O.~Lopez-Pamies.
\newblock The phase-field approach to self-healable fracture of elastomers: A
  model accounting for fracture nucleation at large, with application to a
  class of conspicuous experiments.
\newblock \emph{Theoretical and Applied Fracture Mechanics}, 107:\penalty0
  102550, 2020.

\bibitem[Kumar et~al.(2018)Kumar, Francfort, and Lopez-Pamies]{KFLP18}
A.~Kumar, G.~A. Francfort, and O.~Lopez-Pamies.
\newblock Fracture and healing of elastomers: A phase-transition theory and
  numerical implementation.
\newblock \emph{Journal of the Mechanics and Physics of Solids}, 112:\penalty0
  523--551, 2018.

\bibitem[Kumar et~al.(2020)Kumar, Bourdin, Francfort, and
  Lopez-Pamies]{KumarBourdinFrancfortLopezPamies2020}
A.~Kumar, B.~Bourdin, G.~A. Francfort, and O.~Lopez-Pamies.
\newblock Revisiting nucleation in the phase-field approach to brittle
  fracture.
\newblock \emph{Journal of the Mechanics and Physics of Solids}, 142:\penalty0
  104027, 2020.

\bibitem[Kumar et~al.(2024)Kumar, Liu, Dolbow, and Lopez-Pamies]{KLDLP24}
A.~Kumar, Y.~Liu, J.~E. Dolbow, and O.~Lopez-Pamies.
\newblock The strength of the brazilian fracture test.
\newblock \emph{Journal of the Mechanics and Physics of Solids}, 182:\penalty0
  105473, 2024.

\bibitem[Lam{\'e} and Clapeyron(1833)]{lame1833}
G.~Lam{\'e} and B.~P.~{\'E}. Clapeyron.
\newblock \emph{Memoir on the internal equilibrium of homogeneous solid
  bodies}.
\newblock Verlag nicht hermittelbar, 1833.

\bibitem[Lamont et~al.(2025)Lamont, Bouklas, and Vernerey]{lamont2025cohesive}
S.~C. Lamont, N.~Bouklas, and F.~J. Vernerey.
\newblock Cohesive instability in elastomers: {I}nsights from a crosslinked
  {V}an der {W}aals fluid model.
\newblock \emph{International Journal of Fracture}, 249\penalty0 (1):\penalty0
  20, 2025.

\bibitem[Larsen(2021)]{larsen2021}
C.~J. Larsen.
\newblock Variational fracture with boundary loads.
\newblock \emph{Applied Mathematics Letters}, 121:\penalty0 107437, 2021.

\bibitem[Larsen(2024)]{larsen2024}
C.~J. Larsen.
\newblock A local variational principle for fracture.
\newblock \emph{Journal of the Mechanics and Physics of Solids}, 187:\penalty0
  105625, 2024.

\bibitem[Lee and Liu(1967)]{lee1967finite}
E.~Lee and D.~Liu.
\newblock Finite-strain elastic-plastic theory with application to plane-wave
  analysis.
\newblock \emph{Journal of Applied Physics}, 38\penalty0 (1):\penalty0 19--27,
  1967.

\bibitem[Lee(1969)]{lee1969elastic}
E.~H. Lee.
\newblock Elastic-plastic deformation at finite strains.
\newblock \emph{Journal of Applied Mechanics}, 36\penalty0 (1):\penalty0 1--6,
  1969.

\bibitem[Lee(2013)]{Lee2013}
J.~M. Lee.
\newblock \emph{Introduction to Smooth Manifolds}, volume 218 of \emph{Graduate
  Texts in Mathematics}.
\newblock Springer, 2 edition, 2013.

\bibitem[Li and Wong(2013)]{li2013brazilian}
D.~Li and L.~N.~Y. Wong.
\newblock The brazilian disc test for rock mechanics applications: {R}eview and
  new insights.
\newblock \emph{Rock Mechanics and Rock Engineering}, 46\penalty0 (2):\penalty0
  269--287, 2013.

\bibitem[Liu(1982)]{liu1982}
I.~Liu.
\newblock On representations of anisotropic invariants.
\newblock \emph{International Journal of Engineering Science}, 20\penalty0
  (10):\penalty0 1099--1109, 1982.

\bibitem[Lopez-Pamies and Kamarei(2025)]{lopezpamieskamarei2025}
O.~Lopez-Pamies and F.~Kamarei.
\newblock When and where do large cracks grow? {G}riffith energy competition
  constrained by material strength.
\newblock \emph{Extreme Mechanics Letters}, 81:\penalty0 102417, 2025.

\bibitem[Lopez-Pamies et~al.(2025)Lopez-Pamies, Dolbow, Francfort, and
  Larsen]{LopezPamies2025}
O.~Lopez-Pamies, J.~E. Dolbow, G.~A. Francfort, and C.~J. Larsen.
\newblock Classical variational models cannot predict fracture nucleation.
\newblock \emph{Computer Methods in Applied Mechanics and Engineering},
  433:\penalty0 117520, 2025.

\bibitem[Lu(2012)]{Lu2012}
J.~Lu.
\newblock A covariant constitutive theory for anisotropic hyperelastic solids
  with initial strains.
\newblock \emph{Mathematics and Mechanics of Solids}, 17\penalty0 (2):\penalty0
  104--119, 2012.

\bibitem[Lu and Papadopoulos(2000)]{Lu2000}
J.~Lu and P.~Papadopoulos.
\newblock A covariant constitutive description of anisotropic non-linear
  elasticity.
\newblock \emph{Zeitschrift f{\"u}r Angewandte Mathematik und Physik},
  51\penalty0 (2):\penalty0 204--217, 2000.

\bibitem[Malmeister(1966)]{Malmeister1966}
A.~K. Malmeister.
\newblock Geometry of theories of strength.
\newblock \emph{Polymer Mechanics}, 2\penalty0 (4):\penalty0 324--331, 1966.

\bibitem[Marin(1957)]{Marin1957}
J.~Marin.
\newblock Theories of strength for combined stresses and nonisotropic
  materials.
\newblock \emph{Journal of the Aeronautical Sciences}, 24\penalty0
  (4):\penalty0 265--268, 1957.

\bibitem[Marsden and Hughes(1983)]{MarsdenHughes1983}
J.~E. Marsden and T.~J.~R. Hughes.
\newblock \emph{Mathematical Foundations of Elasticity}.
\newblock Prentice-Hall, Inc., Englewood Cliffs, New Jersey, 1983.

\bibitem[Maso et~al.(2005)Maso, Francfort, and Toader]{francfort2005nonlinear}
G.~D. Maso, G.~A. Francfort, and R.~Toader.
\newblock Quasistatic crack growth in nonlinear elasticity.
\newblock \emph{Archive for Rational Mechanics and Analysis}, 176\penalty0
  (2):\penalty0 165--225, 2005.

\bibitem[Mazars(1986)]{mazars1986}
J.~Mazars.
\newblock A model of a unilateral elastic damageable material and its
  application to concrete.
\newblock \emph{Fracture toughness and fracture energy of concrete}, pages
  61--71, 1986.

\bibitem[Mazars and Pijaudier-Cabot(1989)]{mazars1989}
J.~Mazars and G.~Pijaudier-Cabot.
\newblock Continuum damage theory—application to concrete.
\newblock \emph{Journal of engineering mechanics}, 115\penalty0 (2):\penalty0
  345--365, 1989.

\bibitem[Mazzucato and Rachele(2006)]{MazzucatoRachele2006}
A.~L. Mazzucato and L.~V. Rachele.
\newblock Partial uniqueness and obstruction to uniqueness in inverse problems
  for anisotropic elastic media.
\newblock \emph{Journal of Elasticity}, 83\penalty0 (3):\penalty0 205--245,
  2006.

\bibitem[Merodio and Ogden(2020)]{merodio2020finite}
J.~Merodio and R.~W. Ogden.
\newblock Finite deformation elasticity theory.
\newblock In \emph{Constitutive Modelling of Solid Continua}, pages 17--52.
  Springer, 2020.

\bibitem[Miehe et~al.(2010)Miehe, Welschinger, and Hofacker]{miehe2010}
C.~Miehe, F.~Welschinger, and M.~Hofacker.
\newblock Thermodynamically consistent phase-field models of fracture:
  Variational principles and multi-field fe implementations.
\newblock \emph{International journal for numerical methods in engineering},
  83\penalty0 (10):\penalty0 1273--1311, 2010.

\bibitem[Mihai and Neff(2017)]{MihaiNeff2017}
L.~A. Mihai and P.~Neff.
\newblock Hyperelastic bodies under homogeneous {C}auchy stress induced by
  non-homogeneous finite deformations.
\newblock \emph{International Journal of Non-Linear Mechanics}, 89:\penalty0
  93--100, 2017.

\bibitem[{Mihai} and {Neff}(2018)]{MihaiNeff2018}
L.~A. {Mihai} and P.~{Neff}.
\newblock Hyperelastic bodies under homogeneous {C}auchy stress induced by
  three-dimensional non-homogeneous deformations.
\newblock \emph{Mathematics and Mechanics of Solids}, 23\penalty0 (4):\penalty0
  606--616, 2018.

\bibitem[Mindlin and Cheng(1950)]{Mindlin1950}
R.~D. Mindlin and D.~H. Cheng.
\newblock Nuclei of strain in the semi-infinite solid.
\newblock \emph{Journal of Applied Physics}, 21\penalty0 (9):\penalty0
  926--930, 1950.

\bibitem[Morgan(1966)]{Morgan1966}
A.~J.~A. Morgan.
\newblock Some properties of media defined by constitutive equations in
  implicit form.
\newblock \emph{International Journal of Engineering Science}, 4\penalty0
  (2):\penalty0 155--178, 1966.

\bibitem[Mura(1982)]{Mura1982}
T.~Mura.
\newblock \emph{Micromechanics of Defects in Solids}.
\newblock Martinus Nijhoff, 1982.

\bibitem[Neff and Mihai(2017)]{NeffMihai2017}
P.~Neff and L.~A. Mihai.
\newblock Injectivity of the {C}auchy-stress tensor along rank-one connected
  lines under strict rank-one convexity condition.
\newblock \emph{Journal of Elasticity}, 127\penalty0 (2):\penalty0 309--315,
  2017.

\bibitem[Neff et~al.(2015{\natexlab{a}})Neff, Ghiba, and Lankeit]{Neff2015I}
P.~Neff, I.-D. Ghiba, and J.~Lankeit.
\newblock The exponentiated {H}encky-logarithmic strain energy. {P}art {I}:
  {C}onstitutive issues and rank-one convexity.
\newblock \emph{Journal of Elasticity}, 121\penalty0 (2):\penalty0 143--234,
  2015{\natexlab{a}}.

\bibitem[Neff et~al.(2015{\natexlab{b}})Neff, Lankeit, Ghiba, Martin, and
  Steigmann]{Neff2015II}
P.~Neff, J.~Lankeit, I.-D. Ghiba, R.~Martin, and D.~Steigmann.
\newblock The exponentiated {H}encky-logarithmic strain energy. {P}art {II}:
  coercivity, planar polyconvexity and existence of minimizers.
\newblock \emph{Zeitschrift f{\"u}r angewandte Mathematik und Physik},
  66\penalty0 (4):\penalty0 1671--1693, 2015{\natexlab{b}}.

\bibitem[Neff et~al.(2016)Neff, Eidel, and Martin]{Neff2016}
P.~Neff, B.~Eidel, and R.~J. Martin.
\newblock Geometry of logarithmic strain measures in solid mechanics.
\newblock \emph{Archive for Rational Mechanics and Analysis}, 222\penalty0
  (2):\penalty0 507--572, 2016.

\bibitem[Neff et~al.(2020)Neff, Graban, Schweickert, and
  Martin]{Neff2020Axiomatic}
P.~Neff, K.~Graban, E.~Schweickert, and R.~J. Martin.
\newblock The axiomatic introduction of arbitrary strain tensors by {H}ans
  {R}ichter--a commented translation of `{S}train tensor, strain deviator and
  stress tensor for finite deformations'.
\newblock \emph{Mathematics and Mechanics of Solids}, 25\penalty0 (5):\penalty0
  1060--1080, 2020.

\bibitem[Neff et~al.(2025)Neff, Holthausen, d’Agostino, Bernardini, Sky,
  Ghiba, and Martin]{Neff2025}
P.~Neff, S.~Holthausen, M.~V. d’Agostino, D.~Bernardini, A.~Sky, I.-D. Ghiba,
  and R.~J. Martin.
\newblock Hypo-elasticity, {C}auchy-elasticity, corotational stability and
  monotonicity in the logarithmic strain.
\newblock \emph{Journal of the Mechanics and Physics of Solids}, 202:\penalty0
  106074, 2025.

\bibitem[Noll(1963)]{Noll1963}
W.~Noll.
\newblock La mecanique classique, basee sur un axiome d'objectivite.
\newblock In \emph{La Methode Axiomatique dans les Mecaniques Classiques et
  NouveIies}, pages 47--56, Paris, 1963.

\bibitem[Norton(1993)]{Norton1993}
J.~D. Norton.
\newblock General covariance and the foundations of general relativity: Eight
  decades of dispute.
\newblock \emph{Reports on Progress in Physics}, 56\penalty0 (7):\penalty0
  791--858, 1993.

\bibitem[Ogden(1997)]{Ogden1984}
R.~W. Ogden.
\newblock \emph{Non-Linear Elastic Deformations}.
\newblock Dover, 1997.

\bibitem[Rajagopal(2003)]{Rajagopal2003}
K.~R. Rajagopal.
\newblock On implicit constitutive theories.
\newblock \emph{Applications of Mathematics}, 48:\penalty0 279--319, 2003.

\bibitem[Rajagopal(2007)]{Rajagopal2007}
K.~R. Rajagopal.
\newblock The elasticity of elasticity.
\newblock \emph{Zeitschrift f{\"u}r angewandte Mathematik und Physik},
  58:\penalty0 309--317, 2007.

\bibitem[Reissner(1931)]{Reissner1931}
H.~Reissner.
\newblock Eigenspannungen und {E}igenspannungsquellen.
\newblock \emph{Zeitschrift f{\"u}r Angewandte Mathematik und Mechanik},
  11\penalty0 (1):\penalty0 1--8, 1931.

\bibitem[Rivlin and Ericksen(1955)]{RivlinEricksen1955}
R.~S. Rivlin and J.~L. Ericksen.
\newblock Stress-deformation relations for isotropic materials.
\newblock \emph{Journal of Rational Mechanics and Analysis}, 4\penalty0
  (6):\penalty0 323--425, 1955.

\bibitem[Rockafellar(1970)]{Rockafellar1970}
R.~T. Rockafellar.
\newblock \emph{Convex Analysis}, volume~28 of \emph{Princeton Mathematical
  Series}.
\newblock Princeton University Press, Princeton, New Jersey, 1970.

\bibitem[Rudin(1976)]{Rudin1976}
W.~Rudin.
\newblock \emph{Principles of Mathematical Analysis}.
\newblock McGraw-Hill, 3 edition, 1976.

\bibitem[Sadik and Yavari(2017)]{Sadik2017}
S.~Sadik and A.~Yavari.
\newblock On the origins of the idea of the multiplicative decomposition of the
  deformation gradient.
\newblock \emph{Mathematics and Mechanics of Solids}, 22\penalty0 (4):\penalty0
  771--772, 2017.

\bibitem[Sato et~al.(1987)Sato, Awaji, Kawamata, Kurumada, and
  Oku]{sato1987graphite}
S.~Sato, H.~Awaji, K.~Kawamata, A.~Kurumada, and T.~Oku.
\newblock Fracture criteria of reactor graphite under multiaxial stesses.
\newblock \emph{Nuclear Engineering and Design}, 103\penalty0 (3):\penalty0
  291--300, 1987.

\bibitem[Schweickert et~al.(2018)Schweickert, Mihai, and Neff]{Schweickert2018}
E.~Schweickert, L.~A. Mihai, and P.~Neff.
\newblock Homogeneous {C}auchy stress induced by non-homogeneous deformations.
\newblock \emph{Proceedings in Applied Mathematics and Mechanics}, 18\penalty0
  (1):\penalty0 e201800185, 2018.

\bibitem[Sharma and Lim(1965)]{sharma1965experimental}
M.~Sharma and C.~Lim.
\newblock Experimental investigations on fracture of polymers.
\newblock \emph{Polymer Engineering \& Science}, 5\penalty0 (4):\penalty0
  254--262, 1965.

\bibitem[Simo and Marsden(1984)]{SimoMarsden1984}
J.~C. Simo and J.~E. Marsden.
\newblock On the rotated stress tensor and the material version of the
  {D}oyle-{E}ricksen formula.
\newblock \emph{Archive for Rational Mechanics and Analysis}, 86\penalty0
  (3):\penalty0 213--231, 1984.

\bibitem[Simo et~al.(1988)Simo, Marsden, and Krishnaprasad]{Simo1988}
J.~C. Simo, J.~E. Marsden, and P.~S. Krishnaprasad.
\newblock The {H}amiltonian structure of nonlinear elasticity: the material and
  convective representations of solids, rods, and plates.
\newblock \emph{Archive for Rational Mechanics and Analysis}, 104\penalty0
  (2):\penalty0 125--183, 1988.

\bibitem[Smith(1958)]{Smith1958}
T.~L. Smith.
\newblock {D}ependence of the ultimate properties of a {GR-S} rubber on strain
  rate and temperature.
\newblock \emph{Journal of Polymer Science}, 32:\penalty0 99--113, 1958.

\bibitem[Smith(1963)]{Smith1963}
T.~L. Smith.
\newblock {U}ltimate tensile properties of elastomers. {I}. {C}haracterization
  by a time and temperature independent failure envelope.
\newblock \emph{Journal of Polymer Science Part A}, 1:\penalty0 3597--3613,
  1963.

\bibitem[Smith(1964{\natexlab{a}})]{Smith1964a}
T.~L. Smith.
\newblock {R}elations between ultimate tensile properties of elastomers and
  their structure.
\newblock \emph{Proceedings of the Royal Society of London A}, 282:\penalty0
  102--113, 1964{\natexlab{a}}.

\bibitem[Smith(1964{\natexlab{b}})]{Smith1964b}
T.~L. Smith.
\newblock {U}ltimate tensile properties of elastomers. {II}. {C}omparison of
  failure envelopes for unfilled vulcanizates.
\newblock \emph{Journal of Applied Physics}, 35:\penalty0 27--36,
  1964{\natexlab{b}}.

\bibitem[Smith(1964{\natexlab{c}})]{smith1964}
T.~L. Smith.
\newblock Ultimate tensile properties of elastomers. ii. comparison of failure
  envelopes for unfilled vulcanizates.
\newblock \emph{Journal of Applied Physics}, 35\penalty0 (1):\penalty0 27--36,
  1964{\natexlab{c}}.

\bibitem[Smith and Rinde(1969)]{smith1969}
T.~L. Smith and J.~A. Rinde.
\newblock Ultimate tensile properties of elastomers. v. rupture in constrained
  biaxial tensions.
\newblock \emph{Journal of Polymer Science Part A-2: Polymer Physics},
  7\penalty0 (4):\penalty0 675--685, 1969.

\bibitem[Spencer(2015)]{Spencer2015}
A.~Spencer.
\newblock George {G}reen and the foundations of the theory of elasticity.
\newblock \emph{Journal of Engineering Mathematics}, 95\penalty0 (1):\penalty0
  5--6, 2015.

\bibitem[Spencer(1971)]{Spencer1971}
A.~J.~M. Spencer.
\newblock Part {III}. {T}heory of invariants.
\newblock \emph{Continuum Physics}, 1:\penalty0 239--353, 1971.

\bibitem[Spencer(1972)]{Spencer1972}
A.~J.~M. Spencer.
\newblock \emph{Deformations of Fibre-Reinforced Materials}.
\newblock Clarendon Press, Oxford, 1972.

\bibitem[Spencer(1982)]{Spencer1982}
A.~J.~M. Spencer.
\newblock The formulation of constitutive equation for anisotropic solids.
\newblock In \emph{Mechanical Behavior of Anisotropic Solids/Comportment
  M{\'e}chanique des Solides Anisotropes}, pages 3--26. Springer, 1982.

\bibitem[Spencer(1986)]{Spencer1986}
A.~J.~M. Spencer.
\newblock Modelling of finite deformations of anisotropic materials.
\newblock In \emph{Large Deformations of Solids: Physical Basis and
  Mathematical Modelling}, pages 41--52. Springer, 1986.

\bibitem[Truesdell(1952)]{Truesdell1952}
C.~Truesdell.
\newblock The mechanical foundations of elasticity and fluid dynamics.
\newblock \emph{Journal of Rational Mechanics and Analysis}, 1\penalty0
  (1):\penalty0 125--300, 1952.

\bibitem[Truesdell and Noll(2004)]{TruesdellNoll2004}
C.~Truesdell and W.~Noll.
\newblock \emph{The {N}onlinear {F}ield {T}heories of {M}echanics}.
\newblock Springer, Berlin, 2 edition, 2004.

\bibitem[Tsai(1984)]{Tsai1984}
S.~W. Tsai.
\newblock A survey of macroscopic failure criteria for composite materials.
\newblock \emph{Journal of Reinforced Plastics and Composites}, 3\penalty0
  (1):\penalty0 40--62, 1984.

\bibitem[Tsai and Wu(1971)]{TsaiWu1971}
S.~W. Tsai and E.~M. Wu.
\newblock A general theory of strength for anisotropic materials.
\newblock \emph{Journal of Composite Materials}, 5\penalty0 (1):\penalty0
  58--80, 1971.

\bibitem[Ueda et~al.(1975)Ueda, Fukuda, Nakacho, and Endo]{Ueda1975}
Y.~Ueda, K.~Fukuda, K.~Nakacho, and S.~Endo.
\newblock A new measuring method of residual stresses with the aid of finite
  element method and reliability of estimated values.
\newblock \emph{Transactions of JWRI}, 4\penalty0 (2):\penalty0 123--131, 1975.

\bibitem[Volokh(2007)]{volokh2007}
K.~Volokh.
\newblock Hyperelasticity with softening for modeling materials failure.
\newblock \emph{Journal of the Mechanics and Physics of Solids}, 55\penalty0
  (10):\penalty0 2237--2264, 2007.

\bibitem[von Mises(1913)]{Mises1913}
R.~von Mises.
\newblock Mechanik der festen {K{\"o}}rper im plastisch-deformablen zustand.
\newblock \emph{Nachrichten von der Gesellschaft der Wissenschaften zu
  {G{\"o}}ttingen, Mathematisch-Physikalische Klasse}, 1913:\penalty0 582--592,
  1913.

\bibitem[von Mises(1928)]{Mises1928}
R.~von Mises.
\newblock Mechanik der plastischen form{\"a}nderung von kristallen.
\newblock \emph{Zeitschrift f{\"u}r Angewandte Mathematik und Mechanik},
  8\penalty0 (3):\penalty0 161--185, 1928.

\bibitem[Ward and Kumar(2025)]{ward2025}
O.~Ward and A.~Kumar.
\newblock Why planar cracks fragment into echelon cracks.
\newblock \emph{arXiv preprint arXiv:2512.16053}, 2025.

\bibitem[Weyl(1939)]{Weyl1939}
H.~Weyl.
\newblock \emph{The Classical Groups: Their Invariants and Representations},
  volume~1 of \emph{Princeton Mathematical Series}.
\newblock Princeton University Press, Princeton, New Jersey, 1939.

\bibitem[Xiao et~al.(1997)Xiao, Bruhns, and Meyers]{Xiao1997}
H.~Xiao, O.~Bruhns, and A.~Meyers.
\newblock Logarithmic strain, logarithmic spin and logarithmic rate.
\newblock \emph{Acta Mechanica}, 124:\penalty0 89--105, 1997.

\bibitem[Yavari(2021)]{Yavari2021Eshelby}
A.~Yavari.
\newblock {On} {E}shelby’s inclusion problem in nonlinear anisotropic
  elasticity.
\newblock \emph{Journal of Micromechanics and Molecular Physics}, 6\penalty0
  (1):\penalty0 2150002, 2021.

\bibitem[Yavari(2025)]{Yavari2025Universal}
A.~Yavari.
\newblock On universal deformations of compressible {C}auchy elastic solids
  reinforced by inextensible fibers.
\newblock \emph{Journal of the Mechanics and Physics of Solids}, page 106340,
  2025.

\bibitem[Yavari and Goriely(2013)]{YavariGoriely2013}
A.~Yavari and A.~Goriely.
\newblock Nonlinear elastic inclusions in isotropic solids.
\newblock \emph{Proceedings of the Royal Society A}, 469\penalty0
  (2160):\penalty0 20130415, 2013.

\bibitem[Yavari and Goriely(2021)]{YavariGoriely2021}
A.~Yavari and A.~Goriely.
\newblock Universal deformations in anisotropic nonlinear elastic solids.
\newblock \emph{Journal of the Mechanics and Physics of Solids}, 156:\penalty0
  104598, 2021.

\bibitem[Yavari and Goriely(2023)]{YavariGoriely2023Universal}
A.~Yavari and A.~Goriely.
\newblock The universal program of nonlinear hyperelasticity.
\newblock \emph{Journal of Elasticity}, 154\penalty0 (1):\penalty0 91--146,
  2023.

\bibitem[Yavari and Goriely(2024)]{YavariGoriely2024}
A.~Yavari and A.~Goriely.
\newblock Controllable deformations in compressible isotropic implicit
  elasticity.
\newblock \emph{Zeitschrift f{\"u}r angewandte Mathematik und Physik},
  75\penalty0 (5):\penalty0 169, 2024.

\bibitem[Yavari and Goriely(2025)]{YavariGoriely2025}
A.~Yavari and A.~Goriely.
\newblock Nonlinear {C}auchy elasticity.
\newblock \emph{Archive for Rational Mechanics and Analysis}, 249\penalty0
  (5):\penalty0 57, 2025.

\bibitem[Yavari and Marsden(2012)]{YavariMarsden2012}
A.~Yavari and J.~E. Marsden.
\newblock Covariantization of nonlinear elasticity.
\newblock \emph{Zeitschrift f{\"u}r Angewandte Mathematik und Physik},
  63\penalty0 (5):\penalty0 921--927, 2012.

\bibitem[Yavari and Sozio(2023)]{YavariSozio2023}
A.~Yavari and F.~Sozio.
\newblock On the direct and reverse multiplicative decompositions of
  deformation gradient in nonlinear anisotropic anelasticity.
\newblock \emph{Journal of the Mechanics and Physics of Solids}, 170:\penalty0
  105101, 2023.

\bibitem[Yavari et~al.(2006)Yavari, Marsden, and Ortiz]{Yavari2006}
A.~Yavari, J.~E. Marsden, and M.~Ortiz.
\newblock On the spatial and material covariant balance laws in elasticity.
\newblock \emph{Journal of Mathematical Physics}, 47:\penalty0 85--112, 2006.

\bibitem[Yu(2002)]{Yu2002}
M.-H. Yu.
\newblock Advances in strength theories for materials under complex stress
  state in the 20th century.
\newblock \emph{Applied Mechanics Reviews}, 55\penalty0 (3):\penalty0 169--218,
  2002.

\bibitem[Yu(2004)]{Yu2004}
M.-H. Yu.
\newblock \emph{Unified Strength Theory and Its Applications}.
\newblock Springer, Berlin, 2004.

\bibitem[Zeng et~al.(2025)Zeng, Guilleminot, and Dolbow]{zeng2025stochastic}
B.~Zeng, J.~Guilleminot, and J.~E. Dolbow.
\newblock Examining crack nucleation under spatially uniform stress states with
  a complete phase-field model for fracture.
\newblock \emph{Theoretical and Applied Fracture Mechanics}, page 105170, 2025.

\bibitem[Zheng(1994)]{zheng1994theory}
Q.~S. Zheng.
\newblock Theory of representations for tensor functions.
\newblock \emph{Applied Mechanics Reviews}, 47\penalty0 (11):\penalty0
  545--587, 1994.

\bibitem[Zheng and Spencer(1993)]{zheng1993}
Q.-S. Zheng and A.~J.~M. Spencer.
\newblock Tensors which characterize anisotropies.
\newblock \emph{International Journal of Engineering Science}, 31\penalty0
  (5):\penalty0 679--693, 1993.

\bibitem[Zhou et~al.(2013)Zhou, Hoh, Wang, Keer, Pang, Song, and
  Wang]{Zhou2013}
K.~Zhou, H.~J. Hoh, X.~Wang, L.~M. Keer, J.~H. Pang, B.~Song, and Q.~J. Wang.
\newblock A review of recent works on inclusions.
\newblock \emph{Mechanics of Materials}, 60:\penalty0 144--158, 2013.

\end{thebibliography}

\end{document}